\newcommand{\rr}{\raggedright}
\newcommand{\tn}{\tabularnewline}
\begin{document}

\isbn{xxxxxxxxxxx}

\DOI{xxxxxx}


\abstract{This article discusses the theory, model, implementation and performance of a combinatorial fuzzy-binary and-or (FBAR) algorithm for lossless data compression (LDC) and decompression (LDD) on 8-bit characters. A combinatorial pairwise flags is utilized as new zero/nonzero, impure/pure bit-pair operators, where their combination forms a 4D hypercube to compress a sequence of bytes. The compressed sequence is stored in a grid file of constant size. Decompression is by using a fixed size translation table (\textbf{TT}) to access the grid file during I/O data conversions. Compared to other LDC algorithms, double-efficient (DE) entropies denoting 50\% compressions with reasonable bitrates were observed. Double-extending the usage of the \textbf{TT} component in code, exhibits a Universal Predictability via its negative growth of entropy for LDCs $>$ 87.5\% compression, quite significant for scaling databases and network communications. This algorithm is novel in encryption, binary, fuzzy and information-theoretic methods such as probability. Therefore, information theorists, computer scientists and engineers may find the algorithm useful for its logic and applications. }

\articletitle{A Universal 4D Model for Double-Efficient Lossless Data Compressions}

\authorname1{Philip Baback Alipour}
\affiliation1{University of Victoria}
\author1address2ndline{Dept. of Electrical and Computer Engineering, University of Victoria, }
\author1city{Victoria, }
\author1zip{B.C. V8W 3P6}
\author1country{Canada}
\author1email{phibal12@uvic.ca}

\journal{sample}
\volume{xx}
\issue{xx}
\copyrightowner{xxxxxxxxx}
\pubyear{xxxx}

\maketitle

\cleardoublepage \pagenumbering{roman}

\begingroup
\hypersetup{linkcolor=black}
\tableofcontents
\endgroup

\clearpage

\setcounter{page}{0}
\pagenumbering{arabic}

\newtheorem{theorem}{Theorem}[chapter]
\newtheorem{definition}{Definition}[chapter]
\newtheorem{axiom}{Axiom}[chapter]
\newtheorem{paradox}{Paradox}[chapter]
\newtheorem{hypothesis}{Hypothesis}[chapter]
\newtheorem{lemma}{Lemma}[chapter]
\newtheorem{proposition}{Proposition}[chapter]
\newtheorem{assumption}{Assumption}[chapter]
\newtheorem{corollary}{Corollary}[chapter]
\newtheorem{remark}{Remark}[chapter]
\newtheorem{solution}{Solution}[chapter]
\newtheorem{example}{Example}[chapter]
\newtheorem{algo}{Algorithm}[chapter]

\chapter{Introduction}

One of the greatest inventions made in Computer Science, as a building-block for its logical premise was Boolean Algebra, by the well-known mathematician, G. Boole (1815-1864). Its foundation on Boolean operators enlightened further, the great mathematician C. E. Shannon (1916-2001). In 1938, this leading scholar, with reference to Boolean operators~\cite{boole}, managed to show how electric circuits with relays were a suitable model for Boolean logic~\cite{shannon40}. Hence, a model for Boolean logic, as a sequence of 0's and 1's, constituted binary~\cite{shannon93}. From there, he measured information by quantifying the involved \emph{uncertainty} to \emph{predict} a random value, also known as \emph{entropy}. He thus inducted this new entropy with codeword to compress data, losslessly. During this venture of computational science in progress, another mathematician came up with fuzzy sets theory, L. A. Zadeh (1921-present), resulting fuzzy logic with its algorithmic constructs and applications~\cite{ zadeh96, zadeh65}.

In this paper, we put all of these scholars' findings into one \emph{logic synthesis}. Coding this combinatorial logic by \emph{biquaternions}~\cite{hamilton1}, \emph{self-contains} any randomness occurring in a 4D field, delivering a \emph{universal predictability}. Contrary to the notion of randomness, which states: ``the more random, i.e. unpredictable and unstructured the variable is, the larger its entropy"~\cite{hyv, papoulis}, by ``self-containing" the random variable, we then stipulate

\begin{hypothesis}\label{hypo1}
The more random a biquaternion field contains i.e. unpredictable and unstructured the variable in a 4D subspace containment, the smaller its entropy.
\end{hypothesis}

In other words, containing complexity, like a scalable cannon containing a cannonball before ejection, allows complexity's dynamic vectors to remain in \emph{containment} to \emph{relatively reach} the end drop coordinates as a unified result, quite akin to the complexity of all of our universe's randomness \emph{contained} in a dot (or a unifying equation, comparatively~\cite{green}). If the ``complexity vectors" are unleashed from any application, obviously, uncertainty or randomness is emerged.

According to Shannon, ``a long string of repeating characters has an entropy rate of 0, since every character is predictable"~\cite{shannon93}, whereas Hypothesis~\ref{hypo1} self-contains any randomness coming from a string of non-redundant characters, attaining an entropy of 0 bits per character (bpc). If achieved, Hypothesis~\ref{hypo1}, for an observer of the variable, delivers a Universal Predictability theorem:

\begin{theorem} \label{theo0::1}
As the field's entropy grows negatively, i.e. becoming smaller and smaller, its curve gives an observer of the information variable a predictable output.
\end{theorem}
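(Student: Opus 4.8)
The plan is to read ``the field's entropy grows negatively'' in the precise sense licensed by Hypothesis~\ref{hypo1}: along the sequence of double-extensions of the translation table \textbf{TT}, the entropy of the encoded biquaternion variable is non-increasing, and once the redundancy of the host string is exhausted the relevant figure of merit is no longer the plain Shannon entropy of a symbol but its \emph{conditional} value given the fixed-size \textbf{TT} (equivalently, a relative-to-\textbf{TT} quantity), which may legitimately pass below zero. So first I would fix notation: let $H_n$ denote the per-character entropy of the variable observed after the $n$-th application of the FBAR containment map, measured conditionally on the fixed \textbf{TT} and the already-decoded grid file, and derive from Hypothesis~\ref{hypo1} that $H_{n+1}\le H_n$, with strict decrease whenever fresh randomness is absorbed into the 4D subspace. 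This establishes the monotone ``curve'' $n\mapsto H_n$ named in the statement.

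Second, I would connect this curve to predictability in an operational way. For an observer who must guess the next output $X$ from the side information in the \textbf{TT} and the decoded grid, the minimal error probability $P_e$ of any predictor $\widehat X$ is controlled by Fano's inequality,
\[
H(X\mid \widehat X)\;\le\; H_b(P_e)+P_e\log\bigl(|\mathcal{X}|-1\bigr),
\]
where $H_b(\cdot)$ is the binary entropy function; hence a conditional entropy tending to $0$ (and, in the relative-entropy regime, dominated by a quantity that tends to $0$) forces $P_e\to 0$, and for small $H_n$ one even gets a linear bound $P_e\le c\,H_n$. Thus the downward curve of $H_n$ is exactly a curve of vanishing prediction error — the ``predictable output'' the theorem asserts.

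Third, I would close the convergence argument. Since the containment functional is continuous on the (compact) space of admissible 4D configurations and $\{H_n\}$ is monotone and bounded below, $H_\infty=\lim_n H_n$ exists; Hypothesis~\ref{hypo1} together with the non-redundancy hypothesis identifies $H_\infty$ with the infimum attainable under full containment, i.e.\ $0$ bpc (or the negative relative value once the \textbf{TT} is double-counted as shared context). Combining with the Fano bound gives $\lim_n P_e=0$, and monotonicity of $H_n$ upgrades this to the claim that past a finite threshold stage the observer predicts the output with probability as close to one as desired.

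The main obstacle I anticipate is the first step: making ``negative entropy'' rigorous without contradicting Shannon non-negativity. The resolution I would pursue is to work throughout with a conditional/relative (differential- or KL-type) entropy of the biquaternion field against the fixed \textbf{TT} reference, a quantity that is genuinely unbounded below, and then to prove carefully that the double-extension step acts on it monotonically. The delicate accounting — showing that extending the \textbf{TT} does not covertly reintroduce the information it removed, so that the bound $H_{n+1}\le H_n$ is not circular — is where the real work lies; everything downstream is a routine application of Fano's inequality and monotone convergence.
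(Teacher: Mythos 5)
Your route is genuinely different from the paper's. The paper does not argue through a conditional-entropy functional and a Fano-type bound at all: its ``proof'' of Theorem~\ref{theo0::1} is constructive and distributed, running from Lemma~\ref{lemma1.4} and Proposition~\ref{prop0} through Algorithm~\ref{algo0} (the claim $P(\vartheta(y))\rightarrow P(xx')=1$), the bivector/hypercube development of Section~\ref{sect4}, and finally the fixed-entropy formulas $H_{\wedge\vee(b)}=\log_b|\beta|\in[0,2]$ bpB in Eq.~(\ref{eq:25}) and its negative extension Eq.~(\ref{eq:26}). Predictability there means that the compression ratio ($2\!:\!1,\dots,2^n\!:\!1$) and hence the output size are fixed in advance, independent of content, because the $65536$-row translation table is a static bijection. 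Your abstract argument, by contrast, tries to prove the theorem from first principles of information theory, which is a legitimate and in principle cleaner strategy --- but it exposes a gap the paper's presentation keeps implicit.

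The gap is your step~1, and it is not merely ``delicate accounting'': it cannot be closed as stated. The side information you condition on --- the \textbf{TT} and grid file --- is a \emph{fixed, data-independent} object, so for a genuinely non-redundant source $X$ you have $I(X;\mathbf{TT})=0$ and therefore $H(X\mid\mathbf{TT})=H(X)$; no monotone decrease $H_{n+1}\le H_n$ occurs, and no amount of re-using or double-extending a static table drives the conditional entropy toward $0$, let alone below it. This is the same obstruction (Kraft/pigeonhole: $2^{16}$ distinct inputs cannot be losslessly addressed by $2^{8}$ outputs without the decoder receiving additional data-dependent bits) that sits inside the paper's own construction; routing through relative or differential entropy renames the quantity but does not create the missing mutual information. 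Separately, a smaller technical slip: Fano's inequality, $H(X\mid\widehat X)\le H_b(P_e)+P_e\log(|\mathcal X|-1)$, bounds entropy \emph{above} by an increasing function of $P_e$ and therefore yields only a (vacuous) \emph{lower} bound on $P_e$ when $H$ is small; the implication you need is the complementary MAP bound $P_e\le 1-2^{-H(X\mid\widehat X)}\le (\ln 2)\,H(X\mid\widehat X)$, which you do gesture at with your linear bound but should make the primary tool. With that substitution steps~2 and~3 are sound; the theorem's burden, however, lives entirely in the unestablished step~1.
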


To prove this ``containment of information variable," from Hypothesis~\ref{hypo1}, resulting Theorem~\ref{theo0::1}, we have no need to minimize multi-level logic. In fact, we need to combine logic states correctly using standard and custom operators to obtain \emph{losslessness}. The ``variable containment," is later indicated as $y \in xx'$, which further involves \emph{fuzzy binary and-or operators} to confine the output $y$ content representing the input $xx'$ content. This is introduced as FBAR logic, entailing its fixed compression entropy for its information products throughout the following sections.

\section{Overview}

This paper aims to introduce FBAR logic, apply it to information in a model, causing data compression. The compression model is constructed after introducing the theory of FBAR. From there, its usage and implementation in code are discussed. Furthermore, a clarification between model representation and logic is established for both, the FBAR algorithm and its \emph{double-efficient} (DE) \emph{input}/\emph{output} (I/O) evaluation. The evaluation on the algorithm's efficiency is conditioned by conducting two steps:

\begin{enumerate}[(1)]
\item data compaction and compression processes, using a new bit-flag encoding technique for a lossless data compression (LDC),

\item validating data at the other end with the bit-flag decoding technique for a successful lossless data decompression (LDD).
\end{enumerate}

We introduce FBAR logic from its theoretical premise relative to model construction. We further implement the model for a successful LDC and LDD. The general use of the algorithm is aimed for current machines, and its advanced usage denoting maximum DE-LDCs for future generation computers.

This article is organized as follows: Section~\ref{sect2} gives background information on FBAR model, and its universality compared to other algorithms. It concludes with Subsection~\ref{sect3} introducing FBAR synthesis with expected outcomes. Section~\ref{sect4} focuses on FBAR LDC/LDD theory, model and structure. It introduces FBAR test on data by model components, functions, operators, proofs and theorems. Section~\ref{sect5} presents implementation. Section~\ref{sect6} presents the main contribution made in this work. Subsection~\ref{sect7} describes the experiment on DE performance including results. Subsection~\ref{sect8} onward, end the paper with costs, future work and conclusions.

\chapter{The Origin of FBAR Logic}\label{sect2}

In this section, we review a wide range of existing mathematical theories that are relevant to the foundation of FBAR logic, its model structure arising in lossless data compressions. We also introduce the universal model with a universal equation applicable to LDC algorithms, both in theory and in practice, to perform double-efficient compression as well as communication. Throughout the monograph, the coding theory subsection formulating the four-dimensional model, employs bivector operators to manipulate data symmetrically in the memory's finite field. That is done with real and imaginary parts of bit-state revolutions as high-level 1, or low-level 0 signals, where data is circularly partitioned and stored in the field. We express such operations in form of integrals denoting bivector codes. The memory field equations are integrable when data compaction, compression and four-dimensional field partitioning are both complex and real during communication. \label{sect2.1}

\section{Motivation and Related Work} \label{sect2.1.1}

We at first questioned the actual randomness behavior coming from regular LDC algorithms in their compression products. No matter how highly ranked and capable in compressing data observed on dictionary-based LDCs e.g., LZW, LZ77, WinRK, FreeArc~\cite{bergmans, ziv}, they still remain probabilistic for different input types~\cite{sayood}. These algorithms are mainly based on repeated symbols within data content~\cite{shannon93, mackay}. For example, a compressed output with a \emph{string}\,=\,\texttt{[16a]bc} is interpreted by the algorithm as \texttt{aaaaaaaaaaaaaabc} when decompressed (assuming this was the original data). The \emph{length} of the input string is 16\,B, and for the compressed version is 7\,B, thus we say a 56.25\% compression has occurred. We assess its entropy as Shannon-type inequality, since it minimally involves two mutual random variables~\cite{dembo, makarychev} for the recurring symbols in context.

For such random behavior performed by LDC algorithms sold on the market, the question was whether it would be possible to somehow confine randomness whilst LDC operations occur. This statement motivated the concept of combining the well-known logics to address randomness, both in theory and in practice.

In modern machines, each ASCII character entry from a set of $\geq 2^{7\, {\rm bit}}$code groups, occupies 8 bits or more of space, in which, each bit is either, a low-state or high-state logic. These logic states in combination, build up a character information or their corresponding symbol~\cite{maini, murdocca}. To perform the least probability of logic operations, there must be a definite relatedness between binary logic and its in-between states of low and high for each corresponding symbol. In FBAR logic, this could be recognized at its lowest layers of binary logic between AND and OR operations. Once these operators with negation are applied to original data, manipulating a byte length of pure bits e.g., `\texttt{11111111}' to obtain original data, 8 bits of 0's and 1's is therefore transmitted. This is possible if bivector operators manipulate data in a 4D subspace $\mathbbm{R}^4$~\cite{lanczos}, with a minimally 4 fuzzy bits, thereby, 2 pairwise bits producing compressed data. This \emph{encoding-decoding method} further gives a compression on 2-byte inputs as a reversible 1-byte output, denoting a DE-transmission. This transmission, suggests the relatedness implementation or proof of all logics in FBAR model and relationships.

\section{Relatedness of Logic Types} \label{sect2.1.2}

The relatedness for each character entry on a binary construct is presented by the logical consequence~\cite{zalta} from different models: fuzzy logic~\cite{zadeh96, zadeh65, zalta}, binary, and transitive closure~\cite{jacas, shukla}. By making this uniformity, FBAR logic is emerged. This logic is possible when packets of Boolean values per character are updated and abstracted into relative states of fuzzy and pairwise logic. When we conceive {F, B, AND/OR}, each, as a separate field in calculus, we also conclude that each has its own founder, i.e., chronologically: Boole (1848)~\cite{boole}, Shannon (1948)~\cite{shannon48} and Zadeh (1965)~\cite{zadeh65}. Therefore, for establishing a combinatorial logic model, we question that:

\begin{itemize}
\item Why not uniting the binary part with the highly-probable states of pairwise logic via fuzzy logic?
\item Is there a way to assimilate the discrete version F, B, AND/OR, into one unified version of all, FBAR?
\item Would this unification lead to more probability or else, in terms of predictability?
\item If predictable, what is the importance of it, compared to random states of codeword results?
\end{itemize}

To address each question, it is essential to establish FBAR logic in a combinatorial sense. In essence, the information models known in Information Theory, must be brought into a standard logical foundation as FBAR, representing their logic states combination, computation, information products and application, respectively.

\section{The Foundation of FBAR Model and Logic} \label{sect2.2}

\subsection{Logarithmic and Algorithmic Premise} \label{sect2.0.2}

Here onward, we use Table~\ref{tab1} notations and definitions. For subspace fields, to store, compress and decompress data, we adapt and refer our main findings to Hamilton (1853) \cite{hamilton1}, Conway (1911) \cite{conway}, Lanczos (1949) \cite{lanczos}, Bowen (1982) \cite{bowen}, Girard (1984) \cite{girard},  Lidl and Niederreiter (1997) \cite{lidl}, and Coxeter \emph{et al.} (2006) \cite{coxeter}. Moreover, the algorithmic premise for our algorithms is formulated on the logarithmic preference of information metric as log base 2, which measures any binary content for a character communicated in a message. Foremost, the premise to achieve self-containment on any information input, is to mathematically elaborate on this compression theorem:

\begin{theorem} \label{theo0::2}
Any probability $P$ on information variable $y$ is $1$, if $y$ as a single-character output is contained within the binary intersection limits of its input $xx'$.
\end{theorem}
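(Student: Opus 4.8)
The plan is to read Theorem~\ref{theo0::2} as a confinement statement in the bounded lattice underlying FBAR logic, and then transfer that confinement into a statement about a probability measure. First I would fix notation following Table~\ref{tab1}: write $x$ for a (fuzzy) bit, $x'$ for its negation, and read ``$xx'$'' as the pair of \emph{intersection limits} of $x$ with $x'$, namely the infimum $x\wedge x'$ and the supremum $x\vee x'$, which together delimit the closed subinterval $[\,x\wedge x',\ x\vee x'\,]\subseteq[0,1]$ in which any De~Morgan combination of $x$ and $x'$ (built from the standard/custom AND, OR and negation operators, i.e. a t-norm/t-conorm pair) must lie. In the purely binary specialization this interval collapses onto the Boolean carrier $\{0,1\}$, since there $x\wedge x'=0$ and $x\vee x'=1$, so ``within the binary intersection limits'' is precisely ``inside the admissible bit range,'' which is what makes the later lossless reconstruction possible.

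Second, I would show that the FBAR encoding--decoding map sends a two-byte input to a single-character output $y$ that never escapes those limits. By monotonicity and boundedness of the AND/OR operators together with the absorption identities, the inequality $x\wedge x'\le y\le x\vee x'$ holds \emph{identically}, not merely typically, because $y\in xx'$ is imposed by the construction of the combinatorial pairwise flags rather than being an empirical observation. This is the ``variable containment'' foreshadowed as $y\in xx'$ in the Introduction; the essential point is that the containment is structural, so the event ``$y$ lies in $[\,x\wedge x',\ x\vee x'\,]$'' is not a proper sub-event but coincides with the entire sample space $\Omega_y$ of admissible outputs of the map.

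Third, I would invoke the Kolmogorov normalization axiom: if $E$ is an event with $E=\Omega_y$, then $P(E)=P(\Omega_y)=1$. Since the containment event is $\Omega_y$ itself, we get $P(y\in xx')=1$, which is exactly the assertion that $P=1$ on the information variable $y$. As an immediate corollary I would note that $P=1$ forces the Shannon term $-P\log_2 P=0$, i.e. an entropy of $0$ bpc for the contained variable, which is precisely the hypothesis feeding Theorem~\ref{theo0::1} and is consistent with Hypothesis~\ref{hypo1}; I would state this corollary explicitly to tie the compression-theoretic reading back to the universal-predictability claim.

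The hard part will be making ``binary intersection limits'' precise enough that the lossless-reconstruction property of the encoding is genuinely what guarantees $y\in xx'$: one must rule out the degenerate readings in which the delimiting interval is empty or in which $P=1$ is vacuous, and one must argue that, in the \emph{finite} memory field and its fixed-size grid file, a probability-one confinement here really does coincide with deterministic predictability (in a continuous model $P=1$ need not mean ``always,'' and finiteness of the grid is exactly what closes that gap). The remaining ingredients --- monotonicity and boundedness of AND/OR, the De~Morgan laws, and the absorption identities that pin $y$ between $x\wedge x'$ and $x\vee x'$ --- I expect to be routine.
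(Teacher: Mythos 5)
Your reading of $xx'$ is where the argument diverges from the paper and, ultimately, where it fails. In this paper $x'$ is not the negation of $x$: Assumption~\ref{assum1} defines $x'$ as the \emph{righthand character} paired with $x$ (both drawn independently from the 256 ASCII codes; see also Example~\ref{4.31}, where $xx'=@\$$), so $xx'$ is a 16-bit, two-character input, and the ``binary intersection limits'' of Theorem~\ref{theo0::2} are not the lattice bounds $x\wedge x'$ and $x\vee x'$ but the intersection of the two characters' four-bit flag vectors, $\varphi(xx')=\{\log_2 65536\cap\log_2 65536\}=8$ bits, which singles out exactly one of the $256\times 256=65536$ addresses of Lemma~\ref{lemma1.4}. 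The paper's own justification (the coding objective, Lemma~\ref{lemma1.4}, Proposition~\ref{prop0} and Algorithm~\ref{algo0}, fleshed out in the proofs of Section~\ref{sect4.1}) is a counting-and-invertibility argument: the map $xx'\mapsto(y,\text{address})$ is injective into the $65536$-row field, the translation $\vartheta$ inverts the flag function $\varphi$, and therefore $P(\vartheta(y))\rightarrow P(xx')=1$ expresses deterministic, lossless recovery. None of that structure --- the $r\times 4$ array, the flag combinations, the translation table --- plays any role in your argument, so the actual content of the theorem (why containment within the intersection limits forces certainty of reconstruction) is never engaged.

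Even granting your own reading, the proof has the hole you flag but do not close: with $x'=\neg x$ the interval $[\,x\wedge x',\,x\vee x'\,]$ collapses to $[0,1]$ in the Boolean specialization, so the ``containment event'' equals the whole carrier by fiat and $P(E)=P(\Omega_y)=1$ is just the normalization axiom applied to a tautology. That establishes nothing about $y$ carrying the information of $xx'$: a constant output $y$ satisfies the same containment and the same $P=1$, yet is plainly not losslessly decompressible. The step that is genuinely needed --- and that the paper supplies via the bijection between the $65536$ input combinations and the $65536$ prefix addresses --- is that the containment is \emph{information-preserving}, i.e., that $y$ together with its stored address determines $xx'$ uniquely, so that the ``probability of $xx'$ given $y$'' is a point mass. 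Your closing corollary relating $P=1$ to $-P\log_2 P=0$ and Hypothesis~\ref{hypo1} is a reasonable afterword, but it inherits the vacuity of the main step unless the injectivity of the encoding is established first.
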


\begin{table}
\begin{center}
\begin{minipage}{\textwidth}
\caption{Notations and terminology for LDC operations.\label{tab1}}
{\footnotesize \begin{tabular}{@{}clcc@{}} \toprule
Notation & Short definition & Example \rr \tn \hline  \\
$\mathcal{C}_r$ & Data compression ratio & 2:1 compression  \\
$\mathcal{C}$ & Compressed data; compression & $\mathcal{C}_{-1}> \mathcal{C}_n \ , \ n \in \mathbbm{N}$\\
$\mathcal{C}'$ & Decompressed data; decompression & $\mathcal{C}  \stackrel{{\rm out}\times {\rm ref}}{\longrightarrow} \mathcal{C}'$  \\
$H$  & Entropy rate in e.g., Shannon systems & $H_\mathbbm{A} > H_{\wedge\vee(b)}$\\
$x_i$  & A bit, byte or character by scale, where $i \in \mathbbm{N}$ & $\{x_1x_2x_3\dots \}$\\
$y$  & Product of a function, or output & $f(x) = y$\\
$\mathcal{s}$  &  \multirow{2}{5.9cm}{A sequence of an entailed complement $xx'$ (see $\therefore$), or just concatenated values of $x_i$  } & $\mathcal{s}_{\, \rm in}= \mathcal{s}(x) = x_1\!+\!x_2\!+\! x_3$ \\ &  & $ +\ldots=\{x_1x_2x_3\ldots \}\!=\mathcal{s}_{\, \rm out}$\\
$\ell$  & \multirow{1}{5.5cm}{Length function on field, string, time, etc. }  & $\ell \ (xx') = 16$ bits \\
$\infty$  &    \multirow{1}{5.5cm}{Infinity; continuous flow of I/O data, in \emph{measure theory}~\cite{bartle} measured by chars in the flow. $\emptyset$ denotes a null set} &  ${\rm if} \ \mathcal{U}_{{\rm I/O}}= \{\infty\}  \ {\rm then} $ \\ $\emptyset$ & & $ \left(\{\infty\} -  \mathcal{s}(x)\right)^{c} $ \\ & & $= \emptyset \cup \mathcal{s}(x)  = \mathcal{s}(x)$\\
$\mathbbm{R}^{2^n}$ & A $2^n\!\rm D$-product space with a topology of & \multirow{2}{3.2cm}{$\forall xx' \in \mathbbm{R}^{2^n} \, ; \, n=2\, ,$ \\ $\ xx' \mapsto \{y\}_{i,j,k,l\in\mathbbm{R}^4} \ ,$ \\ $\therefore \hat{{\bm v}}_y = \frac{1}{\sqrt{2^n}}xx'\, = 1\mathrm{B}$ } \\  & mapping bit-pairs of input characters into  \\ & subspace partitions, where $n=2$\,characters. \\ $\hat{{\bm v}}$ & A spatial unit vector = 1\,bit, 1\,byte, etc. & $= \left(\frac{1}{\sqrt{2}}x_i \, \frac{1}{\sqrt{2}}x_j \, \frac{1}{\sqrt{2}}x'_k \, \frac{1}{\sqrt{2}}x'_l\right)$ \\
$\mathbf{e}_{ij}$  & A unit bivector for bit-pair mappings &  $\forall \mathbf{e}_{ij} \in \mathbbm{C}\ell_4\mathbbm{R}^{4} \, ; \, \mathbf{e}^{2}_{12} = -1 $ \\
$A$  & An array for the residing bits in memory & \multirow{2}{3.4cm}{if  $\beta \vdash x_i = 0 \ $ then , $ \ A_{1\times n}  =[000\ldots0] $} \\
$\vdash$ & A sequent; derived from; yields \ldots &  \\
$\beta$  & \multirow{1}{5cm}{Binary value or sequence, where  $\forall \beta \in f(x) = x \rightarrow y = b$} & \multirow{2}{3.2cm}{if $\beta \vdash x_i = 0$ and $x'_i = 1$ $\therefore \beta = 01010101$ }\\
& & & \\
$\wedge \, , \, \bigcap$  & Logical AND; for sets as Intersection & $1\wedge0 = 0 \ , 1\wedge 1 = 1$\\
$\vee \, , \, \bigcup$  & Logical OR; for sets as Union & $1\vee0 = 1 \ ,0\vee 0 = 0$ \\
$\leftrightarrow$  & \multirow{1}{5cm}{Bi-conditional between states or logic; if and only if; iff} & \multirow{2}{3.2cm}{$x \leftrightarrow y \equiv $ $(x \rightarrow y)\wedge (y \rightarrow x)$}\\
& & & \\
$\equiv$  & Equivalence; identical to \ldots & $2 \ {\rm chars} \equiv 16 \ {\rm bits} $\\
$\therefore $  & Logical deduction; therefore \ldots & \multirow{2}{3.2cm}{ if $\{x_1x'_1\}=\{\$2\%1\}$ , $  \therefore x_1=\$2 \ , \ x'_1=\%1$} \\
& & & \\
\fbox{\scriptsize{\emph{component}}} & \multirow{1}{5.5cm}{Algorithm component as an I/O object, \textbf{P} as a program with filter, \textbf{G} as a grid file, \textbf{TT }as a translation table} & $\mathcal{s} \stackrel{\rm in}{\rightarrow} \fbox{\bf{P}} \stackrel{\rm out}{\rightarrow} \mathcal{C}$ \\
& & & \\
& & & \\
$\otimes \, $  & \multirow{1}{5.5cm}{Strong conjunction on array values; matrix vector or finite field product} & \multirow{2}{3.2cm}{$\{8 \, {\rm bits}\}\otimes \left[
                                                                                                            \begin{array}{ccc}
                                                                                                              1 & 0 & 0 \\
                                                                                                              1 & 2 & 0 \\
                                                                                                              1 & 2 & 3 \\
                                                                                                            \end{array}
                                                                                                          \right]
= \left[
                                                                                                            \begin{array}{ccc}
                                                                                                              8 \, {\rm bits} & 0 & 0 \\
                                                                                                              8 \, {\rm bits} & 8 \, {\rm bits} & 0 \\
                                                                                                              8 \, {\rm bits} & 8 \, {\rm bits} & 8 \, {\rm bits} \\
                                                                                                            \end{array}
                                                                                                          \right]$ = \{6 bytes\}}\\
& & & \\
& & & \\
& & & \\
& & & \\
& & & \\
& & & \\
& & & \\
& & & \\\hline
\end{tabular}}
\tablefootnote{a}{These notations are used in defining LDC operations between algorithmic components, model and logic. Those notations that are not listed here, are defined throughout the text, or in the ending section `Notations and Acronyms', before `References'.}
\tablefootnote{b}{Some notations imply bivectors~\cite{lounesto}, entropy and complexity (Sections~\ref{sect4.3}, \ref{sect5}-\ref{sect7}).}
\end{minipage}
\end{center}
\end{table}

Theorem \ref{theo0::2} lays out the foundation of self-containing $xx'$ as $y$ in preserving all probability $p(xx')\rightarrow P(y)\rightarrow1$ counts, against any ``surprisal" as a highly improbable outcome $p(xx')\rightarrow0$ or uncertainty $u\rightarrow\infty$~\cite{tribus}. The current goal is to ``self-contain" $xx'$ within the limits of \emph{self-information} $I$ measure on $y$.

Now consider the definition revisited by Bush (2010)~\cite{bush} on ``self-information" as:``a measure of the information content associated with the outcome of a random variable." Further, ``the measure of self-information is positive and additive." In contrast, as we prove in Section~\ref{sect4}, \emph{the measure of self-containment is positive and conjunctive, but not additive}. So, any binary content as a given input is partitioned in its \emph{dual space output}~\cite{lounesto} when contained by 4D \emph{bivector operators}, or

\begin{definition} \label{def2.2} Self-information containment is associative in binary states of a given input, preserving its equally combined additive and conjunctive function using 4D bivector code operators, returning a constant size output stored in an array $A$.
\end{definition}

This associativity between logic states in $A$, returns an information constant as data$_{\rm in}$ in entanglement or a  bivector DE-coding. The coding objective is to put all logic states of an information input into two places at once as a unique address in $A$. The array stores an event as an output character $y$ denoting two original events as input characters $xx'$. In essence, suppose event $\Gamma=y$ content is composed of two mutually independent events $\Theta $ as $x$ content, and $\Lambda$ as $x'$ content. The amount of information when $\Gamma$ is ``communicated" equals the \emph{combination} of the amounts of information at the communication of event $\Theta $ and event $\Lambda $, simultaneously. \\

\noindent {\bf Coding objective.} The strong conjunction `$\otimes$' of $xx'$ contents must satisfy their binary ``combination": $x$ as $\geq$ 1\,B intersected with $x'$ as $\geq$ 1\,B, gives a $y=\frac{1}{\sqrt{2^n}}xx'=\frac{1}{\sqrt{2}}x\frac{1}{\sqrt{2}}x'=\frac{1}{2}xx'$ content size in a locally-compact space $\mathbb{R}^{N}$ on $A$. Taking into account that $y$ would be the compressed content of $xx'$, located in a specific subfield address. The subfield as $\mathbf{F}_{xx'}$ is a dual space partitioning $x$ and $x'$ bit-pair values into 4 bivector dimensions, hence the notion $\frac{1}{\sqrt{2^n}}xx'$, partitioning $N=4$ into $2^n$. The field's range is of single-byte addresses or rows $r$ available to store $y$, such that
 \begin{equation*}I(\Theta\cap\Lambda)\cap \max \left\{\mathbf{F}_{xx'}\right\}= \left\{I(\Theta)+ I(\Lambda)\right\} \otimes A_{r\times N} = I(\Gamma) \otimes A_y=I\left(A_\Gamma \right) \end{equation*} This relation portrays Definition~\ref{def2.2}, and its rows or address limit is established by

\begin{lemma}\label{lemma1.4}
A single-byte input $x$ has $2^8=256$, $\{0, 1\}$ bit combinations, $r$ rows 0 to 255. Thus, for a 2-byte input $xx'$, we possess $k=2^{(8+8)}=65536$ combinations as the maximum range of its finite field $\mathbf{F}_{xx'}\in\mathbb{R}^4$ addresses, building an array $A_{k\times4}$.
\end{lemma}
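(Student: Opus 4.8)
The plan is to treat Lemma~\ref{lemma1.4} as an elementary enumeration statement expressed in the model's notation, so that the proof reduces to the multiplication principle for bit strings together with a bijection argument that pins down the word ``maximum.'' First I would fix the convention, already implicit in Table~\ref{tab1}, that a byte is an ordered $8$-tuple of bits $x=(x_1,\dots,x_8)$ with each $x_i\in\{0,1\}$; choosing each coordinate independently gives $|\{0,1\}^8|=2^8=256$. Next I would exhibit the radix-$2$ map $(x_1,\dots,x_8)\mapsto\sum_{i=1}^{8}x_i\,2^{8-i}$ and observe that it is a bijection onto $\{0,1,\dots,255\}$; this simultaneously justifies the labelling ``$r$ rows $0$ to $255$'' and shows that distinct byte values never collide on a row, which is what turns $256$ into a sharp count rather than a mere upper bound.

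Then I would concatenate the two bytes: the tuple attached to $xx'$ has length $8+8=16$, so the same independent-coordinate count gives $|\{0,1\}^{16}|=2^{16}=2^8\cdot 2^8=65536=k$, with the analogous radix-$2$ bijection onto $\{0,\dots,65535\}$ supplying the address labelling. Since $y=\frac{1}{\sqrt{2^n}}xx'$ with $n=2$ is stored in the $4$-dimensional subfield $\mathbf{F}_{xx'}\in\mathbb{R}^4$ spanned by the bivector slots $\mathbf{e}_{ij}$ of Table~\ref{tab1}, every admissible address carries exactly the $N=4$ bivector coordinates, so the storage object is precisely $A_{k\times 4}$ with $k=65536$. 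Finally I would note that the range is genuinely attained: the concatenation map $\{0,1\}^8\times\{0,1\}^8\to\{0,1\}^{16}$ is itself a bijection, so every one of the $65536$ rows is realised by some input pair $xx'$, and the field cannot be shrunk without dropping a representable input.

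The only point requiring care — hardly an obstacle — is the bookkeeping that identifies the combinatorial index set of size $2^{16}$ with the geometric object $\mathbf{F}_{xx'}$: one must state explicitly that the ``$4$'' in $A_{k\times4}$ counts bivector dimensions (the $\mathbf{e}_{ij}$ slots), not bits, and that the normalisation $\frac{1}{\sqrt{2^n}}xx'$ merely redistributes the $16$ bits of $xx'$ across those $4$ slots without altering the number of distinct inputs. Once that is recorded, the lemma follows by counting alone; I would keep the write-up to a few lines and defer the role of $A_{k\times4}$ as the backbone of the grid-file component \textbf{G} to the subsequent construction.
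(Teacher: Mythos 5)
Your proposal is correct and matches what the paper actually does: the lemma is justified there only by the same elementary multiplication-principle count ($2^8=256$ per byte, $256\times256=2^{16}=65536$ for the pair, with the four columns being the $i,j,k,l$ flag dimensions), which appears explicitly in the proofs of Proposition~\ref{prop4.5} and the interactive proof of Section~\ref{sect4.1.1}. Your added bijection bookkeeping (radix-$2$ labelling of rows $0$--$255$ and surjectivity onto all $65536$ addresses) only makes the ``maximum range'' claim sharper than the paper states it.
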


The lemma holds even if all pairs of bytes input, $xx'$, in the order of $2, 4, 8, \ldots, 2^n$ as the number of characters are compressed into 1 byte output, $y$. Once a \emph{translation} on the \emph{intersection of combinations} are decoded, a lossless decompression is gainable from the given \emph{Coxeter order}~\cite{coxeter} In:Out as $xx'$:$y$= 2:1, 4:1, 8:1, $\ldots, 2^n$:1. Henceforth, this ordered sequence of ratios is called ``double-efficiency" for any transmitted data as compression relative to its successful decompression. Reference to Lemma~\ref{lemma1.4} in aim of proving Theorem~\ref{theo0::1}, we then propose

\begin{proposition} \label{prop0}
If a binary intersection of $x$ and $x'$ by a 4D bit-flag function $\varphi$ produces $y$, translating the intersected bivector combinations by a $\vartheta$ function, conversely produces $xx'$. These flags have a physical space occupation of $\left(
        \begin{array}{cccc}
          h^2\mathbf{e}_{12}^2 & h^2\mathbf{e}_{34}^2 \\
        \end{array}
      \right)$.
\end{proposition}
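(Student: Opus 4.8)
The plan is to exhibit the two maps $\varphi$ and $\vartheta$ concretely and then verify that $\vartheta$ inverts $\varphi$, so that the pair realises the $2{:}1$ Coxeter order of Lemma~\ref{lemma1.4} with no loss. First I would take the input $xx'$ as two bytes and form the binary intersection $x\cap x'$; the AND-bits are then supplemented by the two combinatorial flags announced in the overview --- the \emph{zero/nonzero} flag and the \emph{impure/pure} flag --- which record, bit-pair by bit-pair, the data not pinned down by the intersection alone. Reading the first flag off the $x$-bit-pairs and the second off the $x'$-bit-pairs assigns to each of the four dimensions one of the bivector labels $\mathbf{e}_{12},\mathbf{e}_{34}$ (with their conjugates), so that $\varphi$ sends $xx'$ to the quadruple of subspace coordinates $(i,j,k,l)\in\mathbb{R}^4$ together with the single compressed byte $y=\tfrac{1}{\sqrt{2^n}}xx'$, whose constant one-byte size is the content of Definition~\ref{def2.2} and the coding objective. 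The range claim then follows from Lemma~\ref{lemma1.4}: the quadruple addresses a unique row of the array $A_{k\times4}$ with $k=2^{16}$, i.e. of the grid file \textbf{G}.

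Second, I would define $\vartheta$ as the inverse look-up through the fixed-size translation table \textbf{TT}: given a row of $A$ together with the stored byte $y$, $\vartheta$ recovers the flag pattern that produced that row, reconstitutes the four bit-pairs of $x$ from the zero/nonzero coordinates and those of $x'$ from the impure/pure coordinates, and concatenates them to return $xx'$. The assertion ``$\vartheta$ conversely produces $xx'$'' then reduces to $\vartheta\circ\varphi=\mathrm{id}$ on the $2^{16}$-element domain, which I would establish by showing that the combined flag assignment of step one is injective --- distinct byte-pairs give distinct (row, $y$) pairs --- and then invoking Theorem~\ref{theo0::2}: since $y$ lies within the binary intersection limits of $xx'$, its probability of correct recovery is $1$. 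This bijectivity on $\{0,1\}^{16}$ is the genuine content of the statement; everything else is bookkeeping on the $\otimes$-product in Definition~\ref{def2.2}.

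Third, for the physical space occupation of the flags I would model each bit-pair plane of the $4$D hypercube as an oriented unit cell whose side is the bit-height $h$, so that the two mutually orthogonal planes carry the signed areas $h^2\mathbf{e}_{12}^2$ and $h^2\mathbf{e}_{34}^2$; collecting them into the row $\bigl(h^2\mathbf{e}_{12}^2\ \ h^2\mathbf{e}_{34}^2\bigr)$ and substituting $\mathbf{e}_{12}^2=\mathbf{e}_{34}^2=-1$ from Table~\ref{tab1} reduces it to $(-h^2,\,-h^2)$, the two negatively-oriented cells whose product $+h^4$ is the hypercube volume housing $y$. I expect the main obstacle to be the injectivity step: one has to confirm that no two byte-pairs collide under the joint zero/nonzero and impure/pure labelling --- equivalently, that the flag hypercube has exactly $2^{16}$ distinguishable cells --- because a single collision would destroy losslessness and with it the $2^n{:}1$ double-efficiency claim.
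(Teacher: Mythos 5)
Your overall architecture matches the paper's: $\varphi$ is realised by the \textbf{znip} bit-pair labelling that selects one of the $65{,}536$ rows of $A_{k\times4}$ (Lemma~\ref{lemma1.4}), $\vartheta$ is the table look-up that replays the flags on the pure byte $\beta=\texttt{11111111}$ to recover $xx'$, and the space-occupation claim is read off the bivector identities $h^2=\mathbf{e}_{ij}^2=-1$ exactly as in Proposition~\ref{prop4.5} and the interactive proof of Section~\ref{sect4.1.1}. The difficulty you flag at the end, however, is not a loose end to be checked later --- it is a genuine failure of the encoding \emph{as you have set it up}. You read the zero/negate flag only off the bit-pairs of $x$ and the impure/pure flag only off the bit-pairs of $x'$. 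That yields one $4$-letter word over $\{\mathbf{z},\mathbf{n}\}$ and one over $\{\mathbf{i},\mathbf{p}\}$, i.e.\ $16\times16=2^{8}$ distinguishable flag cells, so your $\varphi$ collapses $2^{16}$ inputs onto $2^{8}$ addresses and cannot be inverted by any $\vartheta$. The paper's construction (Definition~\ref{def4.19}, Lemma~\ref{lemma4.32}, and the count $4\times4\times4\times4=256$ per byte in the proof of Proposition~\ref{prop4.5}) applies \emph{both} labellings to \emph{every} bit-pair of \emph{both} bytes: each bit-pair needs a $\{\mathbf{z},\mathbf{n}\}$ choice \emph{and} a $\{\mathbf{i},\mathbf{p}\}$ choice to be pinned down among its four possible values, so each byte contributes two $16$-valued nibbles and the four dimensions are $(\mathbf{ip}_x,\mathbf{zn}_x,\mathbf{ip}_{x'},\mathbf{zn}_{x'})$, giving $16^4=65{,}536$ cells in bijection with the byte pairs. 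With that correction your injectivity obstacle evaporates by counting, and the rest of your argument (the look-up $\vartheta$, the appeal to Theorem~\ref{theo0::2}, the signed-area reading of $\bigl(h^2\mathbf{e}_{12}^2\ \ h^2\mathbf{e}_{34}^2\bigr)$) proceeds as in the paper.

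One further caution: you correctly require $\vartheta$ to receive the row address together with $y$. Since that address is itself the full $16$-bit \textbf{znip} re-encoding of $xx'$, the inversion $\vartheta\circ\varphi=\mathrm{id}$ is then immediate --- but it also means the $8$-bit occupant $y$ carries none of the recoverable content on its own. The proposition only asserts invertibility, so this does not sink your proof, but you should not present the injectivity of $xx'\mapsto(\mathrm{row},y)$ as the ``genuine content'': once the row is a function of all $16$ input bits, that injectivity is automatic, and the real work is entirely in getting the $16^4$ count right, as above.
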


We investigate this new type of logic directed to one LDC-DD algorithm:

\begin{algo} \label{algo0}
\begin{align*}&\forall P(y)\subset P(xx')\, ; \, \mathrm{if}~\varphi(xx') = \left\{ \log_2 65536  \cap  \log_2  65536 \right\} = 8\,{\rm bits} = y \\ &\in 65536 \left(
        \begin{array}{cccc}
          h^2\mathbf{e}_{12}^2 & h^2\mathbf{e}_{34}^2
        \end{array}
              \right) = \left(
              \begin{array}{cccc}
          2\,{\rm bits} & 2\,{\rm bits} & 2\,{\rm bits} & 2\,{\rm bits} \\
        \end{array}
        \right)_{64 \, \rm K}
,  \\ &\mathrm{then} \, P\left(\vartheta(y)\right) \rightarrow P(xx')  =\vartheta \left(\log_2 65536 \cap \log_2 65536\right) \rightarrow \left\{\log_2 65536 \, \cup \right. \\ &\log_2 65536\} = 16\times16 \times 16 \times 16 bit-flags \rightarrow 16 \text {bits original data.} \\ &\text{Thus, probability} P\left(\vartheta(y)\right) \rightarrow P(xx')= 1.\end{align*}
\end{algo}

Proposition~\ref{prop0} gives a predictable outcome for all probability scenarios on the compressed $y$ with a $P$ representing $xx'$ contents for a lossless decompression. Predictability is achieved only if the \emph{complete algorithm}, Algorithm~\ref{algo0}, runs according to its $\cap$ and $\cup$ operations. It should \emph{configure bit-flag} $\varphi$ and \emph{execute code translation} $\vartheta$ function with relevant operators on $xx'$ and $y$. The general use of $\cap$ and $\cup$ operators are expressed by the universal FBAR equation in Section~\ref{sect2.1.3}.

What we mean about ``universal predictability" is that irrespective of the number of inputs, the output is predictable before logic state combinations. As a result, \emph{invariant entropies of higher order} (\emph{negentropy}~\cite{hyv}) with more complex coding, becomes predictable. We synthesize all of the presented logics into this combinatorial logic via logical operators as categorized in Section~\ref{sect3}.

\section{A Universal FBAR Coding Model and Equation} \label{sect2.1.3}

We first commence with an assumption

\begin{assumption}
Let an information input $x$ to our machine lie in the interval $[0, 1]$. Assume function $f$ operates on $x$ between its logic states as binary, otherwise fuzzy, producing $x$ with a new value. Let the machine produce this value via standard logical operators as: and $\wedge$, or $\vee$, union $\cup$, intersection $\cap$, and negation $\neg$.
\end{assumption}

\noindent and then we define its universal relation, once proven in terms of

\begin{definition}
Relation $\mathfrak{R}$ is a universal relation, if presented with union $\cup$, and intersection $\cap$, between fuzzy set $\tilde{\mathcal{A}}$ and binary set $\mathcal{A}$ simultaneously. If $\mathfrak{R}$\emph{=}$\cup$, $\mathcal{A} \mathfrak{R} \tilde{\mathcal{A}}$ succeeds in many pairwise states $\mathcal{A}|\tilde{\mathcal{A}}$. Conversely, if $\mathfrak{R}$\emph{=}$\cap$, $\{\mathcal{A}|\tilde{\mathcal{A}}\}\mathfrak{R}\tilde{\mathcal{A}}$ succeeds in binary states 0\,or\,1. Thus, a fuzzy-binary function $\Phi$ for all $\mathfrak{R}$'s is given by \begin{equation}\label{eq1}\begin{split} \Phi_{\wedge\!\vee} (x) &=  \mathcal{A}   \mathfrak{R}  \tilde{\mathcal{A}}  \mathfrak{R}  \left\{\mathcal{A}|\tilde{\mathcal{A}}\right\}\mathfrak{R} \mathbbm{C}\ell_4\left(\mathbbm{R}^{2^n}\right) \\ &\equiv \{0, 1\} \leftrightarrow \{  [0 , 1] \} \leftrightarrow \{  00 , 01, 10, 11 \}\ , \ n = |\mathcal{A}\cup \tilde{\mathcal{A}}| \leq 4 \end{split}
\end{equation}
where both finite sets $\mathcal{A}$ and $\tilde{\mathcal{A}}$ membership values are contained by dual carriers as bivectors in a partitioned real field $\mathbbm{R}^{2^n}$, projecting bit-pair values from a real or complex plane in $\mathbbm{C}(\mathbbm{R})$ with a dimensional length $\ell_i$. Based on the \emph{inclusion-exclusion principle}~\cite{balakrishnan}, $n$ is equal to the number of elements in the union of the $\mathcal{A}$ and $\tilde{\mathcal{A}}$ as the sum of the elements in each set respectively, minus the number of elements that are in both, or $|\mathcal{A} \cup \tilde{\mathcal{A}}|=|\mathcal{A}|+|\tilde{\mathcal{A}}|-| \mathcal{A}\cap\tilde{\mathcal{A}} |$.
\end{definition}

Furthermore, from the well-known scholars, we plug the latter definition into their scalar bivector definitions, hence deducing

\begin{definition} \label{def2.7}
In Eq.~(\ref{eq1}), a scalar element $h = \sqrt{-1}$ by Conway (1911) \cite{conway}, its dual and quadruple forms, $h^2= -1$ and $h^4=1$, respectively satisfy combinatorial operators during 2$^n$D and 4D bit-pair spatial partitioning and projections. The 2$^n$D type projections are of Coxeter group in the order 2, 4, 8, \ldots by Coexeter \emph{et al.} (2006) \cite{coxeter}, and configure matrix vertices to store, compress and decompress data in a hypercube.
\end{definition}

Now, we begin the proof of the universal model by a theorem,

\begin{theorem}
Relation $\mathfrak{R}$ is universal, iff $\mathfrak{R}=\{\cup , \cap\}$ on all logic states stored in a dual space, yield from a 2$^n$D $\leftrightarrow$ 4D bivector field, where $n$ is the possible number of pairable states. This produces a combinatorial fuzzy-binary and-or equation.
\end{theorem}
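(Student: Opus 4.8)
The plan is to prove the biconditional by unpacking the definition of a universal relation stated just above, and anchoring both directions in the Clifford-algebraic scaffolding of Definition~\ref{def2.7}. First I would fix the ambient structure: by Definition~\ref{def2.7} the scalar $h=\sqrt{-1}$ generates the period-four cycle $h,\,h^2=-1,\,h^3,\,h^4=1$, which is exactly the cyclic order indexing the Coxeter projections $2,4,8,\dots$ and, truncated by the inclusion--exclusion bound $n=|\mathcal{A}\cup\tilde{\mathcal{A}}|\le 4$, pins the carrier to $\mathbbm{C}\ell_4(\mathbbm{R}^{2^n})$ with the bit-pair dimensions realised as the unit bivectors $\mathbf{e}_{ij}$ (with $\mathbf{e}_{12}^2=\mathbf{e}_{34}^2=-1$) of Table~\ref{tab1} and Proposition~\ref{prop0}. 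Every logic state is then a point of this dual space, and I would record the closure facts that do the real work: $\cap$ drives a pair $\{\mathcal{A}|\tilde{\mathcal{A}}\}$ back onto the binary fibre $\{0,1\}$, while $\cup$ lifts the crisp set $\mathcal{A}$ against the fuzzy set $\tilde{\mathcal{A}}$ into the four pairwise states $\{00,01,10,11\}$.

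For the direction ($\Leftarrow$), assume $\mathfrak{R}\in\{\cup,\cap\}$ acting on the dual-space logic states of the $2^n\mathrm{D}\leftrightarrow 4\mathrm{D}$ bivector field. I would verify the two clauses of the definition directly: with $\mathfrak{R}=\cup$, the union of $\mathcal{A}$ with $\tilde{\mathcal{A}}$ has membership values in $[0,1]$ carried over the bivector pair $(\mathbf{e}_{12},\mathbf{e}_{34})$, so it ``succeeds in many pairwise states $\mathcal{A}|\tilde{\mathcal{A}}$''; with $\mathfrak{R}=\cap$, intersecting $\{\mathcal{A}|\tilde{\mathcal{A}}\}$ with $\tilde{\mathcal{A}}$ evaluates each coordinate to an idempotent $0$ or $1$, so it ``succeeds in binary states.'' Chaining the relations in the order dictated by Eq.~(\ref{eq1}) then reconstructs $\Phi_{\wedge\vee}(x)=\mathcal{A}\,\mathfrak{R}\,\tilde{\mathcal{A}}\,\mathfrak{R}\,\{\mathcal{A}|\tilde{\mathcal{A}}\}\,\mathfrak{R}\,\mathbbm{C}\ell_4(\mathbbm{R}^{2^n})$ together with the equivalence chain $\{0,1\}\leftrightarrow\{[0,1]\}\leftrightarrow\{00,01,10,11\}$, which is precisely the asserted combinatorial fuzzy-binary and-or equation; hence $\mathfrak{R}$ is universal.

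For ($\Rightarrow$), suppose $\mathfrak{R}$ is universal in the sense of the definition. Then it is by hypothesis ``presented with union $\cup$ and intersection $\cap$ simultaneously,'' which already forces $\mathfrak{R}=\{\cup,\cap\}$; what remains is to show no proper subset or alternative primitive suffices and that the operations must live on the stated field. I would argue by exhaustion over the operators listed in the preceding Assumption: negation $\neg$ alone is fixed-point-free on $\{00,01,10,11\}$ and never produces the crisp collapse; bare $\wedge,\vee$ without their set-theoretic extensions cannot cross between the crisp fibre $\{0,1\}$ and the fuzzy interval $[0,1]$; only the pair $\cup,\cap$ realises both the ``many pairwise states'' clause and the ``binary states'' clause, and both are simultaneously satisfiable only when the membership values are carried by the dual bivector pair of a $2^n\mathrm{D}$ field with $n\le 4$, i.e.\ inside $\mathbbm{C}\ell_4(\mathbbm{R}^{2^n})$ --- any $n>4$ violating $|\mathcal{A}\cup\tilde{\mathcal{A}}|=|\mathcal{A}|+|\tilde{\mathcal{A}}|-|\mathcal{A}\cap\tilde{\mathcal{A}}|\le4$. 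Combining the two directions yields the biconditional, and the object exhibited along the way is exactly Eq.~(\ref{eq1}).

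The step I expect to be the main obstacle is the necessity half of ($\Rightarrow$): turning the informal phrase ``presented with $\cup$ and $\cap$ simultaneously'' into a genuine minimality claim --- that these two operators are not merely sufficient but the unique closed pair compatible with both the crisp and fuzzy layers and with the $2^n\mathrm{D}\leftrightarrow 4\mathrm{D}$ bivector correspondence. That requires a careful enumeration of how each primitive in the Assumption acts across the layers $\{0,1\}$, $[0,1]$ and $\{00,01,10,11\}$, plus a proof that dual-space closure genuinely fails for every competing choice; everything else reduces to bookkeeping with Definition~\ref{def2.7} and the inclusion--exclusion bound.
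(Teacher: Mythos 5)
Your proposal diverges substantially from the paper's proof, and the divergence exposes a genuine gap. The paper does not argue the biconditional as two logical directions at all: its entire proof is a single constructive derivation. It takes the fuzzy membership function $\mu(x)$ on $[0,1]$, collapses the near-0 and near-1 degrees to crisp values via the limits $\lim_{x\to\min(x)}f(\hat{{\bm\mu}}_{\mathcal{A}\cap\tilde{\mathcal{A}}}(x))=0$ and $\lim_{x\to\max(x)}f(\hat{{\bm\mu}}_{\mathcal{A}\cup\tilde{\mathcal{A}}}(x))=1$ under the bivectors $\mathbf{e}_{ij}$, computes the hypotenuse transformation $\vartheta(\hat{{\bm\mu}})=\hat{{\bm\mu}}_x\sqrt{(h_{12}\mathbf{e}_{12}+h_{34}\mathbf{e}_{34})^2}=2$ bit states per vector in Eq.~(\ref{eq:2a}), evaluates the cylinder-method volume integral to get $V_{Q_y}\in(3,8]$ bits in Eq.~(\ref{eq:2b}), and only then defines $\Phi_\vee$ and $\Phi_\wedge$ in Eqs.~(\ref{eq:3a})--(\ref{eq:3b}) and conjoins them to obtain $\Phi_{\wedge\vee}$ and the $2\!:\!1$ ratio. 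That explicit construction is what the paper means by the equation being ``produced.'' Your $(\Leftarrow)$ direction replaces all of this with the assertion that ``chaining the relations in the order dictated by Eq.~(\ref{eq1}) then reconstructs $\Phi_{\wedge\vee}(x)$'' --- but Eq.~(\ref{eq1}) is the statement to be derived, so as written this step assumes what it must produce. None of the quantitative content (the 2-bit-state computation, the $(3,8]$-bit hypercube volume, the compression ratio) appears in your plan.

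The second gap is in your $(\Rightarrow)$ direction, which you yourself flag as the main obstacle and then do not carry out: the exhaustion over $\neg$, bare $\wedge$, $\vee$, and the claim that ``dual-space closure genuinely fails for every competing choice'' is announced but never argued. A proposal that defers its hardest step to future enumeration is not yet a proof. It is worth noting that the paper sidesteps this entirely --- it never establishes necessity of $\{\cup,\cap\}$ either, only sufficiency via the construction --- so if you want a genuine biconditional you are taking on a burden the source does not discharge, and you would need to actually supply the minimality argument rather than gesture at it.
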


The following equations prove the relatedness of all probable logics from one side of Eq.~(\ref{eq1}) to another:

\begin{proof} The fuzzy unit in its membership function $\mu(x)$ with a numerical range covering the interval [0, 1] operating on all possible values, gives minimally $n\geq3$ possible states~\cite{passino, zadeh65}. Binary, however, in its set is discrete for a possible 0 or 1. Let for $\mu(x)$, fuzzy membership degrees close to 0 converge to 0, and those close to 1 converge to 1 as an independent state with a periodic projection (an integral), stored onto a closed surface of 2D planes for a bivector decision $\hat{{\bm\mu}}$. This decision is derived from such input values projected into a dual space forming a hypercube. That is the space $\bigwedge^2\mathbbm{R}^4$ dual to itself in $\mathbbm{C}\ell_4(\mathbbm{R})$~by Lounesto (2001) \cite{lounesto}, where every plane of data (in form of bitpairs) is orthogonal to all vectors in its dual space. The projection for a given bit is done by $2^n$-bivectors, partitioning the input as bitpairs into the space. Now having a \emph{dual space output}, by using the Pythagoras' theorem, the output covers a projection of $x$ from either set $\mathcal{\tilde A}=\{\approx  0, \approx 1\}$ or $\mathcal{A}=\{1, 0\}$, as a \emph{hypotenuse transformation} $\vartheta$, in terms of
\begin{subequations}
\begin{equation}\begin{split} \vartheta\left( \hat{{\bm\mu}}_{\mathcal{A}, \tilde{\mathcal{A}}}  \right) &= \hat{{\bm\mu}}_{x} \sqrt{\left( h_{12}\mathbf{e}_{12} + h_{34}\mathbf{e}_{34} \right)^2} \equiv \left\{\, \hat{{\bm\mu}}_{x} \| \mathbf{e}\| = \hat{{\bm\mu}}_{x} \sqrt{\mathbf{e} \cdot \mathbf{e}}\,\right\} \\
&=\hat{{\bm\mu}}_{x}\sqrt{h^2_{12}\mathbf{e}_{12} \mathbf{e}_{12} + h^4_{1234}\mathbf{e}_{12} \mathbf{e}_{34} + h^4_{1234}\mathbf{e}_{34} \mathbf{e}_{12} + h^2\mathbf{e}_{34} \mathbf{e}_{34}} \\ &=\hat{{\bm\mu}}_{x}\sqrt{-\mathbf{e}^2_{12} +2\mathbf{e}^2_{1234}-\mathbf{e}^2_{34}} = \hat{{\bm\mu}}_{x}\sqrt{2 +2\mathbf{e}^2_{1234}}\\
&=1\sqrt{4} = 2 \ {\rm bit \ states \ per \ \hat{\mu} \ vector}\vdash \left\{\mu_{\mathcal{\tilde A}}, {\mu_{\mathcal{A}}}\right\} , \\
\therefore \forall x \in  &\int_{S_A} \! \! \hat{{\bm\mu}}_{x} \, {\rm d} \, \alpha \left\{
              \begin{array}{cccc}
                 \lim_{x \to \min(x)}f\left(\hat{{\bm\mu}}_{\mathcal{A}\cap\tilde{\mathcal{A}}}(x)\right) = \left(0\leftarrow \{\approx 0\} \right)_{\mathbf{e}_{ij}} = 0 \ , \\
                \lim_{x \to \max(x)}f\left(\hat{{\bm\mu}}_{\mathcal{A}\cup\tilde{\mathcal{A}}}(x)\right) = \left(1\leftarrow \{\approx 1\} \right)_{\mathbf{e}_{ij}} =1 \ \ \\
              \end{array}
            \right. \end{split}  \label{eq:2a}\end{equation}

\noindent where $\mathbf{e}^2_{ij}= -1$, and its product ${\rm d} \alpha$, is the area element of array surface $S_A$, occupied by a bit or $x$, thus its full frequency occupation is 2-bit states and converges to a $\pm 2k\pi$ radian of the projections made onto hypercubic planes (lattices). We obtain this by coding a \emph{surface-volume integration}, modeled in Fig.~\ref{fig1}. It shows that a \emph{compression hypercube} $Q_y$ containing encoded data is formed like a \emph{tesseract}~\cite{bowen}, when at least $\theta=4\pi$ radians occur. It minimally contains bit-pair values, and maximally 2\,B, or a closed pair of characters as an $xx'$ message, stored into two places at once. Hence, by a \emph{cylinder method}~\cite{anton}, we then elicit a combinatorial integral
\begin{equation*}\begin{split}
V_{Q_y}&=\hat{{\bm\mu}}\!\int_{S\rightarrow V_A} \! \! \! \! \! \! \! \vartheta\left( x, \mathbf{e}_{ij}\right) \, {\rm d} \, \alpha = 2\pi \!\int_{\frac{x}{\ell_4}}^{x'} \rho_{x}\phi_{x}\, {\rm d} \, x = 2\pi\! \int_{\mathbf{e}_{12}}^{\mathbf{e}_{34}} \!x \left|\vartheta\left( \hat{{\bm\mu}}_{x}\right)\right| \, {\rm d} \, x \\ &= \Delta {\bf e}\left( \circ, {\bf s} \right)_A = {\hat{{\bm\mu}}_{x}\mathrm{d}S_{\circ}} \!\rightarrow\hat{{\bm\mu}}_{x}\mathrm{d} V_{\circ}\!=  {\mathrm{d} \rho \hat{{\bm\rho}} + \rho \mathrm{d} \theta \hat{{\bm\theta}} + \mathrm{d} \phi \hat{{\bm\phi}} }  \rightarrow \rho \mathrm{d} \rho \mathrm{d} \theta \mathrm{d} \phi \\  &= \tfrac{1}{2}\pi \hat{{\bm\mu}}_{x} \| \mathbf{e}\| \! \rightarrow 2\pi \hat{{\bm\mu}}_{x} \| \mathbf{e}\|^2
\\ &= S_{\bullet} \mathop{\stackrel{\cap \,}{\longrightarrow}}_{\pm \pi} V_{\bullet}\!=\! \left( \!\tfrac{\sqrt{4}\pi}{2}\cap \right. \!\!\tfrac{4\pi\sqrt{4}}{\pm\pi} \!\cap \left.\! \stackrel{2\pi\sqrt{4}^{2}}{} \!\right) \! \approx\!  (3, 25] \cap \!  \left|\tfrac{25}{\pm\pi}\right|\! \end{split}\label{eqs:2}\end{equation*}\vspace{-6pt} \begin{equation}=\!(3,8] \, {\rm bits} \in\! A_{\mathbbm{R}^4} \ \ \ \ \ \ \ \ \ \ \ \ \ \ \ \ \ \ \ \ \ \ \ \ \ \ \ \ \ \ \ \ \ \ \ \ \ \ \ \ \ \ \ \ \ \ \ \
\label{eq:2b}\end{equation}\end{subequations}

\noindent where $\rho_x$ is the area radius equal to the binary length of input $x$, in this case, quantified as a planar binary sum inclusion $\Phi_{\wedge}\sum\beta(x)=[2, 4)$ bits, and $\phi_x$ is the input projection equal to the magnitude of bivectors ${\bf e}_{ij}$ from Eq.~(\ref{eq:2a}), in this case $\phi_x = \| \mathbf{e}\| $. Line element $\mathbf{s}$ by code integrates the $\rho$ and $\phi$ quantities to form a cube $Q$ with volume $V$ by the bivectors when traversed, or, ${\bf s}\Delta {\bf e} = {\bf s}\left| {{\bf e}_{34}-{\bf e}_{12} } \right|$. Orthogonal vectors $\hat{{\bm\rho}}, \hat{{\bm\theta}}$, $\hat{{\bm\phi}}$, via $\hat{{\bm\mu}}$, denote $[2, 4]$ dimensions to store and sort data into 1 or more empty vertices $\circ$, of array $A$. The left integral result denotes an occupiable surface $S_\circ >$ 3 bits of $x$ via $\rho_x$, projected onto $\circ_x$ (stored as $\bullet_x$ via $\phi_x$) producing $S_\bullet$ before forming $V_{Q_{x}}$via ${\bf e}_{ij}$. The overlapping results via $\cap$, denote a compressed volume of filled vertices, or $V_{\bullet}=\! \left|\tfrac{25}{\pm\pi}\right|\!\approx 8 $ bits as: two cubes having 16 vertices built by the bivectors containing two input characters in a simultaneous $\pm\pi$-communication inside a big cube as 0's and 1's entanglement. This cube self-contains the subcubes in its subfield, a decodable $xx'$ data packed into a $y$ as its 8 outer-vertices (bits), in total 24 co-intersecting vertices involved. The 4D model is illustrated in Fig.~\ref{fig1}.

\begin{figure}
\begin{center}
\includegraphics[scale=2.35]{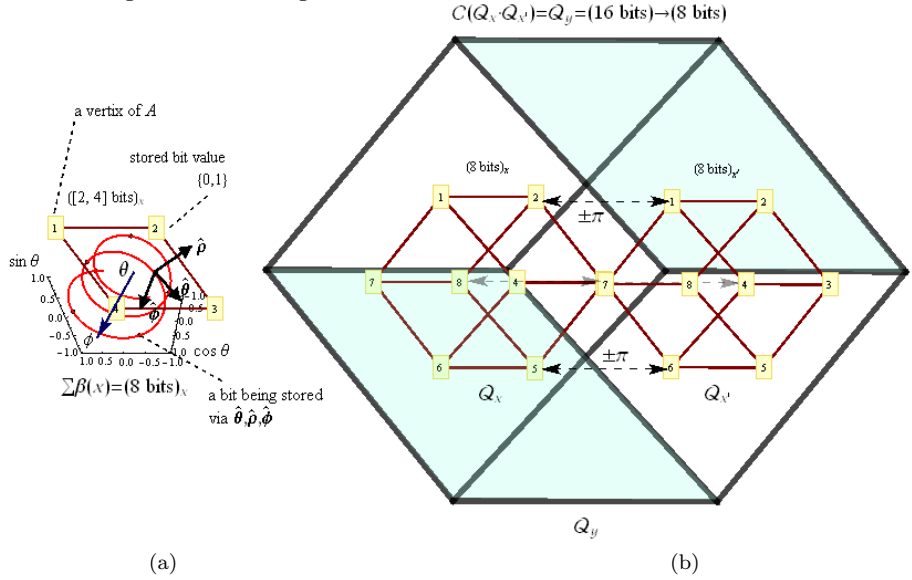}
\footnotesize (a) \hspace{2.5in} (b)
\end{center}\vspace{-12pt}
\caption{Illustrates a geometric model of bit-pair projections over $\theta$ forming a side-by-side tesseract. (a) shows a projection from the $\theta$ plane with a growing revolution that forms a volume containing 8 bits; (b) shows two sub-cubes denoting a bit-set of at least 16 bits under compression within an 8-bit cube. The equation above the hypercube represents this compression. The two-input $\pm\pi$-communication denoting entangled bit-set states, is viable for superdense coding operators to store data into a single \emph{qubit}~\protect\cite{bennett}, as our future quantum compression model. \label{fig1}}
\end{figure}

Further associating both fuzzy results with binary, each as an independent state, in total builds four simultaneous bit-pairs (each subcube of 8 bits), thus giving \begin{subequations}\begin{align}
\Phi_{\vee} (x) &= \left(\forall x = 0 \in \mathcal{A} \cup \tilde{\mathcal{A}}\left|V_{Q_x}\right. \right) \vee \left(\forall x = 1 \in \mathcal{A} \cup \tilde{\mathcal{A}}\left|V_{Q_x}\right. \right) \notag \\ & \left. = \{0, 1\} \rightarrow \{  [0 , 1] \} \rightarrow \{ \ 00 , 01, 10, 11 \} \right.  \,
\label{eq:3a}\end{align}

\noindent This is pairwise logic for  many possibly contained (compressed) values of fuzzy as well as binary, and in its default set covers 8 states. By using a fuzzy unit on each bit-pair, we abstract the pairwise version to binary, which is an inverse process, or
\begin{align}
\Phi_{\wedge} (x)&=\left(\forall x=0|\{0,1\} \wedge \forall x=1|\{0,1\} \in \left\{\mathcal{A}|\tilde{\mathcal{A}}\right\} \cap \tilde{\mathcal{A}}\left|V_{Q_x}\!\rightarrow S_{Q_x}\right.\right) \notag \\ & \left. = \{0, 1\} \leftarrow \{  [0 , 1] \} \leftarrow \{ \ 00 , 01, 10, 11 \}\, \right.
\label{eq:3b} \end{align}

\noindent The \emph{fuzzy-binary and} function $\Phi_{\wedge}$ uses logical operators \emph{and-or}, \emph{negate} and \emph{closures} e.g., \emph{transitive closure}~\cite{jacas} to generate crisp logic. Combining Eqs.~(\ref{eq:3a}) and (\ref{eq:3b}), further outputs a \emph{fuzzy-binary and-or} $\Phi_{\wedge\!\vee}$ or Eq.~(\ref{eq1}), such that \begin{equation}
\left\{\left(\Phi_{\vee} \rightarrow \Phi_{\wedge} \right) (x)\right\} \wedge \left\{\left(\Phi_{\vee} \leftarrow \Phi_{\wedge} \right) (x)\right\} = \Phi_{\wedge  \! \vee} (x) \left|S_{Q_x}\leftrightarrow V_{Q_x}\leftrightarrow V_{Q_y}\right. \
\end{equation}

\noindent where $\Phi_{\wedge  \! \vee} (x)$ usage in the hypercube model, appears valid in its compression ratio \begin{align}
\mathcal{C}_r (x) &=\frac{\mathcal{C}(Q_x \cdot Q_{x'})}{Q_x + Q_{x'}}=\frac{Q_y}{\mathcal{C}'(Q_y)}= \Phi_{\wedge  \! \vee} \sum \beta (x)\vdash|xx'|\leftrightarrow|y| \notag \\
&=(16 \, {\rm bits})\leftrightarrow(8 \, {\rm bits})= (2 \, {\rm B})\leftrightarrow(1 \, {\rm B}) \notag \\
&= 2\!:\!\!1 \ \text{compression.}
\end{align}
\label{eqs:3}\end{subequations}
\end{proof}

\begin{remark}
Fuzzy convergence between maximum and minimum of $x \in [0,1]$ implies to many-valued logic~\cite{gottwald}, now in abstraction by $\wedge ,\vee$ operators.
\end{remark}

\noindent and

\begin{remark}
In recognition of Eqs.~(\ref{eqs:2})-(\ref{eqs:3}), the
$\mathfrak{R}$ relationships in Eq.~(\ref{eq1}), denote an encoded-decoded, compressed-decompressed data projected into a hypercube, or conversely from its dual space of at least two subcubes on either end of the equation.
\end{remark}
\section{Compression Products Aimed by the FBAR Algorithm} \label{sect2.1.4}

To deliver a DE-transmission, our first approach was to study textual samples with binary constructs for a lossless compression, similar to the approach made by Shannon (1948)~\cite{shannon48} on English alphabet. Of course, with a main difference. We used standard characters in ASCII, with their 256 bit-byte combinations ($2^8$ bits, Lemma~\ref{lemma1.4}) on both binary and text. The resultant logic could be employed in the order of integer multiplications. For example, the RGB colors satisfy a huge number of possible combinations i.e., a 3-table consisting $256_{\rm R}\times256_{\rm G}\times256_{\rm B} \approx 16.7$ million combinations. This integer in turn supports other data types or non-English spoken languages (Unicode tables). In this paper, however, the primary scope for the first version of FBAR is $256\times256 = 65536$ combinations. Its future versions, hypothetically grow toward much greater numbers beyond a 3-table, i.e., a 4-table for a $\mathcal{C}_r$ = 8:1 or 87.5\% compression as $65536^{4 \: \text{tables}}$ = 16 exabytes (EB). The latter is convenient for managing \emph{very large databases} $\geq$ 1\,TB. Such hypotheses are discussed in our future work section, Section~\ref{sect8}, and translation table in Section~\ref{sect5}.

\section{FBAR Synthesis}\label{sect3}
Let an algorithm synthesize any logic state known to quantify information. To quantify, we need operators that operate on logic states. Those operators would be:

\begin{enumerate}[(1)]
\item Boolean logic: Boolean operators as \emph{and}, \emph{or}, and \emph{not} or \emph{negate}. Boole (1848)~\cite{boole}
\item Fuzzy logic: Fuzzy operators as \emph{fuzzy-and}, \emph{fuzzy-or}, and \emph{not} or \emph{negate}. Zadeh (1965)~\cite{zadeh65}
\item Fuzzy-binary and-or (FBAR) combinatorial logic: synthesize all two above as \emph{fuzzy-binary-and}, \emph{fuzzy-binary-or}, and \emph{not} or \emph{negate}. \label{synthesis3} This gives a lossless
\begin{enumerate}[(i.)]
\item dynamic FBAR encoding and compression,
\item static FBAR encoding and compression,
\item FBAR decoding and decompression. Alipour and Ali (2010) \cite{alipour10}
\end{enumerate}
\end{enumerate}

In this paper, we focus on 3.ii and 3.iii methods used in the algorithm to achieve ``universal predictability." This could only happen if both methods are conducted in terms of FBAR \emph{pairwise logic}: synthesize all three logics via \emph{and-or} and \emph{negate} operators on pairs of 0, near 0, 1, and near 1 states to \emph{minimally transmit} 8 bits, or

\begin{enumerate}[(1)]
 \item * a possible combination of 8 bits or $\{00, 01, 10, 11\}$ denoting 4 possible bit-flag combinations (1-bit operators) on the four bit pairs (1-character input)
\begin{enumerate}[(i.)]
 \item FBAR bit-flag operators \textbf{z} for zero, \textbf{n} for negate, \textbf{i} as impurity, and \textbf{p} as purity operators on each pair of bits (definitions are given in Section~\ref{sect4.3})
\item Employing \textbf{znip} operators in code, their minimum dimensional intersection as $\{\textbf{z}, \textbf{n} , \textbf{i} , \textbf{p}\}\times \{\textbf{z}, \textbf{n} , \textbf{i} , \textbf{p} \}$ allows such ``transmissions" occur.
\end{enumerate}

This in total gives 4 \emph{fuzzy-binary} or {\it fb} bits (definitions are given in Section~\ref{sect4.3})
 \item * Thus, a minimum of 8\,bits is transmitted via 4\,bits. Since we need to represent data through standard 8-bit characters, the 4\emph{fb} bits is concatenated with another 4\emph{fb} bits, thus a pair of characters or 16 bits input are transmitted via 8\emph{fb} bits output. This is an FBAR compression product.
\end{enumerate}

\subsection{Expected Outcomes} \label{sect3.1}
From (2)*, DE entropies are emerged denoting 50\% and higher fixed compressions, regardless of the number of inputs given to the algorithm. This is theorized before our FBAR technique in Section~\ref{sect4}, such as \textbf{znip} bivector operators are introduced and benefited from their \emph{product elements} and \emph{conjunctive normal form} (CNF) \emph{conversions}~\cite{cockle, jackson}, thereby tested and employed within the algorithm in Section~\ref{sect5}. Also, by referring to \emph{Hamming distance} and \emph{binary coding}~\cite{hamming, sloane}, a preamble on \emph{prefix coding} to encode and decode data by operators is formulated. Higher fixed compressions are hypothesized for a negative entropy growth, once the minimum degree of a DE-compression or 50\% is proven in theory and in practice. In Section~\ref{sect5}, we demonstrate these entropies by proving the minimum degree of our DE-technique relative to its 4D-model of I/O data. We then prove decompression by satisfying the decoding method employed for the compressed data. Decompression is done by referring to byte addresses in the compressed file denoting the original input in the last two DE possibilities, (1)* and (2)*. We evaluate all of the hypotheses in the FBAR technique to demonstrate the validity of our concept. The efficiencies of compression in Theorem~\ref{theo0::1}, are further evaluated through complexity measures on the algorithm's code with bitrate performance. This quantity is measured for LDC \emph{temporal} and \emph{spatial} operations during I/O data access and process between compression and decompression states.

\chapter{FBAR Compression Theory}\label{sect4}
In this section, we formulate the FBAR synthesis in form of theorems and proofs aligned with its foundation from Section~\ref{sect2.2}. Then, a 4D compression-decompression model is constructed by using FBAR operators and conversion functions on I/O data, as an improvement to the universal model from Section~\ref{sect2.1.3}. The model should exhibit DE predictable values. It also encloses the DE values in form of bits, from a compressed form, to its decodable or decompressed form, in a lossless manner.

\section{Reversible FBAR Compression Theorem and Proof} \label{sect4.1}

A reversible FBAR compression theorem begins with an assumption:

\begin{assumption} \label{assum1}
Suppose for every $x$ character input we have a righthand character $x'$, outputting a sequence $\mathcal{s}= (xx')$. We assume our machine compiler compiles data on 8-bit words. We also assume $x$ and $x'$ are from the ASCII table with a range of 0-255 characters. Let also any sequence of words be quantified by a length function $\ell$. Thus, the length of $\mathcal{s}$ in bits is $\ell(\mathcal{s})=$16 bits or 2 bytes ASCII.
\end{assumption}

Using Assumption~\ref{assum1} for a logical consequence, we hence submit a bit manipulation theorem on the given sequence $\mathcal{s}$, in terms of

\begin{theorem}\label{theo1}
Let the machine store a 2-byte binary input $xx'$, as information. Once we manipulate a pure byte sequence $\beta =$\texttt{11111111} to obtain $xx'$ by four single bit-flags, one $y$ character is produced denoting the $xx'$ content, equal to 8 bits. \end{theorem}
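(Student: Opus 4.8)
The plan is to exhibit the four single bit-flags explicitly as a map $\varphi$ acting on the four bit-pairs of $xx'$, show that the result is a well-defined 8-bit word $y$, and then invoke Lemma~\ref{lemma1.4} together with Proposition~\ref{prop0} to conclude that $y$ lands inside the finite field $\mathbf{F}_{xx'}\subset\mathbb{R}^4$ and uniquely encodes the $xx'$ content. First I would set up notation from Assumption~\ref{assum1}: write $xx'$ as its $\ell(\mathcal{s})=16$-bit string, then group it as four consecutive bit-pairs $b_1b_2b_3b_4$ (two pairs from $x$, two from $x'$), matching the $\{00,01,10,11\}$ alphabet of Eq.~(\ref{eq1}) and the bivector slots $(\mathbf{e}_{12},\mathbf{e}_{34})$ of Proposition~\ref{prop0}. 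The ``pure byte'' $\beta=\texttt{11111111}$ is the all-ones reference state; the theorem's claim is that starting from $\beta$ and applying one flag per bit-pair reconstructs $xx'$, so the flags themselves carry the information and occupy only $4\times 2 = 8$ bits.

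Next I would verify, flag by flag, that each of the operators drawn from $\{\mathbf{z},\mathbf{n},\mathbf{i},\mathbf{p}\}$ is a bijection on the $2$-bit set $\{00,01,10,11\}$ relative to the reference pair $\texttt{11}$: $\mathbf{z}$ sends it to $00$, $\mathbf{n}$ complements, $\mathbf{i}$/$\mathbf{p}$ handle the mixed states $01,10$. Since $|\{\mathbf{z},\mathbf{n},\mathbf{i},\mathbf{p}\}|=4$ and there are exactly $4$ target pair-values, each flag names a distinct target pair, so a choice of one flag for each of the four pairs is exactly a choice of $b_1b_2b_3b_4\in\{00,01,10,11\}^4$, i.e.\ $16$ bits of information specified by $8$ flag-bits. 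This is the crisp-logic side of the synthesis: the $\Phi_\wedge$ direction of Eq.~(\ref{eq:3b}) collapses each fuzzy/pairwise pair back to a $2$-bit crisp value. I would then count: the number of reachable outputs is $4^4 = 2^8 = 256 = 2^8$ per byte, and over the two bytes $65536$, which is precisely the field size $k$ of Lemma~\ref{lemma1.4}; hence $y$, read as the concatenation of the four flags, is an $8$-bit character, and by the $2{:}1$ Coxeter order the correspondence $|xx'|\leftrightarrow|y|$ is $16\leftrightarrow 8$ bits as required.

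Finally I would close the reversibility loop by pointing to Proposition~\ref{prop0}: the flag-configuration function $\varphi$ has the stated physical occupation $\bigl(h^2\mathbf{e}_{12}^2\ \ h^2\mathbf{e}_{34}^2\bigr)$, and the translation function $\vartheta$ inverts it, so applying $\vartheta$ to $y$ recovers the $16$ original bits; combined with the bijectivity established above this shows no information is lost, so $y$ genuinely ``denotes the $xx'$ content, equal to $8$ bits.'' The main obstacle I anticipate is the second step: pinning down the exact action of $\mathbf{z},\mathbf{n},\mathbf{i},\mathbf{p}$ on bit-pairs precisely enough to prove bijectivity, since the paper defers their formal definitions to Section~\ref{sect4.3}. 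If those four operators do not in fact induce four \emph{distinct} permutations/maps on $\{00,01,10,11\}$ relative to the pure reference, the $8$-bits-encode-$16$-bits count fails; so the crux is to state the operator semantics cleanly (or cite their forthcoming definitions) and check that the $4\times4$ intersection table $\{\mathbf{z},\mathbf{n},\mathbf{i},\mathbf{p}\}\times\{\mathbf{z},\mathbf{n},\mathbf{i},\mathbf{p}\}$ is exactly a complete set of representatives for the pair-values, which is what makes $\varphi$ injective and $\vartheta$ its well-defined inverse.
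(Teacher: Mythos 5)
Your overall strategy --- count the flag combinations per bit-pair, arrive at $4^4=256$ per byte and $256\times256=65536$ for $xx'$ in agreement with Lemma~\ref{lemma1.4}, store $y$ at the corresponding address, and appeal to Proposition~\ref{prop0} for invertibility --- is the same skeleton as the paper's proof, which runs through Eqs.~(\ref{eq:5a})--(\ref{eq:8}) and obtains the ``8 bits'' by counting $4\,\pmb\mho+4\,\pmb\mho=8$ manipulations of $\beta$. However, the step you yourself single out as the crux does fail in the form you state it. You assign \emph{one} operator from $\{\mathbf{z},\mathbf{n},\mathbf{i},\mathbf{p}\}$ to each bit-pair and require that the four operators name the four target pair-values bijectively relative to the reference \texttt{11}. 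Under the paper's semantics (Definitions~\ref{def4.12}, \ref{prop4.13}, \ref{def4.16}, \ref{prop4.17}) this is false: $\mathbf{z}$ fixes the pair, $\mathbf{n}$ complements it, while $\mathbf{i}$ and $\mathbf{p}$ are closure operators defined only on impure ($01,10$) and pure ($00,11$) pairs respectively, returning a single bit --- and they collide ($\mathbf{i}(10)=\mathbf{p}(00)$, $\mathbf{i}(01)=\mathbf{p}(11)$), which the paper itself records as Paradox~\ref{paradox1}. So a single layer of four one-per-pair flags applied to \texttt{11111111} cannot reach all $256$ byte values, and your guessed semantics (``$\mathbf{z}$ sends \texttt{11} to $00$'') do not match the definitions the theorem relies on.

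The mechanism your proposal misses is a \emph{two-layer composition}: each bit-pair receives both an $\{\mathbf{i},\mathbf{p}\}$ flag and a $\{\mathbf{z},\mathbf{n}\}$ flag, applied in succession (Solution~\ref{sol4.18}; Example~\ref{4.31}, where $\mathbf{ippp}\times\mathbf{niin}$ acts first and $\mathbf{znnn}\times\mathbf{znzz}$ second). That gives $2\times2=4$ choices per pair, hence the $4\times4\times4\times4=256$ count per character appearing in the paper's proof, and it is this composition --- not four mutually distinct single operators --- that the ``four single bit-flags'' of the statement refers to. Note also that your setup briefly treats the $16$-bit string $xx'$ as only four bit-pairs before silently switching to four pairs \emph{per byte}; the latter is what the count requires. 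To repair the argument, replace the single-flag bijection with the $(\mathbf{ip},\mathbf{zn})$ composite per pair and check that the four composite actions are pairwise distinct and cover $\{00,01,10,11\}$; that would in fact supply a verification the paper's own proof leaves implicit, since the paper merely counts combinations and delegates invertibility to the translation table.
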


Theorem~\ref{theo1} is analogous to the notion of Hamming distance~\cite{hamming, murdocca}: ``the minimum number of substitutions required to change one string to another." However, our case has no relevance to Hamming error-correction characteristics, and concerns \emph{the number of fuzzy pairwise bit substitutions}. From Assumption~\ref{assum1}, the notion of storing any sequence $\mathcal{s}$ in Theorem~\ref{theo1} becomes valid in terms of an array quantifying the contents of $\mathcal{s}$, or by convention

\begin{remark}\label{remark1}
The machine stores data in form of an array $A$ on sequence $\mathcal{s}$. Either $x$ or $x'$ from $\mathcal{s}$, is of ASCII type measured in bpc, or entropy rate $H$.
\end{remark}

Hence, a bivector product $\mathbf{e}_{ij}$ on subfields by Lounesto (2001)~\cite{lounesto}, via its dual scalar element $h$ from Definition~\ref{def2.7}, self-contains and quantifies input $xx'$ in terms of

\begin{definition} \label{def1}
The $y$ character is stored in one of the rows $r$ in array $A_{r\times4}$, where $r$ satisfies a possible number of ASCII combinations for $xx'$. For all $y$, an $x\times x'$ combination produces $r= 256 \times 256 = 65536$ rows with a subspace scalar of $h^2\mathbf{e}^2_{ij}$.
\end{definition}

The $r$ for $xx'$ by Definition~\ref{def1}, further shows the following to be true, if and only if, an interactive proof on FBAR logic is presentable (Section~\ref{sect4.1.1}). Hence

\begin{proposition} \label{prop4.5}
The $y$ in $A$, holds single bit-flags that occupy the four columns $i$, $j$, $k$ and $l$, as biquaternion products from Proposition~\ref{prop0}, in the $r\times4$ array, or
\begin{subequations}\begin{align}
A_{r\times4} = A_{r\times (i \ j \ k \ l)} \ , \ \  (\stackrel{1}{i} \ \stackrel{2}{j} \ \stackrel{3}{k} \ \stackrel{4}{l}) &= h^2 \left( \begin{array}{cccc}
           \mathbf{e}_{1} & \mathbf{e}_{2}  & \mathbf{e}_{3} & \mathbf{e}_{4} \\
        \end{array}
      \right)^2 \notag \\ &= \left(
        \begin{array}{cccc}
           h^2\mathbf{e}_{12}^2 & h^2\mathbf{e}_{34}^2 \\
        \end{array}
      \right) \end{align} \noindent where the bit-flags field is displayed by \begin{equation}
\therefore xx' \longrightarrow y \in \{r\times (i \ j \ k \ l)\} = 65536 \times (1_x1_{x'} \ 1_x1_{x'} \ 1_x1_{x'} \ 1_x1_{x'}) \end{equation}\noindent and dimensionally measured by length $\ell$ as \begin{equation}
\therefore \ell\left(A_{r\times4}\right) = 65536 \times (2\,{\rm bits} \ 2 \, {\rm bits} \ 2 \, {\rm bits } \ 2 \, {\rm bits}) \end{equation}\noindent holding input data $xx'$, by a $y$ in the same field, in terms of    \begin{equation}
\therefore \ell\left(A_{r\times4}\right) =  65536 \  \times 8 \ \mathrm{bits} = 64 \ {\rm kilobytes}
\end{equation} \label{eq:4} \end{subequations}  where $y = 8 \ \mathrm{bits} \in A_{r\times4}=64 \, {\rm K}$, affirming that $A$ is static. If the bivector $\mathbf{e}_{12}^2 = -1$ for a pre-occupying character $x$, then its dual scalar is $h^2=-1$, otherwise $h^2=1$ for the post-occupying character $x'$ by $\mathbf{e}_{34}^2 = 1$. Both conditions determine the subspace property on each $xx'$ input as a superposing pair under compression. Thus, the compression products are orthogonally projective, positive and non-commutative.
\end{proposition}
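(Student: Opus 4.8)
The plan is to assemble the statement from four prior results and treat it as a bookkeeping consequence rather than a new construction. First I would fix the indexing: by Lemma~\ref{lemma1.4} and Definition~\ref{def1} the row count is forced, $r = 256\times256 = 65536$, and each row is reserved for exactly one ASCII pair $xx'$, so the stored object is $A_{r\times4}$ with $r$ as above. The only thing to justify at this stage is that no row is wasted or doubly used; this holds because $(x,x')$ ranges over a complete residue system modulo $2^{16}$, so $A$ is in bijection with the set of admissible inputs.

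Next I would pin down the four-column structure via the Clifford bookkeeping of Definition~\ref{def2.7} and the physical-space form supplied by Proposition~\ref{prop0}. Writing the four generating vectors $\mathbf{e}_1,\mathbf{e}_2,\mathbf{e}_3,\mathbf{e}_4$ of $\mathbbm{C}\ell_4(\mathbbm{R})$ and using $\mathbf{e}_1\mathbf{e}_2=\mathbf{e}_{12}$, $\mathbf{e}_3\mathbf{e}_4=\mathbf{e}_{34}$, the adjacent-pair ``square'' of the column tuple collapses the biquaternion grade-$1$ part onto the grade-$2$ pair $(\mathbf{e}_{12},\mathbf{e}_{34})$; carrying the dual scalar $h$ through as an overall $h^2$ then yields exactly $\left(h^2\mathbf{e}_{12}^2\ \ h^2\mathbf{e}_{34}^2\right)$, matching Proposition~\ref{prop0}. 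Each of the four columns $i,j,k,l$ then stores one \textbf{znip}-type bit-flag acting on the pair $1_x1_{x'}$ (one bit from $x$, one from $x'$), which is precisely Theorem~\ref{theo1}: the four single flags manipulate the pure byte $\beta=\texttt{11111111}$ so that one $y$ of $8$ bits records $xx'$. Hence each row carries $4\times2=8$ bits, and the length identity is pure arithmetic, $\ell(A_{r\times4}) = 65536\times4\times2\,\text{bits} = 65536\times8\,\text{bits} = 524288\,\text{bits} = 64\,\text{kilobytes}$, a quantity independent of how many inputs pass through, which is the asserted staticity of $A$.

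For the signature clause I would again invoke Definition~\ref{def2.7}: assign $\mathbf{e}_{12}^2=-1$ with dual scalar $h^2=-1$ to the plane carrying the \emph{pre-occupying} character $x$, and the complementary orientation $\mathbf{e}_{34}^2=1$, $h^2=1$ to the plane carrying the \emph{post-occupying} $x'$, so that the two grade-$2$ carriers occupy dual halves of $\bigwedge^2\mathbbm{R}^4$ and the pair $xx'$ is a superposing pair in the sense of Definition~\ref{def2.2}. The three closing adjectives then follow: \emph{orthogonally projective} from the dual-space construction of the universal-model proof (each data plane orthogonal to all vectors of its dual); \emph{positive} because $h^4=1$ forces every $h^2\mathbf{e}^2$ factor to have unit positive magnitude, consistent with self-containment being positive and conjunctive (Definition~\ref{def2.2}); and \emph{non-commutative} because the underlying Clifford product satisfies $\mathbf{e}_i\mathbf{e}_j=-\mathbf{e}_j\mathbf{e}_i$ for $i\neq j$.

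The main obstacle I anticipate is the signature clause together with the Clifford identity: holding $\mathbf{e}_{12}^2=-1$ and $\mathbf{e}_{34}^2=+1$ simultaneously inside a single algebra $\mathbbm{C}\ell_4(\mathbbm{R})$ commits one to a mixed-signature (or otherwise non-standard) metric, and I would need to check that this is the \emph{same} convention already used implicitly in Definition~\ref{def2.7} and Proposition~\ref{prop0}, not a new one introduced here. A secondary concern is the content step: one must argue carefully that the four one-bit \textbf{znip} flags, \emph{together with the $16$-bit row index $r$}, genuinely recover $xx'$ — that the map $xx'\mapsto(r,y)$ is well defined on all $65536$ pairs and that the ``$8$ bits per row'' charge is the correct accounting — before the length identity and the staticity conclusion are legitimate.
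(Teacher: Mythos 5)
Your proposal is correct in its bookkeeping and reaches the same $65536\times 8\ \mathrm{bits}=64$\,KB conclusion, but you and the paper distribute the work differently. You derive $r=65536$ from Lemma~\ref{lemma1.4} as the count of admissible byte-pairs $2^{(8+8)}$, which only shows there are \emph{enough} rows; the paper's proof instead counts flag combinations, $4\times4\times4\times4=256$ per character and $256\times256=65536$ per pair, i.e.\ it establishes that the four one-bit-flag alphabet is exactly in bijection with the 256 ASCII characters — the stronger fact needed for the encoding to be injective, which your complete-residue-system remark does not by itself supply. More importantly, the ``secondary concern'' you defer (that the four flags together with the row address genuinely recover $xx'$) is not left open in the paper: it is the main body of the paper's proof, which constructs the static translation table with prefix addresses, the composition $g\circ f$ with $f(y,\varphi)$, and the $\pmb\mho$ manipulations of the pure byte $\beta=\texttt{11111111}$ in Eqs.~(\ref{eq:7})--(\ref{eq:8}), and then continues past the proposition to the $50\%$ compression claim of Eq.~(\ref{eq:9}). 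Conversely, the signature clause and the three closing adjectives, on which you spend the most effort, are not argued in the paper's proof at all; your observation that $\mathbf{e}_{12}^2=-1$ together with $\mathbf{e}_{34}^2=+1$ forces a mixed-signature convention that conflicts with the blanket $\mathbf{e}_{ij}^2=-1$ of Table~\ref{tab1} and Eq.~(\ref{eq:2a}) is a genuine inconsistency that the paper leaves unresolved, so you should not expect Definition~\ref{def2.7} to discharge it.
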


\begin{proof} \label{proof1}
Suppose a $\varphi$ symbol denotes bit-flags for all $\mathbf{e}_{1234}$ in the $r\times4$ array. According to Assumption~\ref{assum1}, for the number of ASCII combinations on $r$, a total of $4\times 4 \times 4\times 4 = 256$ on $x$, and 256 on $x'$, satisfies 65,536 unique flag combinations. Its unit vector $\hat{{\bm v}}_\varphi$ whose coordinates are in one of the $1\times4$ array dimensions, has a length of 1\,bit with a scalar occupation. Thus, the $y$ character is stored in the $xx'$ intersection $\cap$, where $\varphi$ values meet. This gives $y$ a different content not equal to $\varphi$, but representing the exact location of $xx'$ in the sequence as well as content when $\varphi$ flags are \emph{translated}. We create a static \emph{translation table} to decode these flags based on where the $y$ character is stored i.e. the \emph{address} with a \emph{reference point}, or
\begin{subequations}
\begin{align} &\underbrace{\bigg{(}\underbrace{\overbrace{(\overbrace{4}^{\left(
                                                                \begin{array}{cccc}
                                                                  1_x & 1_x & 1_x & 1_x \\
                                                                \end{array}
                                                              \right)
} \times \overbrace{4}^{\left(
                                                                \begin{array}{cccc}
                                                                  1_{x'} & 1_{x'} & 1_{x'} & 1_{x'} \\
                                                                \end{array}
                                                              \right)}}^{8 \, {\rm bits} } )}_{1\mathrm{st} \, 16 \, \text{combinations}} \cap \underbrace{(\overbrace{4\times 4}^{8 \, {\rm  bits}})}_{2\mathrm{nd} \, 16}\bigg{)} }_{y \, \text {address}\,\cdots}  \notag \\
                                                              &\underbrace{\cap \bigg{(}\underbrace{(\overbrace{4\times 4}^{8 \, {\rm  bits}})}_{3\mathrm{rd} \, 16} \cap \underbrace{(\overbrace{4\times 4}^{8 \, {\rm  bits}})}_{4\mathrm{th} \, 16}\bigg{)}}_{\cdots \, y \, \text {address}}= 8 \,  \mathrm{bits} \label{eq:5a} \end{align}
\text stored as character $y$ in one of the 65,536 rows (\emph{prefix addresses}) representing
one of the four-dimensional combinations. These combinations are either 1st, 2nd,
3rd or 4th 16 combinations of bit-flags, for $xx'$. Equation~(\ref{eq:5a}) validates Proposition~\ref{prop4.5}
equations for specific address and flags configuration. Specifically,
\begin{align}
 &\forall \varphi \in A_{i,j,k,l} | \, \hat{{\bm v}}_\varphi = (1,0,0,0) \vee (0,1,0,0) \vee (0,0,1,0) \vee (0,0,0,1) \ , \
\\ &\text{such that}  \notag \\
&\forall y \in A_{1\times4} \subset A_{r\times4} \ | \ \varphi \cdot A_y = \lambda_{xx'} + \varrho_{xx'}= (i, j, k, l)_{xx'} + (i, j, k ,l)_{xx'}\label{eq:5c}\end{align}
\end{subequations}

So, the translation table is a precalculated (prefix) rows-by-columns file on bit-flags, giving a reference point for the stored character $y$. The reference is a specific bit-flag combination from the 65,536 possible rows, constituting the $y$ address. The bit-flags set $\varphi$, represents all 16 bits content of $xx'$, by manipulating a pure binary sequence $\beta=$ \texttt{1111 1111} recursively. This is shown in Eqs.~(\ref{eq:8}). The byte is manipulated to obtain the binary content of $xx'$. From Eq.~(\ref{eq:5c}), let this manipulation start with the left-most bit to the right-most bit, operating on the left-byte $\lambda$ and the right-byte $\varrho$ of sequence $\mathcal{s}$. This gives a $y$ product on the $xx'$ input, and is expressed by the following compression function:
\begin{align}
 &g\circ f:\mathcal{s}_{\mathrm{in}}\rightarrow A \notag\\
 &xx' \mapsto g(f(xx')) = f(y, \varphi) \, ,\notag\\
 &f(y, \varphi)= y \times \varphi_{xx'} = y (\varphi_x + \varphi_{x'}) = y_{\varphi_x} + y_{\varphi_{x'}} = y_{\varphi_{xx'}}
\end{align}

Let $f$ be a function composition that maps the contents of $\mathcal{s}$ to the contents of array $A$, holding the same $xx'$ contents via bit-flags $\varphi$. The bit-flags are occupied in form of character $y$. A two-variable function $f(y, \varphi)$ expresses the $y$ character carrying flags in array $A$, or its respective field ${\bf F}_y$. Its length is specified by
\begin{align}
 \ell(y_\varphi) = \varphi A_y =&\!\left[\left(
   \begin{array}{cccc}
    1_i & 1_j & 1_k & 1_l \\
  \end{array}
\right)_x \cdot \left ( \begin{array}{cccc}
    1_i & 1_j & 1_k & 1_l \\
  \end{array}
\right)_x \right] \notag \\ &+ \left[\left(
   \begin{array}{cccc}
    1_i & 1_j & 1_k & 1_l \\
  \end{array}
\right)_{x'} \cdot \left(
   \begin{array}{cccc}
    1_i & 1_j & 1_k & 1_l \\
  \end{array}
\right)_{x'}\right] \notag \\
= & \, \pmb\mho (x) + \pmb\mho (x')=\hat{{\bm\mu}}_x\|\mathbf{e}\|^2\left(2\,\hat{{\bm\mu}}_{\beta_x}+2\,\hat{{\bm\mu}}_{\beta_{x'}}\right) \notag \\ =& \, 4\,{\pmb\mho} + 4\,{\pmb\mho} = 8 \, \mathrm{bits} \label{eq:7}
\end{align}

In Eq.~(\ref{eq:7}), the dual $\pmb\mho$ manipulation method is of bivector type $|\mathbf{e}_{ij}|$ or $|\mathbf{e}_{kl}|$ from the norm $\|\mathbf{e}\|^2$ in Eqs.~(\ref{eq:2b}), and so its product $\mathbf{e} \cdot \mathbf{e}$, is of a data-decoding process. The method is derived to alter the partitioned bit-pair values from $\beta$ in ${\bf F}_y$ to obtain original data, as if its values are of \emph{Pythagorean identity}~\cite{leff} such that $8\,\hat{{\bm\mu}}_\beta =\mathop{\sum}_{n=1}^{8}\mathcal{s}(\sin^2\theta + \cos^2\theta)_n = 11111111\,$ stored as a byte, or
\begin{subequations}
\begin{align}
\begin{array}{c}
\hspace{-1.2cm} \left\{4 \, {\pmb\mho} \, {\rm   manipulations \ on \ } y = 8\,\hat{{\bm\mu}}_\beta = 11111111 = \beta  {\rm \ to \ obtain \ }x \  + \right.  \\
\left. 4 \, {\pmb\mho} \, {\rm manipulations \ on \ } y = 8\,\hat{{\bm\mu}}_\beta = 11111111=\beta{\rm \ to \ obtain  \ } x' \right\} = 8 \, \mho  \ ,
\end{array}
\end{align}
\begin{align}
\therefore \, &\varphi \left(\overbrace{\underbrace{(11)}}_{1 {\rm bit}_i \times 1 {\rm bit}_i}^{\frac{1}{4}y =  y_i} \overbrace{\underbrace{(11)}}_{1 {\rm bit}_j \times 1 {\rm bit}_j}^{\frac{1}{4}y =  y_j}\overbrace{\underbrace{(11)}}_{1 {\rm bit}_k \times 1 {\rm bit}_k}^{\frac{1}{4}y =  y_k} \overbrace{\underbrace{(11)}}_{1 {\rm bit}_l \times 1 {\rm bit}_l}^{\frac{1}{4}y =  y_l}\right)_{xx'} \!\!= y_{xx'} \notag \\ &=  8 \ {\rm bits} \to \underbrace{xx'}_{16 {\rm bits}}
\end{align}
\label{eq:8}\end{subequations}

Thus, to store more $y$ characters in the field of rows, $r\times4$, we establish a finite field $\mathbf{F}_y$ with $n$ elements. Therefore, $\mathbf{F}_y=\{y_1, y_2, \ldots, y_n\}$, represents a compression of sumset $\sum \mathcal{s}=(x_1x'_1 + x_2x'_2 + \ldots + x_mx'_m)$, achieving
\begin{equation}
\ell\left(\mathbf{F}_y\right) = \ell\left(\sum\limits_{i = 1}^m {x_i x'_i}\right)_{\mathrm{in}}\!\!\!\! \stackrel{\sum\limits_{r = 1}^{65536} A_{r\times4}}{\longrightarrow } \! \ell\left(\sum\limits_{i = 1}^n {y_i}\right)_{\mathrm{out}} \!= \frac{1}{2}\sum \mathcal{s}
\label{eq:9} \end{equation}

Equations~(\ref{eq:8}) and (\ref{eq:9}), show a fixed degree of double-efficiency over $y$, generating a decodable 50\% compression.
\end{proof}

\begin{proposition} \label{prop4.6}
The addressability of any original data is self-embedded in a grid file with 65,536 addresses, Eqs.~(\ref{eq:4}). For each double-character $xx'$ input, one specific address is occupied by a character $y$ output, Eqs.~(\ref{eq:8}). Translating the occupied address via a table whose row content returns original content, is by translating bit-flag combinations on a pure byte $\beta = \,$\texttt{11111111} obtaining $xx'$ from Eqs.~(\ref{eq:7})-(\ref{eq:8}).
\end{proposition}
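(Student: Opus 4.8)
The plan is to assemble Proposition~\ref{prop4.6} from three pieces already in place: the address count of Lemma~\ref{lemma1.4}, the static $r\times4$ array of Proposition~\ref{prop4.5}, and the compression composite $g\circ f$ together with the bit-flag manipulations of Eqs.~(\ref{eq:5a})--(\ref{eq:9}). First I would fix the encoder $\varphi$ as the map carrying a two-byte input $xx'$ to its four single bit-flags $(i,j,k,l)$, each flag drawn from the four \textbf{znip} values on a bit-pair; since each byte supplies $4\times4\times4\times4=256$ flag patterns and the two bytes concatenate, $\varphi$ lands in a set of exactly $256\times256=65536$ patterns, matching the $65536$ rows $r$ of $A_{r\times4}$ in Eqs.~(\ref{eq:4}). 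This already realises the ``self-embedding in a grid file with $65536$ addresses,'' and since each row holds a fixed $8$-bit slot the grid (and hence the table) is of constant size.

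Next I would show $\varphi$ is a bijection onto the address set. Surjectivity is immediate because the grid is precalculated exhaustively, so every index in $\{0,\ldots,65535\}$ is hit by some flag pattern. Injectivity is the crux: two inputs sharing a row would have to agree flag-for-flag, and because the four columns encode $x$ in the first two slots and $x'$ in the last two independently (Eq.~(\ref{eq:5c})), this forces equality of both halves, a contradiction; alternatively, equal cardinalities plus surjectivity give injectivity by a pigeonhole step. Bijectivity yields a well-defined inverse $\vartheta$, and this inverse \emph{is} the translation table: row $r$ records the unique flag pattern $\varphi$ whose application to the pure byte $\beta=\texttt{11111111}$ regenerates $xx'$, exactly the four $\pmb\mho$ manipulations per byte (eight in all) of Eqs.~(\ref{eq:7})--(\ref{eq:8}).

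I would then check the ``one specific address is occupied'' clause: by Theorem~\ref{theo1} and its proof, the composite $g\circ f$ deposits the single $8$-bit character $y=y_{\varphi_{xx'}}$ into $A_{1\times4}\subset A_{r\times4}$ at the row named by $\varphi_{xx'}$ and nowhere else, so for one input the occupancy pattern of the grid is a single filled vertex, as in Eq.~(\ref{eq:5a}). Losslessness then follows by composition, $\vartheta\circ(g\circ f)=\mathrm{id}$ on the set of all $xx'$: $g\circ f$ records the address of $\varphi_{xx'}$ and $\vartheta$ translates that address back through $\beta$ to $xx'$. Extending address-by-address over a sumset $\sum\mathcal{s}$ recovers every original byte-pair while the stored form has the $\tfrac{1}{2}\sum\mathcal{s}$ length of Eq.~(\ref{eq:9}), which is the self-embedding asserted.

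The hard part will be the injectivity step, i.e.\ pinning down that the \textbf{znip} flag alphabet has precisely the right granularity: the minimal intersection $\{\textbf{z},\textbf{n},\textbf{i},\textbf{p}\}\times\{\textbf{z},\textbf{n},\textbf{i},\textbf{p}\}$ must realise exactly the $256$ byte-patterns per half---coarser would collapse distinct $xx'$ onto one address and break losslessness, while a redundant alphabet would overshoot $65536$---and the translation acting on $\beta$ must invert each flag uniquely. Once that granularity statement is fixed, everything else is bookkeeping across Eqs.~(\ref{eq:4})--(\ref{eq:9}).
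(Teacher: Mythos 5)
Your proposal follows essentially the same route as the paper's interactive proof: count the $256\times256=65536$ flag/address combinations, let each $xx'$ occupy exactly one row of the $65536$-row grid via the composite $g\circ f$, and read the translation table as the inverse $\vartheta$ acting on the pure byte $\beta=\texttt{11111111}$ through the $\pmb\mho$ manipulations of Eqs.~(\ref{eq:7})--(\ref{eq:8}). The one place you go beyond the paper is the explicit bijectivity step: the paper's proof only matches cardinalities ($16^4$ flag patterns against $65536$ two-byte inputs, Eq.~(\ref{eq:10d})) and never argues that distinct inputs land in distinct rows, whereas you correctly isolate injectivity as the crux; your direct argument (the two halves are encoded independently per Eq.~(\ref{eq:5c}), and each bit-pair of the target compared against the pure pair $11$ determines its $\mathbf{ip}$/$\mathbf{zn}$ flags uniquely) is the right way to close it. Note, however, that your fallback ``surjectivity is immediate because the grid is precalculated exhaustively'' is circular --- exhaustive precalculation of rows says nothing about the image of $\varphi$ on inputs --- so the direct injectivity argument should carry the weight. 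You omit the paper's compact-Hausdorff framing, the bivector-norm computation of $f(y,\varphi)=\sqrt{|e_{12}|^2+|e_{34}|^2}$, and the \texttt{resolved} worked example of Eqs.~(\ref{eq:12}); none of these is load-bearing for the addressability claim, so nothing essential is lost.
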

\subsection{Interactive Proof} \label{sect4.1.1}
\begin{proof}
\label{proof2}
Suppose by default, we establish an ASCII combination on an $xx'$ input. The total number of combinations is 256 ASCII characters for $x$, and 256 ASCII characters for $x'$. Thus, $f(y) = 256 \, x \times 256 \, x' = 65536$ $xx'$. This gives an intersection of the combinations in total 65,536 8-bit addresses (64\,KB). We prove the intersections using logic and subspace topology on array $A$. The combinations are integrable in space where bits reside as stored and then manipulated. Let a compact Hausdorff space~\cite{scarborough} contain a pure byte sequence $\beta=$ \texttt{11111111} for manipulation to obtain $xx'$. The manipulation as a filter is done at a target point where space is locally compact. This results in compacted bits by associating all possible fuzzy pairwise bit manipulations, using bitwise operators OR \texttt{|}, and AND \texttt{\&} in code. Therefore, the manipulation `\texttt{11111111}' for a 2-byte content $xx'$ is `$\texttt{11111111} + \texttt{11111111}$'. The association of manipulation is via bit-flags giving left-byte intersected with right-byte. This association of two 8-bit sets gives an 8-bit output. Interactively, $\forall \ xx' \mapsto \{y\}_{i,j,k,l\in\mathbb{R}^4}$ we prove:
\begin{subequations}\begin{align}
&xx' \stackrel {\mathrm{store}}{\longrightarrow} A_y =\frac{xx'}{\|\mathbf{e}\|} = \frac{xx'}{\sqrt{\mathbf{e}\cdot \mathbf{e}}} = \frac{xx'}{\sqrt{4}} = \left(\frac{1}{\sqrt{2}}x_i \, \frac{1}{\sqrt{2}}x_j \, \frac{1}{\sqrt{2}}x'_k \, \frac{1}{\sqrt{2}}x'_l\right)  \\
\therefore \ &xx' \stackrel {\mathrm{store}}{\longrightarrow} \fbox{\bf{P}}  \stackrel {\mathrm{filter}}{\longrightarrow} f(y, \varphi) = \sqrt{|e_{12}|^2+ |e_{34}|^2}= \sqrt{\frac{1}{2}x^2 + \frac{1}{2}x'^2} = 8 \, {\rm bits}
\end{align}
\text Using the law of associativity in logic, the manipulative bits for $xx'$ appear as
\begin{align}
\therefore \, &\left(\{8\,\mathrm{bits}_x\} \in \underbrace{i \times j }_{2 \, {\rm dimensions}} \ni \{8\,\mathrm{bits}_x\} \right) \cap  \notag \\ & \left(\{8\,\mathrm{bits}_{x'}\} \in \underbrace{k \times l}_{2 \, {\rm dimensions}} \ni \{8\,\mathrm{bits}_{x'}\}\right)= \pmb\mho(\beta) \notag
 \\ &= 8 \ {\rm dual \ \mho \ manipulations} = 8 \, {\rm bits} = y \times \varphi_{xx'} \in \mathbb{R}^4
\label{eq:10c}\end{align}
\text and the address of $y$ for $xx'$ is found via 1-bit flags $\varphi$ operating on $\beta$, such that
\begin{align}
\therefore \ell(\varphi_{xx'}) &= \left(\{\underbrace{4_{\varphi \forall x}}_{4 \, {\rm flags}}\} \in i \times j \ni \{\underbrace{4_{\varphi \forall x}}_{4 \, {\rm flags}}\} \right) \cap \left(\{\underbrace{4_{\varphi \forall x'}}_{4 \, {\rm flags}}\} \in k \times l \ni \{\underbrace{4_{\varphi \forall x'}}_{4 \, {\rm flags}}\}\right) \notag \\ &= 4 \, {\rm bits}
\end{align}
\text where a storage field ${\bf F} $ covering all bit-flags $\varphi$ is quantified in terms of
\begin{equation*}
 \ell\left({\bf F}_\varphi \right) = \ell\left(\prod\limits_{e = 1}^4 4 \left(
                               \begin{array}{cccc}
                                 1_i & 1_j & 1_k & 1_l \\
                               \end{array}
                             \right)_e\right)
 = \prod\limits_{e = 1}^4 \left(4_i+4_j+4_k+4_l\right)_e
 \end{equation*}
 \begin{equation}
 \ \ \ = 16_1 \times 16_2 \times 16_3 \times 16_4 = {16^4{\rm  bits}} = 64 \, {\rm kilobytes}
 \label{eq:10d} \end{equation}
\label{eq:10}\end{subequations}
So, $i \times j \times k \times l$ represents a binary address as a four-dimensional flag or a byte in a spatial field $\mathbb{R}^4$ topology. Therefore, $xx'$ is distributed in 4 dimensions $i, j, k$ and $l$ by storing 1 $y$ character in the corresponding row. Now for the ``storage field," suppose we create a \emph{grid file} $\fbox{\textbf{G}}$ as a portable file with an empty space of 65,536 $i \times j \times k \times l$ rows, satisfying all possible combinations. This grid in specification should cover the $y$ field or ${\bf F}_y$ as well as bit-flag combinations. The $y$ field has a limit to store 1 to $n$, of $y$ characters corresponding to more inputs of $\mathcal{s}$. Let this be a sumset $\sum \mathcal{s}$ from Eqs.~(\ref{eq:8}). Thus, we specify these possible address combinations with a multi-sum on the available dimensions of 1-bit flags $\varphi$, covering field ${\bf F}_y$, or
\begin{equation}
A_y  \cap A_\varphi = {\bf F}_y \times {\bf F}_\varphi  = {\bf F}_y \otimes \sum\limits_{1\leqslant i,j,k,l\leqslant 16} \varphi_{xx'} =  \fbox{\textbf{G}} \label{eq:11}
\end{equation}

This, specifically builds our grid file multiplied with the sequences input $\mathcal{s}$ transposed matrix, as follows
\begin{subequations}
\begin{align}
 \fbox{\textbf{G}} = & \, \{ y_1 , y_2 , \ldots , y_m \}   \notag \\ &\times
\mathop{\left[
  \begin{array}{cccccccccccccc}
    1 & 2 & 3 & \cdots & 16 & 1 & 1 & 1 & 1 & 1 & 1 & 1 & \cdots & 16 \\
    1 & 1 & 1 & 1 & 1 & 2 & 3 & \cdots & 16 & 1 & 1 & 1 & \cdots & 16 \\
    1 & 1 & 1 & 1 & 1 & 1 & 1 & 1 & 1 & 2 & 3 & \cdots & \cdots & 16 \\
    1 & 1 & 1 & 1 & 1 & 1 & 1 & 1 & 1 & 1 & 1 & 1 & \cdots & 16 \\
  \end{array}
\right]}\nolimits_{\sum \mathcal{s}}^{\rm T} \notag
\\
   &= \{ 8 \, {\rm bits} , 8 \, {\rm bits} , \ldots , 8 \, {\rm bits} \} \times 65536_{ijkl} = \mathcal{s}_{\,\rm out}
      \end{align}
Evidently, the way $y_i$ is stored and configured, is later decodable for an LDD, where
\begin{align} \left \{ \mathcal{s}_{\,{\rm out}} \in \left. \fbox{{\bf G}} \, \right| 1 \, y \leq s \leq \ell \left( {\bf F}_y \right)  \right\} \ , \ \ell\left(\fbox{{\bf G}}\right) = \ \ell \left( {\bf F}_y \right) + 64 \, {\rm kilobytes} \label{eq:12c} \end{align} Ergo, by default, we occupy a $y$ = 8 bits for an $xx'$ = 16\,bits, since Eqs.~(\ref{eq:12}) intersected 8-bit flags for $x$ with 8-bit flags for $x'$ in the range of available rows in the grid file. So, if we exemplify a string of characters ``\texttt{resolved}," the sequence $\mathcal{s}$ becomes $\sum \mathcal{s} = x_1x'_1+x_2x'_2+x_3x'_3+x_4x'_4 = \texttt{resolved}$. Thus, the allocated bit-flags with respect to byte addresses display,
\begin{align}\sum \mathcal{s} = 64\,\mathrm{bits} \stackrel{\mathrm{store} \ \mathrm{as}}{\longrightarrow} \mathbf{F}_y &= \{y_1, y_2, y_3, y_4\}\! \subset \! A_{65536\times4} \notag \\ &= 32 \, {\rm bits} \in \|r\|=\frac{[1, 4]}{65536} \ {\rm rows}\ , \notag \end{align}\begin{align}
{\rm where} \ |\mathbf{F}_y| = 4
\ | y_i \in [1\times1\times1\times1 \ , \ 16\times16\times16\times16] \label{eq:12e}
\end{align} \label{eq:12} \end{subequations} \smallskip

\noindent In this case, the rows magnitude $\|r\|$ is up to $ \sqrt{1^2 + 4^2} = \sqrt{17}= 4.12$ out of 65,536 rows, since we have a cardinality of 4$y$'s occupying the array or storage field ${\bf F}_y$, with a specific address to decode 64 original bits. This gives a decodable 50\% compression plus a static size of 64 kilobytes.
\end{proof}

\subsection{FBAR Compression-Decompression Theorems} \label{sect4.2}

Following the Compression Proof \ref{proof2}, we specify an FBAR logic on I/O bit manipulations, delivering an FBAR compression theorem as follows:

\begin{theorem}\label{theo4}
Let the machine store a 2-byte binary input, $xx'$, as information. Once we manipulate a pure byte sequence `\texttt{11111111}' to obtain $xx'$ by bitwise and-or, negate \cite{boole}, and close it by fuzzy transitive closure~\cite{jacas}, one $y$ character is produced denoting the $xx'$ content, equal to 8 bits. \end{theorem}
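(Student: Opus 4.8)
The plan is to reduce Theorem~\ref{theo4} to the already-established Theorem~\ref{theo1} by showing that its operator basis --- bitwise \emph{and-or}, \emph{negate}, together with the fuzzy transitive closure of~\cite{jacas} --- generates exactly the same family of single bit-flag manipulations $\varphi$ on the pure byte $\beta=\texttt{11111111}$ that were used in Proof~\ref{proof1} and Proof~\ref{proof2}. Since Theorem~\ref{theo1} already yields a $y$ of $8$ bits in $A_{r\times4}$ representing $xx'$ through Eqs.~(\ref{eq:7})--(\ref{eq:8}), and Proposition~\ref{prop4.6} already fixes its addressing, it suffices to prove an \emph{equivalence of generating sets}; once that is in hand the length count $\ell(y_\varphi)=8$ bits and the $65536$-row grid of Lemma~\ref{lemma1.4} carry over verbatim.

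First I would express each of the four per-byte flag configurations $\hat{{\bm v}}_\varphi\in\{(1,0,0,0),(0,1,0,0),(0,0,1,0),(0,0,0,1)\}$ of Eq.~(\ref{eq:5c}) as a two-output Boolean map of the pure bit-pair \texttt{11}, i.e. realize the \textbf{znip} operators $\{\mathbf{z},\mathbf{n},\mathbf{i},\mathbf{p}\}$ of Section~\ref{sect3} as compositions of $\wedge$, $\vee$ and $\neg$ acting on \texttt{11}. Using the conjunctive-normal-form conversion already invoked for the \textbf{znip} bivector operators, every per-pair transformation $\{0,1\}^2\to\{0,1\}^2$ lies in the clone generated by $\{\wedge,\vee,\neg\}$; taking the product over the four pairs of a byte reproduces the $16$ combinations per byte, and intersecting the two bytes as in Eq.~(\ref{eq:5a}) reproduces the $65536$ addresses. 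This already shows the and-or--negate part of the new basis subsumes the bit-flag part of Theorem~\ref{theo1}.

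Next I would account for the \emph{fuzzy} qualifier and the role of \emph{transitive closure}. Here I would use the abstraction map $\Phi_{\wedge}$ of Eq.~(\ref{eq:3b}), which collapses the fuzzy membership values $\{\approx0,\approx1\}$ back onto crisp $\{0,1\}$ along the limits displayed in Eq.~(\ref{eq:2a}); composing $\Phi_{\wedge}$ with the Boolean realization above shows the fuzzy layer does not enlarge the reachable set of byte targets. The fuzzy transitive closure is then invoked to certify that the recursion ``manipulate \texttt{11111111} to obtain $x$, then $x'$'' of Eqs.~(\ref{eq:8}) is \emph{closed}, i.e. the relation composing successive single-bit manipulations equals its own max--min transitive closure --- which is exactly the associativity demanded by Definition~\ref{def2.2} for self-information containment --- so the composite manipulation stays inside the same finite field $\mathbf{F}_y$ and still outputs a single $8$-bit $y$. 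Re-running the length bookkeeping $\ell(y_\varphi)=4\,{\pmb\mho}+4\,{\pmb\mho}=8$ bits of Eq.~(\ref{eq:7}) then closes the argument.

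The main obstacle I expect is precisely the closure step: one must show that $\{\wedge,\vee,\neg\}$ \emph{plus} fuzzy transitive closure is not merely sufficient but acts \emph{conservatively} --- the recursive manipulation of $\beta$ must be surjective onto all $2^{16}$ possible $xx'$ contents while the output $y$ never exceeds $8$ bits and the translation table of Proof~\ref{proof1} stays well-defined. This hinges on the non-commutativity and positivity of the bivector products of Proposition~\ref{prop4.5} (so the ordering of the left-byte $\lambda$ and right-byte $\varrho$ manipulations genuinely matters) and on verifying that transitive closure contributes no reachable state beyond those already generated; the delicate point is ruling out any ``overshoot'' that would break the $2\!:\!1$ count of Eqs.~(\ref{eqs:3}) underpinning Theorem~\ref{theo0::1}.
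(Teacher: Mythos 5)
There is a genuine gap, and it sits exactly where you flag your ``main obstacle.'' You model all four \textbf{znip} operators as pair-to-pair Boolean maps $\{0,1\}^2\to\{0,1\}^2$ lying in the clone generated by $\{\wedge,\vee,\neg\}$, and you demote the fuzzy transitive closure to a certificate that the recursion on $\beta$ is ``closed.'' But in the paper's actual proof (Section~\ref{sect4.3.1}) only $\mathbf{z}$ and $\mathbf{n}$ are pair-to-pair maps (Definitions~\ref{def4.12} and~\ref{prop4.13}); the operators $\mathbf{i}$ and $\mathbf{p}$ are \emph{defined by} the transitive closure as pair-to-single-bit collapses, $\mathbf{i}(01)=1,\ \mathbf{i}(10)=0,\ \mathbf{p}(11)=1,\ \mathbf{p}(00)=0$ (Definitions~\ref{def4.16} and~\ref{prop4.17}). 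That collapse is the entire mechanism by which $16$ bits become $8$: a composition of pair-to-pair maps alone preserves bit count, so your clone-theoretic step can never produce the $2\!:\!1$ reduction you then try to ``re-run the bookkeeping'' on. The closure is constitutive, not conservative.

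Moreover, the collapse creates the ambiguity the paper elevates to Paradox~\ref{paradox1} --- after closure one cannot tell whether a $0$ came from $\mathbf{i}(10)$ or $\mathbf{p}(00)$, nor a $1$ from $\mathbf{i}(01)$ or $\mathbf{p}(11)$ --- and the paper's proof hinges on resolving it: Solution~\ref{sol4.18} fixes the order (apply the $\mathbf{ip}$ combination to $\beta$ first, then $\mathbf{zn}$ to its result), and Definition~\ref{def4.19} records \emph{which} of the $65{,}536$ \textbf{znip} combinations was used as the address of $y$, so that losslessness comes from the address, not from invertibility of the bit maps. Example~\ref{4.31} is the paper's witness that this ordering recovers a specific $xx'$. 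Your proposal's surjectivity/``overshoot'' worry gestures at this but never identifies the concrete obstruction or its resolution, and your appeal to $\Phi_{\wedge}$ and to non-commutativity of the bivector products in Proposition~\ref{prop4.5} does not substitute for it. To repair the argument you would need to (i) separate the roles of $\{\mathbf{z},\mathbf{n}\}$ (Boolean, bit-preserving) from $\{\mathbf{i},\mathbf{p}\}$ (closure, bit-halving), and (ii) show that recording the operator quadruple as a row index in $\fbox{\textbf{G}}$/$\fbox{\textbf{TT}}$ disambiguates the coincident outputs --- which is precisely the content of the paper's own proof.
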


From Theorem~\ref{theo4}, following Theorem~\ref{theo1} proof, we deduce an FBAR compression $\mathcal{C}$ corollary,

\begin{corollary}\label{corol1}
The four combined operations, pairwise and-or, negate and fuzzy transitive closure on sequence $\mathcal{s}$, give a 50\% fixed compression.
\end{corollary}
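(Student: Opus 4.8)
\textbf{Proof proposal for Corollary~\ref{corol1}.}

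The plan is to derive Corollary~\ref{corol1} directly from Theorem~\ref{theo4} by quantifying the length of the output against the length of the input, in the same way Proof~\ref{proof1} established the $\frac{1}{2}\sum\mathcal{s}$ identity for Theorem~\ref{theo1}. First I would invoke Theorem~\ref{theo4}: the four combined operations — pairwise \emph{and-or}, \emph{negate}, and fuzzy \emph{transitive closure} — acting on the pure byte $\beta=\texttt{11111111}$ reproduce the 16-bit content of $xx'$ while emitting a single 8-bit character $y$. By Remark~\ref{remark1} and Definition~\ref{def1}, $y$ lands in one of the $r=65536$ rows of $A_{r\times4}$, so the output stream is a sumset $\mathbf{F}_y=\{y_1,\dots,y_n\}$ whose length, by Eq.~(\ref{eq:9}), equals $\frac{1}{2}\sum\mathcal{s}$. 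Plugging this into the compression-ratio expression of Eqs.~(\ref{eqs:3}) gives $\mathcal{C}_r=|xx'|\leftrightarrow|y|=(16\,\text{bits})\leftrightarrow(8\,\text{bits})=2\!:\!1$, i.e.\ a $50\%$ reduction.

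The second step is to argue that this ratio is \emph{fixed}, i.e.\ independent of the particular characters and of the number of input pairs $m$. Here I would lean on Lemma~\ref{lemma1.4} and Proposition~\ref{prop4.5}: regardless of which of the $256\times256$ ASCII combinations $xx'$ realizes, the bit-flag field $\varphi$ occupying columns $i,j,k,l$ always has the same dimensional length $\ell(A_{r\times4})=65536\times8$ bits, and each $xx'$ consumes exactly one row via a single $y=8$ bits. Summing over $m$ input pairs, the per-pair cost is invariantly $8$ bits out for $16$ bits in, so the $50\%$ figure does not drift with $m$ — this is precisely the "double-efficiency" order $2\!:\!1$ from Lemma~\ref{lemma1.4}. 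The only subtlety is that the overall file also carries the static $64$ KB translation/grid overhead of Eqs.~(\ref{eq:12c}); following the convention already adopted after Eq.~(\ref{eq:12e}), I would note this as an additive constant that is asymptotically negligible and does not affect the \emph{compression} ratio proper.

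The main obstacle I anticipate is justifying that the fuzzy \emph{transitive closure} step genuinely "closes" the pairwise manipulation into crisp 8-bit output without either losing information (which would break losslessness, contradicting the reversibility already claimed for Theorem~\ref{theo1} via Proposition~\ref{prop4.6}) or inflating the output beyond $8$ bits. Concretely, one must show that applying \emph{and}, \emph{or}, \emph{negate}, and transitive closure~\cite{jacas} to the four fuzzy bit-pairs collapses the near-$0$/near-$1$ membership degrees back to $\{0,1\}$ exactly — this is the abstraction map $\Phi_{\wedge}$ of Eq.~(\ref{eq:3b}), $\{00,01,10,11\}\leftarrow\{[0,1]\}\leftarrow\{0,1\}$ — so that the $y$ character remains a legitimate member of the static field $\mathbf{F}_y$ and the translation table of Proof~\ref{proof1} still decodes it. I would discharge this by citing Definition~\ref{def2.2} (self-information containment is conjunctive, associative, and size-preserving) together with the hypotenuse transformation $\vartheta$ of Eq.~(\ref{eq:2a}), whose evaluation $1\sqrt{4}=2$ bit-states per $\hat{\bm\mu}$ vector guarantees each fuzzy pair contributes exactly $2$ bits, hence four pairs contribute exactly $8$ — closing the argument and yielding the claimed fixed $50\%$ compression.
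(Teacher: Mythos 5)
Your proposal is correct and follows essentially the same route as the paper: the paper deduces Corollary~\ref{corol1} in one line from Theorem~\ref{theo4} ``following Theorem~\ref{theo1} proof,'' i.e.\ via Eqs.~(\ref{eq:8})--(\ref{eq:9}) giving $\ell(\mathbf{F}_y)=\frac{1}{2}\sum\mathcal{s}$ and hence the 2:1 ratio, which is exactly the chain you invoke. Your additional remarks on fixedness over $m$ input pairs, the static 64\,KB overhead, and the collapse of fuzzy states under the transitive closure are faithful elaborations of that same argument rather than a different approach.
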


From Theorem~\ref{theo4}, we further deduce a complete FBAR compression $\mathcal{C}$ corollary

\begin{corollary} \label{corol2}
The FBAR four combinatorial operations on any sequence $\mathcal{s}$ from $\sum \mathcal{s} = (x_1x'_1 + x_2x'_2 + \ldots + x_mx'_m) $ give the same compression ratio 2:1, or 50\%.
\end{corollary}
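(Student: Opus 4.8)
The plan is to lift Corollary~\ref{corol1}, which establishes the $2{:}1$ ratio for a single pair $\mathcal{s}=xx'$, to an arbitrary sumset $\sum\mathcal{s}=\sum_{i=1}^{m}x_ix'_i$ by induction on $m$, exploiting that the length function $\ell$ is additive over concatenated blocks while the FBAR composition $g\circ f$ of Proof~\ref{proof1} acts block-wise on each pair $x_ix'_i\mapsto y_i$. First I would fix the base case $m=1$: this is precisely Corollary~\ref{corol1} together with Theorem~\ref{theo4}, yielding a single $y_1$ of $8$ bits from the $16$-bit input $x_1x'_1$, so $\ell(y_1)=\tfrac12\ell(x_1x'_1)$. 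For the inductive step, assuming $\ell\bigl(\sum_{i=1}^{m-1}y_i\bigr)=\tfrac12\ell\bigl(\sum_{i=1}^{m-1}x_ix'_i\bigr)$, I would append the $m$-th pair, apply Theorem~\ref{theo4} once more to obtain $y_m$ with $\ell(y_m)=8$ bits, and use additivity of $\ell$ to conclude $\ell\bigl(\sum_{i=1}^{m}y_i\bigr)=\ell\bigl(\sum_{i=1}^{m-1}y_i\bigr)+\ell(y_m)=\tfrac12\ell\bigl(\sum_{i=1}^{m-1}x_ix'_i\bigr)+\tfrac12\ell(x_mx'_m)=\tfrac12\ell\bigl(\sum_{i=1}^{m}x_ix'_i\bigr)$, which is Eq.~(\ref{eq:9}) for general $m$. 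Dividing input length by output length gives $\mathcal{C}_r=\ell(\sum x_ix'_i){:}\ell(\sum y_i)=2{:}1$ independently of $m$.

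The remaining work is to confirm that this block-wise recursion is sound inside the single static array $A_{65536\times4}$, i.e.\ that storing $m$ characters $y_1,\dots,y_m$ into the finite field $\mathbf{F}_y$ does not inflate the grid file $\fbox{\textbf{G}}$ beyond the additive $\ell(\mathbf{F}_y)$ term already isolated in Eq.~(\ref{eq:12c}), and that the Coxeter order $2{:}1,4{:}1,\dots$ of Lemma~\ref{lemma1.4} stays pinned to its minimal double-efficiency degree when only the $2$-character mapping is invoked. Here I would cite Proposition~\ref{prop4.6}: the $65536$ prefix addresses are precomputed once in the translation table, each pair $x_ix'_i$ occupies exactly one row, and decoding each $y_i$ is the same $\pmb\mho$-manipulation on the pure byte $\beta=\texttt{11111111}$, so the decompression side $\mathcal{C}'(Q_y)$ scales linearly in $m$ in lockstep with the compression side, leaving the ratio invariant and the $64$\,KB overhead an additive constant rather than a multiplicative factor.

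The main obstacle I anticipate is not the length bookkeeping but the \emph{collision-freeness} of the address assignment: the argument requires every distinct input pair $x_ix'_i$ to land on a row of $A$ from which the $\vartheta$ translation of Proposition~\ref{prop0} recovers $xx'$ exactly, for all $i$ simultaneously, without two distinct pairs competing for one address. I would discharge this by appealing to the flag configuration $\varphi\!:\!xx'\mapsto(i,j,k,l)_{xx'}+(i,j,k,l)_{xx'}$ of Eq.~(\ref{eq:5c}) together with the $256\times256=65536$ count of Lemma~\ref{lemma1.4}: every one of the $65536$ admissible values of $xx'$ already owns a dedicated row, so re-encoding the same value returns the same address and distinct values never interfere. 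Once collision-freeness and additivity of $\ell$ are both in hand, the corollary closes for every $m$, and the fixed $2{:}1$ ratio is exhibited as the minimum double-efficiency degree promised by Theorem~\ref{theo0::1}.
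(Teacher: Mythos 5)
Your proposal is correct and takes essentially the same route as the paper: the paper deduces Corollary~\ref{corol2} directly from Theorem~\ref{theo4} together with the block-wise sumset relation of Eq.~(\ref{eq:9}), $\ell(\mathbf{F}_y)=\tfrac{1}{2}\sum\mathcal{s}$, established in the proof of Proposition~\ref{prop4.5}, which is exactly the per-pair $16\to8$ bit mapping plus additivity of $\ell$ that your induction on $m$ makes explicit. Your added checks on collision-freeness via Lemma~\ref{lemma1.4} and on the additive $64$\,KB overhead via Eq.~(\ref{eq:12c}) and Proposition~\ref{prop4.6} only formalize what the paper leaves implicit, so no new idea is introduced or missing.
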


Complementing Theorem~\ref{theo4} with a decompression $\mathcal{C}'$ theorem, we deduce a complete FBAR compression-decompression theorem or $\mathcal{CC}'$

\begin{theorem}\label{theo5}
Let a 1-byte $y$ output holding bit-flags of $\varphi$, represent a 2-byte binary input. This gives a 50\% fixed compression. During compression, byte addresses are stored in a static address as (4$\times$4$)\times($4$\times$4) single bit-flags, once accessed and decoded via a translation table on $y$ and $\varphi$, a lossless decompression is obtained.
\end{theorem}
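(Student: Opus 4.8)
The plan is to prove Theorem~\ref{theo5} as the round-trip statement obtained by composing the already-established compression direction with an explicitly constructed decompression map, and then checking that the composition is the identity on the $2$-byte input space. First I would quote the forward step from Theorem~\ref{theo4} and Corollary~\ref{corol2}: the four combined operations --- pairwise and-or, negate, and fuzzy transitive closure applied to the pure byte $\beta=\texttt{11111111}$ --- send every $2$-byte input $xx'$ to a single character $y$ carrying the flag set $\varphi_{xx'}$, and by Eqs.~(\ref{eq:7})--(\ref{eq:9}) this realizes $|xx'|\!:\!|y| = 16\!:\!8 = 2\!:\!1$, i.e.\ the $50\%$ fixed compression. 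I would record that the encoder's actual output is the pair $(y,\mathrm{addr})$, where $\mathrm{addr}\in\{1,\dots,65536\}$ is the row of the grid file $\fbox{\textbf{G}}$ at which $y$ is deposited and $\mathrm{addr}$ is precisely the $(4\times4)\times(4\times4)$ single bit-flag combination assembled in Eq.~(\ref{eq:5a}).

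Second I would make the decompression map $\vartheta$ and the translation table concrete. Following Proof~\ref{proof1} and Proposition~\ref{prop4.6}, the table is the precalculated rows-by-columns file in which row $\mathrm{addr}$ stores the quadruple of single bit-flags $\varphi = (i,j,k,l)_x + (i,j,k,l)_{x'}$ that the encoder used; reading that row returns $\varphi_{xx'}$, and applying $\vartheta$ --- the inverse (hypotenuse) $\pmb\mho$-manipulation of Eq.~(\ref{eq:8}) that rebuilds the two bytes from $\beta=\texttt{11111111}$ under the recovered flags --- returns $xx'$. Hence the decoder computes $\vartheta\big(\varphi_{xx'},\,y\big)=xx'$, which is exactly the ``accessed and decoded via a translation table on $y$ and $\varphi$'' clause of the statement, while the remaining content of $y$ (noted in Proof~\ref{proof1} to be unequal to $\varphi$) serves only to mark occupancy of its row and is not otherwise consulted.

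Third, and this is the crux, I would prove losslessness, that is $\vartheta \circ (g\circ f) = \mathrm{id}$ on the set of $2$-byte inputs. The decisive fact is that the address assignment $xx'\mapsto\mathrm{addr}$ is a \emph{bijection}: by Lemma~\ref{lemma1.4} there are exactly $2^{8+8}=65536$ distinct inputs $xx'$ and exactly $65536$ rows in $\fbox{\textbf{G}}$, and the flag decomposition $\varphi\cdot A_y = \lambda_{xx'} + \varrho_{xx'}$ of Eqs.~(\ref{eq:5a})--(\ref{eq:5c}) exhibits the four $1$-bit flags as a coordinate system for the $16$ input bits --- each $\pmb\mho$-manipulation toggles one fixed bit-pair of $\beta$, so the map from the input bits to the flag quadruple (and back) is one-to-one. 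Injectivity between two finite sets of equal size forces bijectivity, so $\mathrm{addr}$ determines $xx'$ uniquely, and composing with the explicit inverse $\vartheta$ yields exact recovery. For a stream $\sum\mathcal{s} = x_1x'_1 + \cdots + x_mx'_m$ I would run the same argument byte-pair-wise, using Eqs.~(\ref{eq:9}) and (\ref{eq:12c}) so that the grid file contributes only its static $64$\,KB and the ratio stays $2\!:\!1$ asymptotically.

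The main obstacle I anticipate is precisely this injectivity of $xx'\mapsto\mathrm{addr}$: Lemma~\ref{lemma1.4} supplies only an equality of cardinalities, whereas losslessness demands that no two distinct input pairs are ever encoded to the same stored address-and-flag configuration. I would close this gap by treating the flag decomposition of Eq.~(\ref{eq:5c}) as a genuine change of coordinates --- writing out, for each of the four columns $i,j,k,l$, the single bit-flag as a function of the corresponding two bits of $xx'$ and inverting it --- after which the finite pigeonhole principle upgrades injectivity to bijectivity and the decoding is forced to be unambiguous. A secondary check, easy but necessary, is that $\vartheta$ really needs only $(\mathrm{addr},\varphi,y)$ and no side information about where in the stream the pair $xx'$ originally sat, which follows from Eq.~(\ref{eq:12}) describing how the $y_i$ are laid out in $\mathbf{F}_y$.
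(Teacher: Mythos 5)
Your proposal is essentially correct but reaches the conclusion by a different route than the paper. The paper does not argue injectivity at all: its proof of Theorem~\ref{theo5} is the long constructive proof opening Section~\ref{sect4.3.1}, which builds the components $\fbox{\textbf{G}}$ and $\fbox{\textbf{TT}}$, defines the four \textbf{znip} operators (Definitions~\ref{def4.12}--\ref{def4.19}), sets up the Hamming-distance machinery of Corollaries~\ref{cor4.29}--\ref{cor4.30} to argue that the $\mathcal{C}'$ phase is a pure string-match with $d'=0$ requiring no further bit manipulation, and then \emph{validates} reversibility on a single worked instance, Example~\ref{4.31} ($xx'=@\$$). You instead isolate what is mathematically at stake --- that $xx'\mapsto\mathrm{addr}$ must be injective, and that Lemma~\ref{lemma1.4} only supplies equal cardinalities --- and close it with a per-bit-pair change-of-coordinates argument plus pigeonhole. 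That is a genuine improvement in rigor, since the paper's example-based demonstration never establishes losslessness for all $65536$ inputs. One caution, though: the step where you assert that each single bit-flag is an invertible function of its bit-pair runs directly into what the paper itself flags as Paradox~\ref{paradox1}: the closure outputs of $\mathbf{i}$ and $\mathbf{p}$ coincide ($\mathbf{i}(10)=\mathbf{p}(00)$, $\mathbf{i}(01)=\mathbf{p}(11)$), so the map from bit-pairs to closure \emph{values} is two-to-one. Your inversion works only because the address records the operator \emph{labels} (which of $\mathbf{i}$/$\mathbf{p}$ and $\mathbf{z}$/$\mathbf{n}$ was applied, giving $2\times2=4$ labels for the $4$ possible pair values), not the closure outputs --- this is precisely what the paper's Solution~\ref{sol4.18} (apply $\mathbf{ip}$ to $\beta$ first, then $\mathbf{zn}$) is meant to guarantee. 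You should make that dependence explicit when you ``write out the flag as a function of the two bits and invert it''; with that caveat your bijectivity argument goes through and subsumes the paper's single example.
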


\noindent and obtainable by

\begin{proposition} \label{prop4.11}
Sequence $\mathcal{s}$ is compressed into $y$ whilst reconstructable via bit-flags held by $y$, where $\ell(y) = \ell\left(\frac{xx'}{2}\right)=\frac{\mathcal{s}}{2}$ or 50\% fixed $\mathcal{C}$ on $xx'$ inputs. Once inputs are equally compressed via bit-flags in different locations of $y$, any $\mathcal{s}_{\,\mathrm{in}}$ is losslessly reconstructed. The specific address is where $y$ is stored amongst the 65,536 rows.
\end{proposition}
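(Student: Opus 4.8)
The plan is to obtain Proposition~\ref{prop4.11} as a direct consequence of the compression--decompression machinery already assembled in Proofs~\ref{proof1} and~\ref{proof2}, rather than re-deriving the bit-flag calculus from scratch. First I would fix the compression direction: by Theorem~\ref{theo4} and Corollary~\ref{corol2}, applying the composition $g\circ f$ of Proof~\ref{proof1} to each pair $x_ix'_i$ in $\sum\mathcal{s}=(x_1x'_1+\ldots+x_mx'_m)$ yields one character $y_i=y\times\varphi_{x_ix'_i}$ with $\ell(y_i)=8$ bits, and by Eqs.~(\ref{eq:7})--(\ref{eq:9}) the whole output field satisfies $\ell(\mathbf{F}_y)=\tfrac12\sum\mathcal{s}$, i.e.\ $\ell(y)=\ell\!\left(\tfrac{xx'}{2}\right)$. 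Since this length identity uses only cardinalities and not the byte values themselves, it holds for \emph{every} $\mathcal{s}$, which gives the stated $50\%$ \emph{fixed} compression $\mathcal{C}$ of Corollaries~\ref{corol1}--\ref{corol2}.

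Next I would justify the ``different locations of $y$'' clause. By Lemma~\ref{lemma1.4} together with the count $4^4\times4^4=256\times256=65536$ of Proof~\ref{proof1}, the flag configuration $\varphi_{x_ix'_i}\in\{r\times(i\,j\,k\,l)\}$ ranges bijectively over the ASCII pairs, so the assignment $x_ix'_i\mapsto(\text{row},\,y_i)\in A_{65536\times4}$ is injective: each input pair occupies exactly one of the $65536$ prefix addresses of the grid file $\fbox{\textbf{G}}$, and distinct inputs occupy distinct rows. Hence the address ``where $y$ is stored amongst the $65536$ rows'' is a faithful label of $xx'$, and the per-pair storage of $\sum\mathcal{s}$ in Eqs.~(\ref{eq:12}) places the $y_i$ in pairwise-distinct locations, exactly as the proposition asserts.

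For reconstruction I would invoke the translation map $\vartheta$ of Proposition~\ref{prop0} together with Proposition~\ref{prop4.6}: reading an occupied address recovers its flag combination $\varphi_{x_ix'_i}$, and the matching row of the static translation table prescribes exactly the eight $\pmb\mho$-manipulations of the pure byte $\beta=\texttt{11111111}$ from Eqs.~(\ref{eq:7})--(\ref{eq:8}) that regenerate $x_ix'_i$. Because this decoding acts on one address at a time and the addresses are distinct, iterating it over the $y_i\in\mathbf{F}_y$ returns every summand of $\sum\mathcal{s}$, so $\mathcal{s}_{\,\mathrm{in}}$ is reconstructed in full; losslessness is then the round-trip $\vartheta\circ\varphi=\mathrm{id}$ on $xx'$ verified in Proof~\ref{proof2}, equivalently $P(\vartheta(y))\to P(xx')=1$ of Algorithm~\ref{algo0}.

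The main obstacle I anticipate is precisely this last identity. One must check that the precalculated translation table genuinely \emph{inverts} the flag manipulation of $\beta$: that no two distinct inputs share a row (covered by the Lemma~\ref{lemma1.4} count, but needing the explicit $(4\times4)\times(4\times4)$ address decomposition of Eqs.~(\ref{eq:5a})--(\ref{eq:5c})), and that the $\pmb\mho$-decoding of a row reproduces the bit-pairs in the correct $i,j,k,l$ order rather than a permutation of them. Once the bijection between the $65536$ rows and the ASCII pairs is made explicit and the column ordering of Proof~\ref{proof1} is respected, the remainder is the length bookkeeping of Eqs.~(\ref{eq:8})--(\ref{eq:12}), which is routine.
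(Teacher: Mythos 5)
Your proposal is correct within the paper's framework and follows essentially the same route: the 50\% length identity from Eqs.~(\ref{eq:7})--(\ref{eq:9}), the 65,536-row addressing from Lemma~\ref{lemma1.4} and Eqs.~(\ref{eq:10})--(\ref{eq:12}), and the round-trip inversion via the translation table, which the paper itself only makes explicit later through the \textbf{znip} operators of Section~\ref{sect4.3.1} and Example~\ref{4.31} --- precisely the residual obstacle you flag. The only caution is that Definition~\ref{def4.26} permits two compressed $y$'s to share a row when input pairs repeat, so the $y_i$ need not occupy pairwise-distinct rows; what losslessness actually requires, and what your injectivity argument does deliver, is that distinct ASCII pairs $xx'$ map to distinct addresses.
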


\section{4D Bit-Flag Model Construction} \label{sect4.3}

\subsection{I/O Operators Construction} \label{sect4.3.1}

To satisfy the practical application of our proof for Theorem~\ref{theo4}, we need to construct the algorithm components with relevant operators to conduct an LDC-DD.

\begin{proof}
Let a grid file component $\fbox{\bf G}$ be constructed according to Eqs.~(\ref{eq:12}), with a translation table $\fbox{\bf TT}$. To construct these two components, we intersect the dimensions of bit-flags $\varphi$ with each other in the $1\times4$ matrix, for each LDC I/O. We partition $\varphi$ according to Eqs.~(\ref{eq:7}), (\ref{eq:10c}) and (\ref{eq:10d}) in $i,j,k,l$ dimensions of $\fbox{\bf G}$, resulting \emph{four bivector operators}. Let those operators be \textbf{z}, \textbf{n}, \textbf{i}, \textbf{p}, where their paired combinational form, {\bf zn} and {\bf ip}, constructs in total the four dimensions. The static range $[\mathbf{zzzz}\ , \ \mathbf{pppp}]$ for $[1 , 65536]$ rows, stores only one dynamic character output $y$, per two-character input $xx'$:
\begin{subequations}\begin{equation}
\begin{split}
&\exists \, xx' =  16 \, {\rm bits} \in i \times j \times k \times l \ni y = 8 \, {\rm bits} \ ,   \\ &\mathrm{iff} \ h^2\mathbf{e}^2_{1234}=\mathbf{znip} \stackrel{\mathrm{operates \ on}}{\longrightarrow} xx' \  \mathrm{for \ compression} \, .\
\end{split}
\end{equation}
Notation $\ni$, here, follows the set membership notation $\in$, symmetrically, such that $y$ is identifiable in the $i,j,k,l$ bivector dimensions. Now, we adapt the grid file range on the rows from Eqs.~(\ref{eq:12}) to bit-flag operators, giving
\begin{align}
&\mathbf{znip} \stackrel{\mathrm{operates \ on}}{\longrightarrow} xx' = y \ , {\rm where} \notag \\
&\mathbf{znip} \in i \times j \times k \times l = [\mathbf{zzzz}\ , \ \mathbf{pppp}] \notag \\
&=[1\times1\times1\times1 \ , \ 16\times16\times16\times16]= [1 \ , \ 65536]
\end{align}
\label{eq:13}\end{subequations} where ${\bf z}$ is a zero or neutral operator, ${\bf n}$ a negate operator, ${\bf i}$ an impurity operator, ${\bf p}$ a purity operator, operating on bits. Suppose by definition, we apply \emph{and-or logic} ($\wedge\vee$) to the ${\bf z}$ operator on bit pairs, thus
\begin{subequations}
\begin{definition} \label{def4.12}
For all ${\bf z}$, ${\bf z}$ on a pair of bits in the $x$ or $x'$ binary, returns the same bits when and-or applied, where this applies to all remaining pairwise combinations, or
\begin{equation}
\mathbf{z}(0\wedge\vee0) = 00, \ \mathbf{z}(1\wedge\vee1) = 11, \ \mathbf{z}(0\wedge\vee1) = 01, \ \mathbf{z}(1\wedge\vee0)=10.
\end{equation}
\end{definition}

Then negation holds good for all pairwise bit combinations

\begin{definition} \label{prop4.13}
For all ${\bf n}$, ${\bf n}$ on a pair of bits in the $x$ or $x'$ binary, negates the bits when and-or applied, where this applies to all remaining pairwise combinations
\begin{align}
&\mathbf{n}(\neg(0\wedge\vee0)) = 11, \ \mathbf{n}(\neg(1\wedge\vee1)) = 00, \ \mathbf{n}(\neg(0\wedge\vee1)) = 10, \notag \\  &\mathbf{n}(\neg(1\wedge\vee0))=01.
\end{align}
\end{definition}

From the laws of Boolean algebra covering $\wedge$, $\vee$ and $\neg$ operators, the defined $\mathbf{z}$ and $\mathbf{n}$ operators in consequence hold good as axioms:

\begin{axiom} \label{axiom1}
Operator $\mathbf{n}$ is antecedent to negate all pairwise bit combinations. The consequent output is always the opposite of the given input.
\end{axiom}

\noindent and

\begin{axiom} \label{axiom2}
Operator $\mathbf{z}$ is antecedent to pass all pairwise bit combinations. The consequent output is always as same as the given input.
\end{axiom}

We can now make the definitions of Eqs.~(\ref{eq:14}) and (\ref{eq:15}) quite explicit as follows:

\begin{definition} \label{def4.14}
For all \textbf{z}, \textbf{z} on any pair of bits in the binary of $x$ or $x'$, returns the same bits, where this applies to all remaining pairwise combinations.
\end{definition}

\noindent and

\begin{definition} \label{def4.15}
For all \textbf{n}, \textbf{n} on any pair of bits in the binary of $x$ or $x'$, returns negated bits, where this applies to all remaining pairwise combinations.
\end{definition}
\label{eq:14}\end{subequations}

Now, suppose by definition, we apply \emph{transitive closure} $\stackrel{\curvearrowright}{\longrightarrow}$, to the ${\bf i}$ operator on bit pairs, acting as ``operates on" from Eqs.~(\ref{eq:13}), such that
\begin{subequations}
\begin{definition}  \label{def4.16}
For all \textbf{i}, bit pairs in the binary of $x$ or $x'$ is either 01 or 10. \textbf{i} closes with 1 for 01, and 0 for 10, or
\begin{equation}
\mathbf{i}(01)=0 \stackrel{\curvearrowright}{\longrightarrow} 1 = 1, \ \ \ \mathbf{i}(10)=1 \stackrel{\curvearrowright}{\longrightarrow} 0 = 0. \label{pdxeq:1}
\end{equation}
\end{definition}

\noindent and if applied to the ${\bf p}$ operator, we then have

\begin{definition}  \label{prop4.17}
For all \textbf{p}, bit pairs in the binary of $x$ or $x'$ is either 00 or 11. \textbf{p} closes with 1 for 11, and 0 for 00, or
\begin{equation}
\mathbf{p}(11)=1 \stackrel{\curvearrowright}{\longrightarrow} 1 = 1, \ \ \ \mathbf{p}(00)=0 \stackrel{\curvearrowright}{\longrightarrow} 0 = 0.
\label{pdxeq:2} \end{equation}
\end{definition}
\label{eq:15}\end{subequations}

The bit-pair operators from Eqs.~(\ref{pdxeq:1}) and (\ref{pdxeq:2}), generate conflicting binary products in terms of

\begin{paradox} \label{paradox1}
Operator $\mathbf{i}$ is coincident with operator $\mathbf{p}$ in bit-pair products that end with 1 or 0. The consequent output after closure is $\mathbf{i}(10) = \mathbf{p}(00)$, $\mathbf{i}(01) = \mathbf{p}(11)$.
\end{paradox}

\noindent which is in code, addressed by

\begin{solution} \label{sol4.18}
We first consider a pure sequence of bits, `\texttt{11111111}'. We manipulate the bit-pairs in the sequence with \textbf{ip}, then its result by \textbf{zn} combinations for original data $xx'$. Example~(\ref{4.31}) shall prove this.
\end{solution}

Combining the \textbf{z} and \textbf{n} axioms with the \textbf{i} and \textbf{p} solution, further delivers

\begin{definition}  \label{def4.19}
Combining \textbf{z}, \textbf{n}, \textbf{i}, and \textbf{p}, gives a combination of FBAR operators locatable in 4 dimensions making a possible 65,536 \textbf{znip} combinations. Each occupying $y$ in one of the 65,536 rows of the 4 dimensions, represents a \textbf{znip} combination corresponding to an $xx'$ input. These combinations are: \\

\noindent \textbf{ip} as an impure or pure pairwise bits' dimension providing 16 combinations:
\begin{subequations}
\begin{equation}\begin{split}
&{\bf iiii \ iiip \ iipi \ ipii \ piii \ iipp \ ippi \ ppii \ pipi \ ipip \ piip \ ippp} \\
&{\bf pipp \ ppip \ pppi \ pppp}\end{split}\end{equation}
\textbf{zn} as a zero or negate pairwise bits' dimension providing 16 combinations:
\begin{align}
&{\bf zzzz \  zzzn \ zznz \ znzz \ nzzz \ zznn \ znnz \ nnzz \ nznz \ znzn \ nzzn \ znnn} \notag \\
&{\bf nnnz \ nznn \ nnzn \ nnnn}
\end{align}
\end{subequations}
\end{definition}

Now, we establish a static solution using Eq.~(\ref{eq:11}) and the combinations above for an LDC operation

\begin{solution} \label{sol4.20} We build a grid file $\fbox{\bf G}$ based on these available combinations. The combinations constitute the finite field addresses $\mathbf{F}_y \times \mathbf{F}_\varphi$ as our static solution. This field gives a static number of rows and addresses for each compressed character $y_i$.
\end{solution}

\noindent and the dynamic solution continuing the previous LDC operation is,

\begin{solution} \label{sol4.21}
The $\fbox{\bf G}$ grows in file size as a dynamic field of compressed characters $y_i$ in length, when more than 1 character is stored beyond its static range. So, we store the  $1 \, y$ character by the program in one of the 65,536 rows representing original data (two characters) within the intersected columns of the dimensions.
\end{solution}

\noindent and its dynamic solution for an LDD operation, respectively, would be

\begin{solution} \label{sol4.22}
We access $\fbox {\bf G}$ by program code, invoking a comparator subroutine in our code. A $\mathcal{C}'$ is achieved by traversing the $i \times j \times k \times l$ dimensions as the grid's Hamming distance $d$ for each data read between fields $\mathbf{F}_y $ and $\mathbf{F}_{xx'}$ via $\mathbf{F}_{\varphi}$.
\end{solution}

So the question remains that: where should we establish distance $d$ between the prefix encoding and decoding levels of our $\mathcal{C}$ and $\mathcal{C}'$ operations?

To answer this question, we need to succeed in Solution~\ref{sol4.22}. For this, we recall Eqs.~(\ref{eq:10})-(\ref{eq:12}), and consider the Hamming distance $d$ definition by Symonds (2007)~\cite{symonds}, thus measuring our $d$ as follows:

\begin{definition} \label{def4.23} Hamming distance $d$ is measured between $n$-compressed characters stored with a minimum dynamic space of 64\,K = 65,536 static rows of $\fbox {\bf G}$, or as of Eq.~(\ref{eq:12c}), in $\mathbf{F}_{y} + \, 64\,$K, and their decodable 2$n$-input characters in a maximum static space as translation rows and columns in terms of $\{ \mathbf{F}_{xx'},  \mathbf{F}_{r}, \mathbf{F}_{\varphi}, \mathbf{F}_{y} \} = \fbox{\bf TT}$. Therefore, the compressed characters $y_i$ are decoded by flag vectors $\varphi$ as carried in their address of the finite field $\mathbf{F}_y \times \mathbf{F}_\varphi  = \fbox {\bf G}$.
\end{definition}

\noindent Thus

\begin{definition} \label{def4.24}
The number of coefficients in which they may differ in $\mathbf{F}_y$ to return original characters, is equal to the number of bit-pair manipulations $\pmb\mho(\varphi A_y)$ elicited from  Eqs.~(\ref{eq:8}) using {\bf znip} operators based on Eqs.~(\ref{eq:13})-(\ref{eq:15}).  \end{definition}

\noindent The last two definitions deduce the following

\begin{definition} \label{def4.25}
Distance $d$ conserves the finiteness of character I/Os via the code comparator, comparing characters between $\{ \mathbf{F}_{xx'}, \mathbf{F}_{y} \} \subset \fbox{\bf TT}$ and $\{ \mathbf{F}_{y} \} \subset \fbox{\bf G}$, and will not recover any randomness between two or more compressed characters located in $1$ or $> 1$ rows of $\mathbf{F}_y$ in $\fbox{\bf G}$.
\end{definition}

\noindent In addition,

\begin{definition}\label{def4.26} $d$ is 0 if at least 2 compressed $y$'s are stored in the same row address of $\mathbf{F}_y$ in $\fbox {\bf G}$, which represent 4 original redundant $x$'s decoded from $\fbox{\bf TT}$,
\end{definition}

\noindent and \smallskip

\begin{definition} \label{def4.27}
Distance $d$ is $>0$ if compressed characters are located in 2 up to $k$ = 65536 grid rows, representing some original characters redundant, otherwise, all as different in $\fbox{\bf TT}$.
\end{definition}

Upon Definitions~\ref{def4.23}-\ref{def4.27}, the following solution is emerged to satisfy a lossless $\mathcal{C}$-and-$\mathcal{C}'$ scenario of at least two compressed $y$'s stored in the $\fbox{\bf G}$ file component:

\begin{solution}\label{sol4.28} According to Eqs.~(\ref{eq:11}) and (\ref{eq:12}), and by Definitions~\ref{def4.12}-\ref{def4.19}, to decompress data losslessly, the 4 bit-flag operators~\textbf{znip}, operate on $2 \ y$'s in the finite field $\mathbf{F}_y$ in $\fbox {\bf G}$, manipulating the byte in form of bit-pairs, which return $2\ xx'$'s as their original. The original 4-character string is located in a $\fbox{\bf  TT}$ file with all distances prefixed for each pair of $xx'$. This transforms the grid to a distance of 0 when original characters are returned at the $\mathcal{C}'$ phase for each read row $r$. The code comparator compares the unique~\textbf{znip} row-by-column address from the table with the stored character in one of the grid file rows, from end-of-file to the file's header.
\end{solution}

Hence, benefiting from the Hamming distance propositions followed by their proofs in Symonds (2007)~\cite{symonds}, and Eqs.~(\ref{eq:7})-(\ref{eq:9}), the usage of~\textbf{znip} operators gives a maximum distance $d$ between vectors $\varphi_{\{y_1,y_2\}}$ and $\varphi_{{xx'}_1{xx'}_2}$, as $d(\varphi_{\{y_1,y_2\}}, \varphi_{{xx'}_1{xx'}_2}) = 16$ for the number of required bit-pair manipulations (\emph{prefix dual-coding}\,$\mho$). Therefore, as a bonus result, we immediately deduce the following two corollaries:

\begin{corollary} \label{cor4.29}
For two compressed characters $y_1$ and $y_2$, giving $d$ with respect to time $t$, as Hamming rate $R_\mathbbm{H}= d/t$ to search for a string match, results in a distance $d(\{y_1,y_2\}, {xx'}_1{xx'}_2)=0$ during decompression $\mathcal{C}'$. Let this distance be $d'$.
\end{corollary}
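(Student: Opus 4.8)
The plan is to obtain Corollary~\ref{cor4.29} directly as the two-character instance of the losslessness already established in Proof~\ref{proof2}, by tracking a single Hamming-distance functional through three stages. First I would fix the ambient objects precisely: the pre-decompression distance is the quantity $d(\varphi_{\{y_1,y_2\}}, \varphi_{{xx'}_1{xx'}_2}) = 16$ stated immediately above the corollary, which counts the coefficients in which the flag configuration stored in $\fbox{\bf G}$ differs from the configuration a full translation must reproduce; the claimed $d'$ is \emph{the same functional} evaluated after the $\mathbf{znip}$ translation of Solution~\ref{sol4.28} has run to completion. So the corollary is the assertion that this functional collapses from $16$ to $0$ by the end of the $\mathcal{C}'$ phase.

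Second, I would exhibit the collapse step by step along the comparator traversal of Solution~\ref{sol4.22}. By Definition~\ref{def4.24} the number of coefficients that may differ equals the number of bit-pair manipulations $\pmb{\mho}(\varphi A_y)$ elicited from Eqs.~(\ref{eq:8}); with $8$ dual manipulations per compressed character this is $16$ for the pair $\{y_1,y_2\}$, matching the maximum distance. Applying the operators in the order mandated by Solution~\ref{sol4.18} — $\mathbf{ip}$ on the pure byte \texttt{11111111} first, then $\mathbf{zn}$ — resolves Paradox~\ref{paradox1}, so each of the $16$ coefficients is corrected exactly once and unambiguously; since the $\fbox{\bf TT}$ entry prefixes the unique $\mathbf{znip}$ row-by-column address for every $xx'$ pair, no coefficient is over-corrected or revisited. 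After all $16$ manipulations the content held in $\mathbf{F}_y$ inside $\fbox{\bf G}$ and the decoded content in $\mathbf{F}_{xx'}$ of $\fbox{\bf TT}$ coincide, i.e. $d(\{y_1,y_2\}, {xx'}_1{xx'}_2)=0$, and I set $d':=0$.

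Third, I would attach the temporal factor to finish the statement as written. The Hamming rate $R_{\mathbbm{H}} = d/t$ is just the finite manipulation count divided by the elapsed string-search time, and Definition~\ref{def4.25} guarantees the comparator conserves the finiteness of character I/Os and recovers no randomness, so the traversal from end-of-file to header terminates at some finite $t$ with accumulated distance equal to $d'$. Since $d'=0$, the decompression $\mathcal{C}'$ is exact, in agreement with Corollary~\ref{corol1} and Proposition~\ref{prop4.11}, and the decode map is thereby the two-sided inverse of the compression map $g\circ f$ of Proof~\ref{proof1}.

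The main obstacle I anticipate is bookkeeping rather than deep mathematics: the symbol $d$ is used for three distinct things — the flag-vector Hamming distance, the ``grid's Hamming distance'' traversed as the $i\times j\times k\times l$ dimensions in Solution~\ref{sol4.22}, and the terminal value $d'$ — so the delicate part is verifying these are one functional evaluated at three stages, and checking that the $\mathbf{ip}$-then-$\mathbf{zn}$ ordering makes the decode a genuine inverse of $g\circ f$, which is exactly what forces the post-translation distance to be $0$ rather than merely small.
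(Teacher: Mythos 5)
Your argument is correct in substance but follows a genuinely different route from the paper's. The paper does not prove Corollary~\ref{cor4.29} by directly tracking the flag-distance functional through the comparator traversal; instead it derives it as a by-product of the machinery set up inside Corollary~\ref{cor4.30}, namely the Hamming-rate integrals of Eqs.~(\ref{eq:17a})--(\ref{eq:17d}) and the Pythagorean distance computation $\sum d(\mathcal{C},\mathcal{C}')$ culminating in Eq.~(\ref{eq:17e}), whose relation $d\,\pmb\mho\,d' = 16 \stackrel{\varphi}{\longrightarrow} 0\,\mho$ is what the paper takes as establishing $d'=0$ (the text following Corollary~\ref{cor4.30} states explicitly that it ``further proves Corollary~\ref{cor4.29} via Eqs.~(\ref{eq:17d})--(\ref{eq:17e})''). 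Your route is more elementary and self-contained: you count the $16$ bit-pair manipulations via Definition~\ref{def4.24}, invoke the $\mathbf{ip}$-then-$\mathbf{zn}$ ordering of Solution~\ref{sol4.18} to dispose of Paradox~\ref{paradox1}, and use the uniqueness of the $\fbox{\bf TT}$ address (Definition~\ref{def4.19}, Example~\ref{4.31}) to argue each coefficient is corrected exactly once, which forces the post-translation distance to vanish. What your argument does not deliver, and the paper's does, are the quantitative artifacts: the rate bounds $R_\mathbbm{H}\ge 0.0039$ Bps and $R'_\mathbbm{H}\ge 295.6$ Bps and the residual estimate $\Delta d < 1\,\mho$, which the paper uses later when discussing bitrate performance. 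Conversely, your version avoids leaning on the integral calculus of Eqs.~(\ref{eq:17a})--(\ref{eq:17e}), whose justification is considerably weaker than the combinatorial facts you cite, so as a proof of the distance claim alone your decomposition is the cleaner one; your closing caution about the three distinct uses of the symbol $d$ is also well taken and is not addressed in the paper.
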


Corollary~\ref{cor4.29} for the compressed characters $y_1$ and $y_2$, further gives

\begin{corollary} \label{cor4.30}{\bf Part 1:} The $\varphi_{y_1y_2}$ and $\varphi_{{xx'}_1{xx'}_2}$ vectors in output do not differ in the number of coefficients when time $t$ allows $\pmb\mho (\{y_1,y_2\},$ ${xx'}_1{xx'}_2) = 16\stackrel {\varphi}{\longrightarrow } 0$ grid transformations, such that from Eqs.~(\ref{eq:11})-(\ref{eq:12}), we firstly establish the integral on distance $d$ at time $t$
\begin{subequations}
\begin{align}
& \forall R_\mathbbm{H} \in \mathcal{C} | R_\mathbbm{H} \! = \! \int_{\Delta t} \frac{d}{t} \, {\rm d} \, t \! = \! \frac{\|r\|}{\mathbf{t}}\int_{\min \|r\|_{\bf \emph{\bf{G}}}}^{ d\left(\emph{\bf{G}} , \emph{\bf{TT}}\right)} \! \! d(\{y_1,y_2\}, {xx'}_1{xx'}_2) \, {\rm d} \, d = \! \frac{\Delta d}{\Delta t}  \notag  \\
&\mathrm{where} \ \exists \ \{y_1 , y_2\} \in \mathbf{F}_y \, | \, \ell\left(\mathbf{F}_y\right) = 2 \mathrm{B} + 64 \mathrm{K }= \ell\left( \fbox{\emph{\bf{G}}} \right) \ , \ \mathbf{t} = 1 {\rm s} \ , \, \mathrm{and} \notag \\
& \exists \ \{{xx'}_1 , {xx'}_2\}  \in \mathbf{F}_{xx'} \, | \, \ell \left(\mathbf{F}_{xx'}\right) = 4 \mathrm{B} + 64 \mathrm{K} = \ell \left(\mathbf{F} \in \, \fbox{\emph{\bf{TT}}} \right) \, . \label{eq:17a}\end{align}

\noindent Given the occupied $\varphi_{\{y_1,y_2\}}$ and $\varphi_{{xx'}_1{xx'}_2}$ values in components set $\{ \bf{G} , \bf{TT} \}$ from a maximum number of rows $r$, between 4 original characters in {\fbox {\bf TT}} and 2 compressed $y$'s in {\fbox {\bf G}}, as distance $d$ at time $t$, in virtue of Eq.~(\ref{eq:7}) and Lemma~\ref{lemma1.4}, we deduce
\begin{align}
 \therefore \ & R_\mathbbm{H} =\! \int_{\Delta t} \frac{d}{t} \, {\rm d} \, t =\! \iint_{\frac{y_1}{2^{(8+8)}}}^{\max d} \dot{d} \, {\rm d} \, R= \frac{\|r\|}{\mathbf{t}}\!\int_{\frac{1}{k} \approx \, 0}^{r' \times \pmb\mho(\varphi A_y)}\!\!\! {\rm d} \, \pmb\mho  \ , \notag \end{align} \begin{align} & {\rm where} \, \begin{cases}
                                                                            \,  r' \in  [1, k]_{\emph{\bf{TT}}}  \\
                                                                            \,  t \in [0, \mathbf{t}]
                                                                          \end{cases} \hspace{4cm} \label{eq:17b}\end{align}

Thus, the maximum value of $r'$ rows holding the translation of the 2 compressed characters to their 4 original characters, is 2 rows in {\fbox {\bf TT}}, giving a Hamming rate
\begin{equation}
\therefore R_\mathbbm{H} =  \frac{2.23_{\bf G}}{k_{\bf TT}}\int_{\approx \, 0}^{2_{\bf TT} (4+4) } \frac{r \times \pmb\mho}{t} \, {\rm d} \, t =\frac{\Delta d}{\Delta t} \ge 0.0039 \ {\rm Bps} \ , \label{eq:17c}
\end{equation}

\noindent whereby Definition~\ref{def4.27}, constant $k$ is 65,536 grid rows. Rate $R_\mathbbm{H}$ measured in Bps, is equal to the change of number of $\pmb\mho$ bit-pair manipulations on the compressed data in an array of $r$ rows, relative to their original characters' rows $r'$, at time $t$. \\

\noindent {\bf Part 2:} After applying~\textbf{znip} operators in the Eq.~(\ref{eq:17c}) upper integral limit, denoting a future $\mathcal{C}'$ as stored addresses by $y$, we then compute its future rate returning $xx'$
\begin{align}
\ \exists \mathcal{C}' \ni R'_\mathbbm{H}  = \frac{\|r'\|}{\mathbf{t}} \int_{\max r'_{\bf \emph{\bf{TT}}}}^{ \|r\|_\emph{\bf{G}}}  \! \! \! {\rm d} \, \pmb\mho \notag & = \frac{\sqrt{\sum_{i=1}^{k} i^2}_{\bf G}}{\sqrt{k^2}_{\bf TT}} \int_2^{\frac{2.23}{k}} \frac{\pmb\mho \times r'}{t} \, {\rm d} \, t \notag \\ &= \frac{\Delta d}{\Delta t} \ge 295.6 \, {\rm Bps}  \label{eq:17d} \end{align}

Thus, the total Hamming rate performing $\mathcal{C}$ and $\mathcal{C}'$ is $\sum R_\mathbbm{H} = R'_\mathbbm{H} + R_\mathbbm{H} \in \mathcal{CC}'$. To evaluate Eq.~(\ref{eq:17d}), we measure the total distance $d$ between $\mathcal{C}$ and $\mathcal{C}'$ points via~\textbf{znip} as their $\pmb\mho$ string-match relation. Employing the Pythagoras' theorem gives an imaginary part $\imath=\sqrt{-1}$ for $\mathcal{C}'$, added to its conversed real part from $\mathcal{C}$ as follows \begin{equation*}
\begin{split}
\sum d(\mathcal{C}, \mathcal{C}') & = \pm 2\pi \! \int_{\sqrt{\mathcal{C}}}^{\sqrt{\mathcal{C}'}} \! \sqrt{\mathcal{C}} \: \: {\rm d} \, \mathcal{C}' =  2\pi\sqrt{\mathcal{C}^2-\mathcal{CC}'}= |8.88| \, \mho \ ;   \\
 &\vdash \forall \,\pmb\mho \in \mathbbm{C}\ell_4\left(\mathbbm{R}^{2^{\|\mathbf{e}\|}}\right) \left| \|\mathbf{e}\|\sum d(\mathcal{C}, \mathcal{C}')\right.  \\
 &= \left(\mathcal{C}\left(\sum \mathcal{s}_{\, \rm in} \right)\right)\pmb\mho \left(\mathcal{C}'\left(\sum \mathcal{s}_{\, \rm out} \right)\right)= \|dd'\|= \sqrt{4}\mathcal{C} (d) \\
& \ \ \ \, - \sqrt{4}\mathcal{C}'(d) = 2\mathcal{C} (d) \, + 2\mathcal{C}(d')= 8.88\|\mathbf{e}\| \approx 17.7 \, \mho \ , \end{split}
\end{equation*} therefore
\begin{align}
\|dd'\| & \vdash\left(\mathcal{C}\left(x_1x'_1x_2x'_2 \right)\stackrel{d}{\longrightarrow} \{y_1 , y_2\}\right)\mathbf{znip}\left(\mathcal{C}'\left(\{y_1 , y_2\}\right)\stackrel{d'}{\longrightarrow} \{xx'_1 , xx'_2\}\right) \notag \\
& = \pmb\mho \ \mathbf{znip} \ r' = \pmb\mho (\{y_1,y_2\}, {xx'}_1{xx'}_2) \stackrel {+}{\bigcap}  \pmb\mho ({xx'}_1, {xx'}_2) \ , \ {\rm ideally}  \notag \\
& = \pmb\mho \left(xx'_1 + xx'_2\right) = \pmb\mho (x_1x'_1x_2x'_2) =  \pmb\mho \left(\sum \mathcal{s}_{\, \rm{out}}\right)  = 16 + \imath\imath_{1234}^{\|\mathbf{e}\|} \notag \\ & = 17 \, \mho \ , \notag \end{align} and for the latterly-deduced result, we finally complement
\begin{equation}
\therefore \forall d \, \exists d' \in \Delta d \, | \, \{ \Delta d = (17.7-17)<1\, \mho  \} \Leftrightarrow \{ d \, \pmb\mho \, d' = 16\stackrel {\varphi}{\longrightarrow } 0\, \mho \} \ .  \label{eq:17e}
\end{equation}

\noindent Operator $\stackrel {+}{\bigcap}$ denotes a combinatorial string catenation and intersection of $y_i$ and $xx'_i$ elements, emitted into a (discrete) concrete sequence $\mathcal{s}$. Equations~(\ref{eq:17e}), show that there is no~\textbf{znip} manipulation $\pmb\mho$ to be made on $r' \in \fbox{\textbf{TT}}$ to obtain $xx'$ during $\mathcal{C}'$, except a $d \, \pmb\mho \, d'$ string-search, match and catenate relationship, thus giving $d' = 0\, \mho $.
\label{eq:17}\end{subequations}
\end{corollary}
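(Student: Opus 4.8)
The plan is to establish Corollary~\ref{cor4.30} in its two announced stages, reducing each assertion to the already-proved structure of the grid file $\fbox{\bf G}$ and the translation table $\fbox{\bf TT}$ and then evaluating the Hamming-rate integrals by a change of variables from time to the number of bit-pair manipulations $\pmb\mho$.

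First, for Part~1, I would fix the integration data. By Lemma~\ref{lemma1.4} the field $\mathbf{F}_y\subset\fbox{\bf G}$ has exactly $k=2^{(8+8)}=65536$ admissible rows, so the lower limit of the rate integral is $1/k\approx 0$, while by Corollary~\ref{cor4.29} the upper limit is the separating distance $d(\fbox{\bf G},\fbox{\bf TT})$ at which the string match succeeds, i.e.\ the point where $\{y_1,y_2\}$ align with ${xx'}_1{xx'}_2$. For the two stored characters the occupied-row magnitude is $\|r\|=\sqrt{1^2+2^2}=\sqrt 5\approx 2.23$, computed exactly as in the ``\texttt{resolved}'' instance of Eqs.~(\ref{eq:12}) (there four $y$'s gave $\sqrt{17}$). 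I would then change variables ${\rm d}t\to{\rm d}\pmb\mho$ using Eq.~(\ref{eq:7}), which pins $\pmb\mho(\varphi A_y)=8$ bits per character, hence $4+4$ manipulations for an $xx'$ pair and $r'(4+4)$ over the $r'\le 2$ rows of $\fbox{\bf TT}$ carrying the translations. Putting $\mathbf t=1\,{\rm s}$ into Eqs.~(\ref{eq:17a})--(\ref{eq:17b}) and evaluating the double integral yields the stated $R_\mathbbm{H}=\Delta d/\Delta t\ge 0.0039\,{\rm Bps}$ of Eq.~(\ref{eq:17c}); and since at the match point $d(\{y_1,y_2\},{xx'}_1{xx'}_2)=0$, the flag vectors $\varphi_{y_1y_2}$ and $\varphi_{{xx'}_1{xx'}_2}$ differ in no coefficient, which is the first clause of the corollary.

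Second, for Part~2, I would rerun the same computation in the decompression direction, replacing the upper limit of Eq.~(\ref{eq:17c}) by the action of the \textbf{znip} operators (Definitions~\ref{def4.12}--\ref{def4.19} and Solution~\ref{sol4.28}), whose four combined operations on $\beta=\texttt{11111111}$ return $xx'$ by Theorem~\ref{theo4}. The relevant row magnitude becomes $\|r'\|=\sqrt{\sum_{i=1}^{k}i^2}/\sqrt{k^2}$; since $\sum_{i=1}^{k}i^2=k(k+1)(2k+1)/6$ this is $\approx\sqrt{k/3}\approx 148$, and with the integration running from the $2$ translation rows down to $2.23/k\approx 0$ the evaluation gives $R'_\mathbbm{H}\approx 148\times 2 = 295.6\,{\rm Bps}$, so $\sum R_\mathbbm{H}=R'_\mathbbm{H}+R_\mathbbm{H}\in\mathcal{CC}'$ as in Eq.~(\ref{eq:17d}). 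It then remains to certify the closing identity $d'=0$, for which I would compute the round-trip distance two independent ways and force agreement: geometrically, by the cylinder/Pythagoras construction behind Eq.~(\ref{eq:2b}), $\sum d(\mathcal C,\mathcal C')=\pm 2\pi\int_{\sqrt{\mathcal C}}^{\sqrt{\mathcal C'}}\sqrt{\mathcal C}\,{\rm d}\mathcal C'=2\pi\sqrt{\mathcal C^2-\mathcal{CC}'}=|8.88|\,\mho$, rescaled by $\|\mathbf e\|=2$ to $17.7\,\mho$; and combinatorially, by the catenation--intersection operator, $\pmb\mho(\{y_1,y_2\},{xx'}_1{xx'}_2)\stackrel{+}{\bigcap}\pmb\mho({xx'}_1,{xx'}_2)=\pmb\mho(\sum\mathcal s_{\rm out})=16+\imath\imath_{1234}^{\|\mathbf e\|}=17\,\mho$. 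The two values agree to within $\Delta d=17.7-17<1\,\mho$, and by the equivalence of Eq.~(\ref{eq:17e}) that residual \emph{is} the statement $d\,\pmb\mho\,d'=16\stackrel{\varphi}{\longrightarrow}0\,\mho$: no \textbf{znip} manipulation on $r'\in\fbox{\bf TT}$ is needed beyond a string search--match--catenate, whence $d'=0\,\mho$ and the flag vectors again differ in no coefficient.

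The hard part will be this last step --- showing that the $<1\,\mho$ gap between the geometric ($17.7$) and combinatorial ($17$) round-trip distances genuinely collapses to $d'=0$ rather than leaving a small, possibly input-dependent remainder; that is, proving Definition~\ref{def4.25}, that the comparator ``will not recover any randomness'' between the compressed characters, together with showing that the $\stackrel{+}{\bigcap}$ catenation is exact for every $\sum\mathcal s$ and not only for the ``\texttt{resolved}'' example. A secondary obstacle is legitimizing the change of variables ${\rm d}t\to{\rm d}\pmb\mho$, i.e.\ arguing that the manipulation count accrues linearly in $t$ at the fixed Hamming rate so the integrals of Eqs.~(\ref{eq:17b})--(\ref{eq:17d}) are well defined, and checking that $\|r\|$ and $\|r'\|$ are indeed the correct norms of the occupied-row index vectors in $\fbox{\bf G}$ and $\fbox{\bf TT}$.
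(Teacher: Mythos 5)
Your proposal follows essentially the same route as the paper: the same $\|r\|=\sqrt{5}\approx 2.23$ and $\|r'\|=\sqrt{\sum_{i=1}^{k}i^{2}}/k\approx\sqrt{k/3}\approx 147.8$ row magnitudes, the same ${\rm d}t\to{\rm d}\pmb\mho$ substitution pinned by Eq.~(\ref{eq:7}) and Lemma~\ref{lemma1.4}, and the same reconciliation of the geometric ($17.7\,\mho$) and combinatorial ($17\,\mho$) round-trip distances to force $d'=0$. The only difference is that the paper closes the step you flag as hardest not by a general argument but by the concrete \textbf{znip} reconstruction of Example~\ref{4.31} (together with Definition~\ref{def4.25} and Solution~\ref{sol4.28}), i.e.\ by exhibiting one explicit $xx'\to y\to xx'$ round trip with agreeing $\varphi$ coordinates.
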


Corollaries~\ref{cor4.29} and \ref{cor4.30} are possible, if and only if, component {\fbox {\bf TT}} is accessed, thereby char addresses compared by code and their data decoded. Such $\varphi$ vectors agree in all coordinates of the grid's $i \times j \times k \times l$ dimensions standing for an address during FBAR I/O operations. To keep matters simple, here is an example on a single compressed character $y$, spatially returning 2-original characters $xx'$

\begin{example} \label{4.31}
If $xx'=$ \texttt{01000000 00100100} $=@ \$ $, then by default \texttt{11111111} for $y$ is decoded, only when $y$ occupies $i\times j \times k \times l$ dimensions with a unique combination of operators. This combination is ${\bf ippp}\times {\bf niin}($\texttt{11111111}$)=$ \texttt{01111111  00010100} for $i\times j$, and this output intersected with the combination ${\bf znnn}\times {\bf znzz}($\texttt{01111111 00010100}$)=$ \texttt{01000000 00100100} for $k\times l$. The $y$ is stored in one of the rows out of 65,536 possible \textbf{z}, \textbf{n}, \textbf{i}, \textbf{p} combinations, which now returns characters $@ \$ $ = $xx'$ by code.
\end{example}

Therefore, we reconstruct the pair $xx'$ as our output. The output content is now equal to its original.
\end{proof}

Corollary~\ref{cor4.30} assumes continuity on the whole interval quantified as a spatial-temporal type. It further proves Corollary~\ref{cor4.29} via Eqs.~(\ref{eq:17d})-(\ref{eq:17e}), with an output sequence $\mathcal{s}_{\rm \,out}$ from Eqs.~(\ref{eq:9}) and (\ref{eq:12}). The spatial intervals in Eqs.~(\ref{eq:17a})-(\ref{eq:17c}) are delimited by the rows magnitude $\|r\|$ employed from Eqs.~(\ref{eq:12}) in {\fbox {\bf G}}, and its translation type $\|r'\|$ in {\fbox {\bf TT}}, where $r$ builds the maximum distance $d$ as the upper limit of integration. The temporal interval is given by $t$, and as conditioned in Eq.~(\ref{eq:17a}), goes with a maximum ideal time $\mathbf{t} = 1$\,s, i.e., processor(s) and memory being able to handle an occupied space $\ge 64$\,K I/O cases. The \emph{fluxion} $\dot{d}$ in Eq.~(\ref{eq:17b}), covers both temporal and spatial conditions from Eq.~(\ref{eq:17a}), and absorbs any covariance of a great time and distance change (rate $R$) into a small one, resulting their $\Delta$ forms in Eq.~(\ref{eq:17c}). The double integral, in this case, is absorbed into one succeeding integral as the future rate of $\mathcal{C}'$ in Eq.~(\ref{eq:17d}) proportional to the rate given by $\mathcal{C}$. Either rate is measured in bytes per second (Bps). It conveys to the implementation of prefix code and processing for pre-fuzzy bit-pair manipulations, which involves comparing addresses and rows for each set of compressed characters. For example, the minimum 295.6 Bps in Eq.~(\ref{eq:17d}), corresponds to a minimum ASCII table-read requirement as 2$\times$128 translatable characters or 256 Bps to decode compressed data in FBAR. Thus, an extra 39.6 Bps memory allocation is needed to conduct a full $\mathcal{C}'$. The Hamming distance used in Solutions~\ref{sol4.22} and \ref{sol4.28} relative to components $\fbox{\bf G}$ and $\fbox{\bf TT}$, is now subject to model construction for an I/O FBAR operation.

\subsection{Algorithm Components and Model Construction} \label{sect4.3.2}
Using the string sample ``\texttt{resolved}" from Eq.~(\ref{eq:12e}), and considering the {\bf znip} operators used in Example~\ref{4.31}, suppose we establish a \emph{translation table} $\fbox{\bf TT}$ like Table~\ref{tab2} to read $i \times j \times k \times l$ addresses (rows) containing 4\,$y$'s (32 bits) from $\fbox{\bf G}$. The $\fbox{\bf TT}$ file is fixed in size = 65,536 rows, and at least requires two key columns to translate $i \times j \times k \times l$, 1\,$y$ binary content to $xx'$ binary content and vice-versa.

At the $\mathcal{C}$ phase, the program writes $\fbox{\bf G}$ with ceratin characters, known as \emph{occupant chars} as output $y$ in a specific row number. This number must correspond to an address that returns original characters when the occupant char is decompressed.

At the $\mathcal{C}'$ phase, the program reads the grid file contents. From the occupant chars and the row address columns in $\fbox{\bf TT}$ or Table~\ref{tab2}, the program returns \emph{original chars} according to the `original char' column. Occupant chars are those characters residing in the $\fbox{\bf G}$ file. Once the program identifies the occupant char in a particular row number, outputs $xx'$ for that character according to the original char column from the $\fbox{\bf TT}$ file. This file in size is always 8\,MB for any reference point as a bit-flag for an occupant char corresponding to the original file. The matrix vectors and I/O process layout for the example looks like this
\begin{equation}
\begin{array}{*{20}cc}
  \ &  \\
   \{\overbrace{\texttt{resolved}}^{\text {original text}}\}  \stackrel{\text{read}}{\longrightarrow}  &  \fbox{\textbf{P}} \\
 \ & \updownarrow  \\
   \ & \fbox{\textbf{TT}} \\
\end{array}
 \stackrel{\text{write}}{\longrightarrow} \fbox{\bf G} = \{\texttt{a},\texttt{b},\texttt{c},\texttt{d}\} \in \left\langle \mathbbm{R}^4, \varphi \right\rangle
\label{eq:18}\end{equation}

\noindent and the decomposition of the $\fbox {\bf G}$ component after being constructed and written by program $\fbox {\bf P}$, is
\begin{align}
  &\left[
   \begin{array}{c}
     \texttt{a} \\
     \texttt{b} \\
     \texttt{c} \\
     \texttt{d} \\
   \end{array}
 \right] \! \otimes
 \left\langle \overbrace{\left( \begin{smallmatrix} {\bf zizp} \\ {\bf zizp} \\ {\bf zini} \\ {\bf zini} \end{smallmatrix} \right) \! \times \! \left( \begin{smallmatrix}  \bf n p n i  \\ {\bf npzp} \\ {\bf zpzp}  \\ {\bf zizi}   \end{smallmatrix} \right) \! \times \! \left( \begin{smallmatrix}  \bf z i n i  \\ {\bf zini} \\ {\bf zizp}  \\ {\bf zini}  \end{smallmatrix} \right) \! \times \! \left( \begin{smallmatrix}   \bf z i  z i \\ {\bf zpnp}  \\ {\bf zini}  \\ {\bf zinp}  \end{smallmatrix} \right)}^{\left\langle \mathbbm{R}^4, \varphi \right\rangle (\beta)} \right\rangle  =
 \left[
   \begin{array}{ccc}
     \emptyset & \cdots & \emptyset  \\
     \vdots & \ddots \\
     \texttt{d} & & \emptyset \\
     \emptyset & \cdots & \emptyset  \\
     \vdots & \ddots \\
     \texttt{c} & & \emptyset \\
    \emptyset & \cdots & \emptyset  \\
     \vdots & \ddots \\
     \texttt{a} & & \emptyset \\
     \emptyset & \cdots & \emptyset  \\
     \vdots & \ddots \\
     \texttt{b} & & \emptyset \\
   \end{array}
 \right]_{\ell} \notag \\
 &\text{where} \ \ell = 64 \texttt{K} + 4 \texttt{B} \ ,  \label{eq:19} \\
 &\text{and is a measured output denoting original data, such that} \notag
\end{align}
\begin{equation}
\begin{array}{*{20}cc}
  \ &  \\
  \ &  \\
\{\texttt{re},\texttt{so},\texttt{lv},\texttt{ed}\}  \! \longleftarrow \!\!\! & \!\!\!\!\!\! \fbox{\textbf{P}} \\
 \ & \!\!\! \!\!\! \updownarrow  \\
   \ & \!\!\!\!\!\! \fbox{\textbf{TT}} \\
\end{array}
\!\!\!\!\!\! \stackrel{\beta}{\longleftarrow}\left\langle \overbrace{\left( \begin{smallmatrix} 7 \\ 12 \\ 6 \\ 1 \end{smallmatrix} \right), \left( \begin{smallmatrix}  11  \\ 14 \\ 6  \\ 13   \end{smallmatrix} \right), \left( \begin{smallmatrix}  1  \\ 6  \\ 4  \\ 2  \end{smallmatrix} \right) , \left( \begin{smallmatrix}  13 \\ 13  \\ 15  \\ 7  \end{smallmatrix} \right)}^{\rm address} \right\rangle \! = \left\langle \texttt{a}, \texttt{b}, \texttt{c}, \texttt{d} \right\rangle
\label{eq:20}\end{equation}

In Eq.~(\ref{eq:19}), the constructed $\fbox{\bf G}$ by empty values $\emptyset$ with 65,536 rows (64\,K), has now an extra 4 bytes (chars) written to it by $\fbox{\bf P}$. Program $\fbox{\bf P}$ before writing to $\fbox{\bf G}$, accesses $\fbox{\bf TT}$ to write the 4 chars in specific locations denoting original data, given by (\ref{eq:18}) and (\ref{eq:19}). Later, for a decompression, the 4D function $\varphi$ in the $\fbox{\bf P}$ code, manipulates $\beta$ to obtain original data (its binary). This manipulation occurs when $\fbox{\bf P}$ refers to  $\fbox{\bf TT}$. The addresses of these 4 chars are identified in the $\fbox{\bf TT}$ file by $\fbox{\bf P}$ to reconstruct original data according to (\ref{eq:20}).

The left half of the input string `\texttt{resolved}' in Eq.~(\ref{eq:18}), is illustrated by a hypercube in Fig.~\ref{fig2}. The grid file is constructed according to Eqs.~(\ref{eq:12}), as well as ${\fbox {\bf TT}}$ for the code to access bit flags. Program ${\fbox {\bf P}}$ accesses the occupant chars \texttt{a}, \texttt{b}, \texttt{c} and \texttt{d} (known as $y$ in ${\fbox {\bf G}}$) to return the original chars at the $\mathcal{C}'$ phase. This phase is recognizable between components ${\fbox {\bf TT}}$ and ${\fbox {\bf P}}$ relationship `$\updownarrow$' in Eqs.~(\ref{eq:18})-(\ref{eq:20}).

\begin{table}[th]
\begin{center}
\begin{minipage}{\textwidth}
\caption{The FBAR Translation Table\label{tab2}}
{\footnotesize \begin{tabular}{@{}cccc@{}} \toprule
Row \# & Bit-flag address & 95 ASCII characters as ``occupant chars"   & Original char \\
&  &  representing the ``original char" column via   & \\
&  &   the ``bit-flag address" column & \\\hline
1\hphantom{00} & 1x1x1x1 &  abcde\ldots zABCDE \ldots Z123\ldots 0'$\sim$!@\$\ldots$>,<$ & $ {}^{\text{aa}}\hphantom{00}$ \\
2\hphantom{00} & 1x2x1x1 & abcde\ldots zABCDE \ldots Z123\ldots 0'$\sim$!@\$\ldots$>,<$ & \textyen${}^{\text{a}}\hphantom{00}$ \\
3\hphantom{00} & 1x3x1x1 & abcde\ldots zABCDE \ldots Z123\ldots 0'$\sim$!@\$\ldots$>,<$ & ${\bullet}^{\text{a}}\hphantom{00}$ \\
4\hphantom{00} & 1x4x1x1 & abcde\ldots zABCDE \ldots Z123\ldots 0'$\sim$!@\$\ldots$>,<$ & ${\copyright}^{\text{a}}\hphantom{00}$\\
\vdots \hphantom{00} & \vdots \hphantom{00} & \vdots \hphantom{00} & \vdots \hphantom{00}\\
65534 \hphantom{00}  & 16x16x16x14 & abcde\ldots zABCDE \ldots Z123\ldots 0'$\sim$!@\$\ldots$>,<$ & \"{y}\'{o}\hphantom{00}\\
65535 \hphantom{00}  & 16x16x16x15 & abcde\ldots zABCDE \ldots Z123\ldots 0'$\sim$!@\$\ldots$>,<$ & \"{y}\"{u}\hphantom{00}\\
65536 \hphantom{00}  & 16x16x16x16 & abcde\ldots zABCDE \ldots Z123\ldots 0'$\sim$!@\$\ldots$>,<$ & \"{y}\"{y}\hphantom{00}\\\hline
\end{tabular} }
\tablefootnote{a}{The actual translation table contents or \textbf{TT} file for an LDC/LDD access and management}
\tablefootnote{b}{The size of this component is approximately 8\,MB.}
\end{minipage}
\end{center}
\end{table}

\noindent Once the bit-flag addresses are identified by program ${\fbox {\bf P}}$ subroutines, thereby compared and interpreted in code, the original data is returned. This is done by bit manipulation $\pmb\mho$ from Eqs.~(\ref{eq:8}) on $\beta$ based on addresses to obtain original data (Example~\ref{4.31}).  The intersected addresses occupy in total 4 bytes for the 8-byte sample, since the number of stored chars in the ${\fbox {\bf G}}$ file is 4, or 4 bytes. Since an empty $\fbox {\bf G}$ is static in size, 64\,K of rows, all the chars stored with addresses also denote a static allocation. The addresses in this sample respectively are \texttt{7x11x1x13}, \texttt{12x14x6x13}, \texttt{6x6x4x15} and \texttt{1x13x2x7}. This is clearly specified in the four-dimensional vector space or storage subspace of bit-flags and addresses by Eq.~(\ref{eq:20}). It is denoted by dimensional contents between the angle brackets $\langle \rangle$ notation. In this example, the occupant chars occupying the specific addresses are shown as $\langle \texttt{a}, \texttt{b}, \texttt{c}, \texttt{d}\rangle$ in Eq.~(\ref{eq:20}).

\begin{figure}[th]
\begin{center}
\includegraphics[scale=1.1]{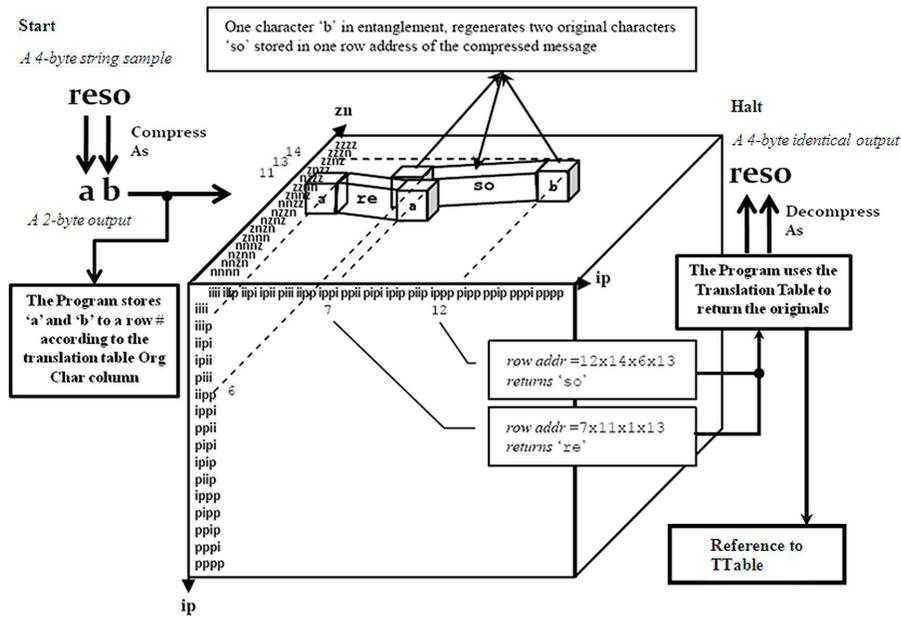}
\end{center}
\caption{An I/O $\mathcal{CC}'$ process on a `\texttt{reso}' string is given in a 4D grid constructor. This constructor shows a 50\% LDC with a DE state: the smaller inner-cubes in two places at the same time or ``characters in $\pm \pi$-entanglement." This model is a radix to higher DE-LDCs.\label{fig2}}
\end{figure}

The motive for choosing this hypercube (Fig.~\ref{fig2}) is anchored within the implementation of chars, being converted to binary as modeled back in Section~\ref{sect2.1.3}, thereby generating self-contained flags within an input char of the $\fbox{\bf G}$ grid. This results in 50\% pure compression, covering 2\,chars per entry. From Axioms~\ref{axiom1} and \ref{axiom2}, and Definitions~\ref{def4.14} to \ref{def4.26}, emitting Eqs.~(\ref{eq:18})-(\ref{eq:20}), we put all of the emerging 1-bit \textbf{znip} flags into unique combinations to obtain double-efficiency. We intersect them with other \textbf{znip}'s representing a second char input. Therefore

\begin{lemma} \label{lemma4.32}
Each character output is shared between $1\mathbf{ip}$ and $1\mathbf{zn}$ dimension, as a stored character. Containing $2 \, {\rm chars} \leftrightarrow 2 \mathbf{ip} + 2\mathbf{zn} = 4 \, {\rm dimensions}$ is done in 65,536 rows or addresses. A minimally $2 \, {\rm original \ chars}$ from $1 \, {\rm stored \ char}$ is decoded.
\end{lemma}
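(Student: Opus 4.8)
\noindent {\bf Proof proposal.} The statement is a bookkeeping identity that ties together the address count of Lemma~\ref{lemma1.4}, the operator inventory of Definition~\ref{def4.19}, and the $2{:}1$ length relation already secured in Proposition~\ref{prop4.5} and the Interactive Proof of Section~\ref{sect4.1.1}. The plan is to exhibit a bijection between the set of admissible \textbf{znip} addresses of $\fbox{\bf G}$ and the set of double-character inputs $xx'$, and then read the three clauses of the lemma directly off that bijection: the coordinate split $2\mathbf{ip}+2\mathbf{zn}$, the cardinality $16^4=65{,}536$, and the ``one stored char decodes to at least two originals'' conclusion.

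First I would pin down the address structure. By Proposition~\ref{prop4.5} a stored $y$ occupies one row of $A_{r\times4}$ whose four coordinates group, through the bivector pairing $h^2\mathbf{e}_{12}^2$ and $h^2\mathbf{e}_{34}^2$, into the pre-occupying plane ($\mathbf{e}_{12}^2=-1$, carrying $x$) and the post-occupying plane ($\mathbf{e}_{34}^2=+1$, carrying $x'$). On each plane the admissible words are, by Definition~\ref{def4.19}, exactly the $2^4=16$ purity/impurity strings (the $\mathbf{ip}$ list) and the $2^4=16$ zero/negate strings (the $\mathbf{zn}$ list). Hence a full address is an $(\mathbf{ip},\mathbf{zn})$ pair on the $x$-plane together with an $(\mathbf{ip},\mathbf{zn})$ pair on the $x'$-plane, i.e.\ $2\mathbf{ip}+2\mathbf{zn}$ coordinates, each ranging over $16$ values, for a total of $16\times16\times16\times16=65{,}536$ rows --- matching Lemma~\ref{lemma1.4} and Eqs.~(\ref{eq:4}) exactly. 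That settles the first clause.

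Next I would show the encoding map $\varphi\colon xx'\mapsto(\text{address of }y)$ is a bijection. Injectivity is the crux: applied bit-pair by bit-pair to the pure byte $\beta=\texttt{11111111}$, the ordered composite $\mathbf{zn}\circ\mathbf{ip}$ of Solution~\ref{sol4.18} must be shown to be a permutation of $\{00,01,10,11\}$ for each fixed pair of \textbf{znip} letters, using Definitions~\ref{def4.14}--\ref{prop4.17} and Axioms~\ref{axiom1}--\ref{axiom2}; lifting through the eight bit-pairs of $xx'$ by the associativity of Definition~\ref{def2.2} (Eqs.~(\ref{eq:7})--(\ref{eq:8})) then forces distinct $xx'$ into distinct rows, with Example~\ref{4.31} serving as the worked instance. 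Surjectivity is pure counting: by Lemma~\ref{lemma1.4} there are $2^{16}=65{,}536$ inputs $xx'$ and equally many admissible rows, so an injection between equinumerous finite sets is onto. Combined with the length relation $\ell(y)=\tfrac12\ell(xx')$ of Eqs.~(\ref{eq:7})--(\ref{eq:9}), one $8$-bit occupant $y$ decodes to the two original characters $xx'$ --- each recovered from exactly one $\mathbf{ip}$ word and one $\mathbf{zn}$ word on its own bivector plane, which is the lemma's ``shared between $1\mathbf{ip}$ and $1\mathbf{zn}$'' --- and to no fewer. The ``minimally'' qualifier then needs only Definitions~\ref{def4.26}--\ref{def4.27} and Corollary~\ref{corol2}: when two stored $y$'s collapse into the same row ($d=0$) that row decodes to four originals, still $\ge 2$, while the $2{:}1$ ratio is preserved throughout (Eqs.~(\ref{eq:12})).

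The main obstacle I foresee is making the split ``$2\mathbf{ip}+2\mathbf{zn}$'' \emph{forced} rather than merely one convenient labelling, and defusing Paradox~\ref{paradox1}: the coincidences $\mathbf{i}(10)=\mathbf{p}(00)$ and $\mathbf{i}(01)=\mathbf{p}(11)$ mean the $\mathbf{ip}$-pass alone is not injective, so the whole argument rests on the \emph{ordered} composition ``$\mathbf{ip}$ first, then $\mathbf{zn}$'' of Solution~\ref{sol4.18} restoring injectivity, and on the sign condition of Proposition~\ref{prop4.5} ($\mathbf{e}_{12}^2=-1$ versus $\mathbf{e}_{34}^2=+1$) genuinely assigning purity/impurity to one bivector plane and zero/negate to the other. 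I would discharge this by tracking a single bit-pair of $\beta$ through the fixed ordered composite, checking it is a bijection of $\{00,01,10,11\}$ for every letter pair, and only then invoking associativity to lift the claim to the byte and to the full $16$-bit message.
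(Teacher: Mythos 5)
Your proposal is correct and rests on exactly the ingredients the paper itself invokes for this lemma --- Definition~\ref{def4.19}'s two lists of sixteen four-letter words, the $16^4=65{,}536$ count of Lemma~\ref{lemma1.4} and Eqs.~(\ref{eq:4}), the ordered $\mathbf{ip}$-then-$\mathbf{zn}$ manipulation of Solution~\ref{sol4.18} and Example~\ref{4.31}, and Definitions~\ref{def4.26}--\ref{def4.27} plus Corollary~\ref{corol2} for the ``minimally'' clause --- so it follows the paper's route rather than a new one. Note only that the paper supplies no explicit proof (the lemma is asserted with ``Therefore'' immediately after that construction), and that the injectivity step you flag should be phrased not as each fixed letter-pair ``being a permutation of $\{00,01,10,11\}$'' (a fixed pair applied to the seed \texttt{11} yields a single bit-pair, and $\mathbf{i}$, $\mathbf{p}$ are not even defined on all four values) but as the claim that the four letter-pairs in $\{\mathbf{i},\mathbf{p}\}\times\{\mathbf{z},\mathbf{n}\}$, applied in the fixed order to \texttt{11}, exhaust $\{11,00,01,10\}$ --- which is the correct and sufficient way to defuse Paradox~\ref{paradox1} and is the rigor the paper only gestures at.
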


The analogy of Lemma~\ref{lemma4.32} is mappable to Moore's Law and Knowledge Management by Gilheany~\cite{gilheany}, stating: ``each time a bit is added to the address bus width, the amount of memory that can be addressed is doubled." Four-bit addresses allow the addressing of 16 bytes of memory, and in Lemma~\ref{lemma4.32}, are the 4 dimensions containing the 2\,chars or $xx'$. Eight bits allow the addressing of 256\,bytes of memory, whereas 16 bits can address 65,536 bytes of memory (and extra work is necessary to address 640 kilobytes of memory, as was the case on the early IBM PCs). In the FBAR case, an 8-bit $y$ can address a 16-bit $xx'$ in one of the 65,536 $\fbox{\bf G}$ file (portable memory) rows. Therefore, in terms of ``an address information sent immediately following the \emph{control byte} as a 16-bit word (65,536 possible addresses)"~\cite{smith}, here, is compressed as a 16-bit $xx'$ to an 8-bit $y$ character in $\fbox{\bf G}$ rows. So, $y$ in addition to a $\mathbf{znip}$ flag, plays the role of an 8-bit control byte for 65,536 possible addresses. Thus, we further deduce another lemma: \\

\begin{lemma} \label{lemma4.33}
Lemma~\ref{lemma4.32} gives a control double-byte $>$ a standard control byte for all intersecting addresses in $\fbox{\bf G}$. Since an input data is doubly compressed as $\ell(xx')=\ell(y)$, the static access of information in $\fbox{\bf G}$ is minimally, doubly faster than any other memory access when the compressed data is decoded.
\end{lemma}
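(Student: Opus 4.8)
The plan is to bootstrap Lemma~\ref{lemma4.32} through the address-bus reading of Moore's Law already invoked via Gilheany, and then convert the resulting ``control double-byte'' statement into a bitrate inequality using the Hamming-rate machinery of Corollary~\ref{cor4.30}. First I would fix the bookkeeping: by Lemma~\ref{lemma4.32} every stored char $y$ simultaneously inhabits one $\mathbf{ip}$ and one $\mathbf{zn}$ dimension and thereby pins down one of the $2^{16}=65536$ rows of $\fbox{\bf G}$, exactly as a 16-bit word names one of 65,536 addresses in Smith's control-byte scheme; but here that 16-bit addressing is discharged by an 8-bit occupant char together with its attached \textbf{znip} flag. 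So I would define the ``control double-byte'' precisely as the pair $\langle y, \varphi_{xx'}\rangle$ that, per single row read, both names an address and carries the full $\ell(xx')=16$-bit payload --- twice the $\ell(y)=8$-bit naming capacity of an ordinary control byte, which establishes the first clause of the lemma.

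Second, I would turn this into the access-speed claim. From Eqs.~(\ref{eq:12}) the grid $\fbox{\bf G}$ is static at $64\,$K rows, so the row lookup for any $y$ is $O(1)$ in the same sense already used for $\fbox{\bf TT}$; the point is the \emph{yield} per lookup. Using Eqs.~(\ref{eq:8}) and Definition~\ref{def4.24}, decoding one occupant char by \textbf{znip} manipulation $\pmb\mho$ on the pure byte $\beta=\texttt{11111111}$ returns $2$ original characters in one pass, whereas a conventional uncompressed access returns one character per read of comparable cost. I would formalize ``doubly faster'' through the ratio of the decompression Hamming rate $R'_\mathbbm{H}$ of Eq.~(\ref{eq:17d}) to a per-character baseline: since $\pmb\mho(\{y_1,y_2\},{xx'}_1{xx'}_2)=16\stackrel{\varphi}{\longrightarrow}0$ collapses two characters' worth of manipulations into the single row-transform of Eq.~(\ref{eq:17e}), the effective characters-served-per-unit-time is exactly $2\times$ the unit baseline, which is the asserted lower bound ``minimally doubly faster.''

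Third --- and this is where the argument must be handled carefully --- I would isolate the exact meaning of ``any other memory access.'' The honest reading is that any scheme storing the same information \emph{without} the $2{:}1$ FBAR containment must read two physical units to surface what $\fbox{\bf G}$ surfaces in one; the $\ell(xx')=\ell(y)$ identity of Proposition~\ref{prop4.11} (one stored byte standing in for two input bytes) is precisely the hypothesis that makes the per-read yield double. I would therefore prove the inequality in the form $R_{\mathrm{FBAR}} \ge 2\,R_{\mathrm{baseline}}$ under the static-$\fbox{\bf G}$ assumption, and flag the caveat that once $\fbox{\bf G}$ grows past its static $64\,$K range (Solution~\ref{sol4.21}) the factor is $2$ asymptotically rather than exactly, because the fixed $64\,$K overhead is amortized --- matching the ``$\mathbf{F}_y + 64\,$K'' accounting of Eq.~(\ref{eq:12c}).

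The main obstacle I anticipate is not any single calculation but pinning down a defensible model of ``memory access'' against which the factor of two is measured: the lemma silently compares a decode-from-$\fbox{\bf G}$ operation with an abstract baseline read, and the proof stands or falls on making that baseline explicit --- same row/word granularity, same $O(1)$ lookup, differing only in characters delivered per access. Once that comparison model is fixed, the inequality is essentially Lemma~\ref{lemma4.32} together with the $2{:}1$ yield of Eqs.~(\ref{eq:8}); the remaining work is just to check that the $\fbox{\bf TT}$ translation step of Definition~\ref{def4.23} contributes only the constant $\fbox{\bf TT}$-lookup already absorbed into Eqs.~(\ref{eq:17}), so it cannot erode the factor of two.
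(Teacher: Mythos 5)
The paper offers no formal proof of this lemma: it is simply ``deduced'' from the paragraph immediately preceding it, namely the Gilheany/Moore's-Law analogy and the observation that an 8-bit occupant char $y$ together with its \textbf{znip} flag plays the role of a control byte addressing a 16-bit $xx'$ among the 65,536 rows of $\fbox{\bf G}$. Your first paragraph reproduces exactly that deduction, so on the ``control double-byte'' clause you and the paper take the same route. What you add --- recasting ``doubly faster'' as a ratio of Hamming rates via Eqs.~(\ref{eq:17d}) and (\ref{eq:17e}), and insisting that the comparison baseline (``any other memory access'') be pinned down to the same word granularity and $O(1)$ lookup cost --- goes beyond anything the paper supplies, and in doing so you have put your finger on the lemma's real weakness: the speed claim compares a decode-from-$\fbox{\bf G}$ read against an unspecified alternative, and the paper never fixes that baseline nor addresses the amortization of the static 64\,K overhead that your third paragraph correctly flags. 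Your proposal is therefore at least as strong as the paper's own justification; the caveats you raise are genuine and are not resolved by the paper either.
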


The knowhow of these hypercube processes i.e., data access, compare, interpret after storage, is summarized in Table~\ref{tab3}, and implemented in our practical section, Section~\ref{sect5}, with performance results on the expected Hypothesis~\ref{hypo5.2} in Section~\ref{sect7}.

\subsection{Summary of Model and Theory } \label{sect4.4}

The 4D bit-flag model (hypercube) contains data for I/O transmissions. It maps contents in binary by intersecting their values in four dimensions using FBAR operators, suitable for any data type. The logic incorporated in this model, is of a combinatorial type, i.e. fuzzy, binary AND/OR logic. The rationale to the construction of this hypercube was to observe input characters, each pair of characters to be in two places at the same time. For imagery data types, an integer value is assigned instead, to satisfy an address representing two colors in two places simultaneously, out of the RGB color model for a 50\% LDC (recall Section~\ref{sect2.1.4}). Therefore, constructing $2^n$ memory addresses, in form of a grid file, gives a novel solution of how to compress data in doubles and pairs, losslessly. The fuzzy component of this logic is the middle point connecting binary with more possible states of logic. This connection of minimum to maximum number of states is defined in terms of an interrelated equation for all states of logic, and universal in all codeword representations. This was earlier introduced in Section~\ref{sect2.1.3}. By combining Eqs.~(\ref{eq:18})-(\ref{eq:20}) layout on the sample, we deduce the following components' paradigm. We later use this paradigm for the practical application of the algorithm to execute the operations held by components $\fbox{\bf TT}$, $\fbox{\bf P}$, $ \fbox{\bf G}$ and original file $\fbox{\bf O}$ as follows:
\begin{equation*}
 \begin{array}{*{20}ccc}
    \mathrm{Decompression} & \\
    & \\
   \! \! \! \! \! \! \fbox{\textbf{O}} \ \ \ \ \ \ \ \ \ \ \ \ &   \\
\! \! \! \! \! \! \uparrow \ \ \ \ \ \  \ \ \ \ \ \  &     \\
  \! \! \! \{\texttt{resolved} \} \stackrel{\mathrm{out}}{\longleftarrow} \fbox{\textbf{P}} \stackrel{\mathrm{in}}{\longleftarrow}  \fbox{\textbf{G}} &  \\
 \ \ \ \ \ \ \ \ \ \ \updownarrow &  &  \\
   \ \ \ \ \ \  \ \ \ \ \ \  \ \ \ \ \ \ \ \ \   \fbox{\textbf{TT}} \approx 8 \, \mathrm{MB} &  &
   \end{array}\! \! \! \! \! \! \! \! \! \left|
 \! \! \! \! \! \! \! \! \! \! \! \! \! \! \! \! \! \!  \begin{array}{*{20}cc}
    &  \mathrm{Compression} & \\
    & \\
  \ &  \,  \fbox{\textbf{O}}\longrightarrow \{\texttt{resolved}\} \! = \! 8 \: \mathrm{B} \ \ \ \  & \\
    & \ \downarrow^{\mathrm{in}} \ \ \ \ \ \ \ \ \ \ \ \ \ \ \ \ \ \ \ \ \ \ \ \ \ \ \ \ \ \ \ \  &   \\
     &  \ \ \ \ \ \ \ \ \ \ \   \fbox{\textbf{P}}\stackrel{\mathrm{out}}{\longrightarrow}  \fbox{\textbf{G}} \longrightarrow \{\texttt{a}, \texttt{b}, \texttt{c}, \texttt{d}\} \! = \! 4 \, \mathrm{B} \ \ \ \ \ \  & \\
 & \ \ \updownarrow \ \ \ \ \ \ \ \ \ \ \ \ \ \ \ \ \ \ \ \ \ \ \ \ \ \ \ \ \ \ \ \ \ \ \ &   &   \\
   \  & \ \ \fbox{\textbf{TT}} \approx 8 \, \mathrm{MB} \ \ \ \ \ \ \ \ \ \ \ \ \ \ \ \ \ \ \ \ \ \ \ \  &
   \end{array} \right.
\end{equation*}

\noindent Component $\fbox{\textbf{O}}$ as original file, is where the original text or string is located. The practical process and structure of all LDC/LDD components are given in Section~\ref{sect5}.

\chapter{FBAR Compression Practice}\label{sect5}
We implement the 4D model as the algorithm's prototype based on the theoretical aspects of FBAR logic on I/O data transmissions. This prototype should perform DE predictable values. To do so, DE values are enclosed as bits of information, from a $\mathcal{C}$ form to its decoded $\mathcal{C}'$ form in a lossless manner. Finally, we highlight certain details on the definiteness of future entropies supporting a growing negentropy, like Hyv\"{a}rinen {\it et~al}.~\cite{hyv}, proving a universal predictability, contrasting the popular Shannon's method of 1st order to 4th, inclusive of its general orders indeed.

\section{FBAR Components, Process and Test} \label{sect5.1}

To fully implement an algorithm, one must understand how it works in terms of its testable structure and model representation. This is illustrated in Fig.~\ref{fig3}. Furthermore, the algorithmic components must be introduced in terms of size, their process relationships, executables and data types. The $\mathcal{C}$ and $\mathcal{C}'$ phases of the algorithm, iteratively use the following components:  the $\fbox{\textbf{G}}$ as the grid file, $\fbox{\textbf{TT}}$ as the translation table file, and $\fbox{\textbf{P}}$ as the program source code for I/O executions. The $\fbox{\textbf{G}}$ file contains all compressed data representing the original characters. We call this the final compressed FBAR product or compressed file. We now introduce these components, their roles and functions for the $\mathcal{C}$ and $\mathcal{C}'$ implementation as follows: \\

\noindent As proven in theory, from Eq.~(\ref{eq:19}), the grid $\fbox{\textbf{G}}$ component consists of 8-bit blank entries or $\emptyset$ in 65,536 rows, providing a possible ASCII (256$\times$256)= 64\,K of static space for I/O data. The I/O data are processed by the $\fbox{\textbf{P}}$ component. This component deals with original contents $\fbox{\textbf{O}}$ component as original data which comprises of information built on one or more data types, given by the user. The $\fbox{\textbf{O}}$ component, is our input sample and should be tested for a lossless compression $\mathcal{C}$, as well as decompression $\mathcal{C}'$. The $\fbox{\textbf{TT}}$ component consists of combinatorial details of any data as a table on bit-flags, row number and occupant chars, available to $\fbox{\textbf{P}}$.

\begin{figure}[th]
\begin{center}
\includegraphics[scale=2.5]{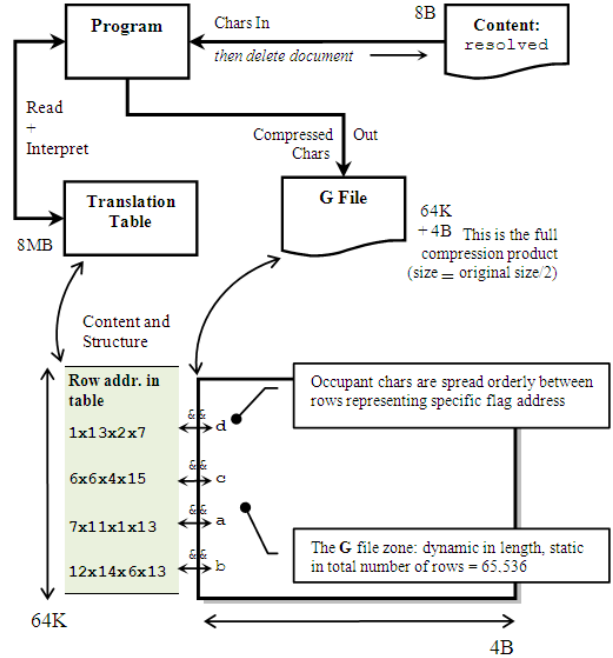}
\end{center}
\caption{The 4D logic constructor files with an 8\,B to 4\,B compression. \label{fig3} }
\end{figure}

The size of this table is static $\approx\,$8\,MB for its self-contained information. Program $\fbox{\textbf{P}}$ consists of lines of code to execute $\mathcal{C}\mathcal{C}'$ procedures. It accesses $\fbox{\textbf{O}}$ at the $\mathcal{C}$ phase, thereby constructs $\fbox{\textbf{G}}$ and puts occupant chars in a specific bit-fag and row number (a prefix address) as compressed data, using $\fbox{\textbf{TT}}$ information. At the $\mathcal{C}'$ phase, the same program accesses $\fbox{\textbf{G}}$, and by reading both its contents and addresses identified by $\fbox{\textbf{TT}}$, reconstructs $\fbox{\textbf{O}}$. This has been illustrated in Section~\ref{sect4.4}.

It is evident for each sample, at least one task $T$ is executed to perform compression parallel to decompression operations. Each conducted task allows one to evaluate the algorithm I/O's in terms of temporal measurement, here bitrate, as well as spatial measurement as bpc or entropy. Once implementation is resolved on this small scale (1\,$\fbox{\textbf{O}}$ file input), test cases are maximized or extended to the large, in number, and in scale for I/O data integration. This scalability of I/O's would guarantee the correctness of the code on FBAR logic requirements. For example, constructing an abstract release of a character reference column in the prefix $\fbox{\textbf{TT}}$ component, based on standard keyboard characters, including whitespace ``\ ", would not exceed 96 entries: 95 printable ASCII characters (decimal \#\,32-127) as shown in Table~\ref{tab2}, plus 1 control character. The latter is used to create a block or a jump character, indicated as \{\texttt{\slash a}, \texttt{\slash b},\ldots\}, between every \{1st, 2nd,\ldots\} 95-occupant char entry (or\,95\,$y$'s). At the $\mathcal{C}$ phase, this in total gives 95\,$\times$\,2\,=\,190 original char entries per block, and is denoted by the `$\mathcal{C}$(char)' column in Table~\ref{tab3}. Note that, at the $\mathcal{C}'$ phase, the program uses block chars to return \{1st, 2nd, \ldots, 95th\} pair of the original chars, hence forming words and sentences in the right order.

\begin{table}
\begin{center}
\begin{minipage}{\textwidth}
\caption{The FBAR I/O Character Process and Occupation Table \label{tab3}}
{\footnotesize\begin{tabular}{ccccc} \toprule
Row address & $\mathcal{C}$(char)\#; $\mathcal{C}_r$ &  Original chars; total & Occupant char & Size (bits) \\\hline
\multicolumn{1}{ >{\columncolor[rgb]{0.78,0.82,0.79}}c}{7x11x1x13}\hphantom{0} & \multicolumn{1}{ >{\columncolor[rgb]{0.78,0.82,0.79}}c}{1}\hphantom{0} 2:1=50\% & \multicolumn{1}{ >{\columncolor[rgb]{0.78,0.82,0.79}}c}{\hphantom{0} re \,  \hphantom{0000} 2 \ \ \ } \ \ \  &  \multicolumn{1}{ >{\columncolor[rgb]{0.78,0.82,0.79}}c}{a}  & 8 \hphantom{00} \\
\multicolumn{1}{ >{\columncolor[rgb]{0.9,0.9,0.9}}c}{12x14x6x13}\hphantom{0} & \multicolumn{1}{ >{\columncolor[rgb]{0.9,0.9,0.9}}c}{2}\hphantom{0} 2:1=50\% & \multicolumn{1}{ >{\columncolor[rgb]{0.9,0.9,0.9}}c}{\hphantom{0} so \,  \hphantom{0000} 4 \ \ \ } \ \ \    & \multicolumn{1}{ >{\columncolor[rgb]{0.9,0.9,0.9}}c}{b}  & 8 \hphantom{00} \\
\multicolumn{1}{ >{\columncolor[rgb]{0.9,0.9,0.9}}c}{6x6x4x15}\hphantom{0} & \multicolumn{1}{ >{\columncolor[rgb]{0.9,0.9,0.9}}c}{3}\hphantom{0} 2:1=50\% & \multicolumn{1}{ >{\columncolor[rgb]{0.9,0.9,0.9}}c}{\hphantom{0} lv \,  \hphantom{0000} 6 \ \ \ } \ \ \    & \multicolumn{1}{ >{\columncolor[rgb]{0.9,0.9,0.9}}c}{c}  & 8 \hphantom{00} \\
\multicolumn{1}{ >{\columncolor[rgb]{0.9,0.9,0.9}}c}{1x13x2x7}\hphantom{0} & \multicolumn{1}{ >{\columncolor[rgb]{0.9,0.9,0.9}}c}{4}\hphantom{0} 2:1=50\% & \multicolumn{1}{ >{\columncolor[rgb]{0.9,0.9,0.9}}c}{\hphantom{0} ed \,  \hphantom{0000} 8 \ \ \ } \ \ \    &  \multicolumn{1}{ >{\columncolor[rgb]{0.9,0.9,0.9}}c}{d}  & 8 \hphantom{00} \\
\multicolumn{1}{ >{\columncolor[rgb]{0.9,0.9,0.9}}c}{13x1x1x6}\hphantom{0} & \multicolumn{1}{ >{\columncolor[rgb]{0.9,0.9,0.9}}c}{5}\hphantom{0} 2:1=50\% & \multicolumn{1}{ >{\columncolor[rgb]{0.9,0.9,0.9}}c}{\hphantom{0} \: f   \hphantom{0000} 10 \ \ \ } \ \  \   & \multicolumn{1}{ >{\columncolor[rgb]{0.9,0.9,0.9}}c}{e}  & 8 \hphantom{00} \\
\multicolumn{1}{ >{\columncolor[rgb]{0.9,0.9,0.9}}c}{6x13x7x11}\hphantom{0} & \multicolumn{1}{ >{\columncolor[rgb]{0.9,0.9,0.9}}c}{6}\hphantom{0} 2:1=50\% & \multicolumn{1}{ >{\columncolor[rgb]{0.9,0.9,0.9}}c}{\hphantom{0} or   \hphantom{0000} 12 \ \ \ } \ \ \   & \multicolumn{1}{ >{\columncolor[rgb]{0.9,0.9,0.9}}c}{f}  & 8 \hphantom{00} \\
\multicolumn{1}{ >{\columncolor[rgb]{0.9,0.9,0.9}}c}{\vdots} \hphantom{0} & \multicolumn{1}{ >{\columncolor[rgb]{0.9,0.9,0.9}}p{1.1cm}}{ \ \ \ \ \ \vdots  }  \vdots \ \, \hphantom{000}  & \multicolumn{1}{ >{\columncolor[rgb]{0.9,0.9,0.9}}c}{\hphantom{0} \vdots \ \hphantom{0000}  \vdots \  \  \ } \,\hphantom{0}  & \multicolumn{1}{ >{\columncolor[rgb]{0.9,0.9,0.9}}c}{\vdots}  & \vdots \hphantom{00} \\
\rowcolor[rgb]{.0,.0,.0} \textcolor{white}{\emph{\textbf{the same as last}}} &  \textcolor{white}{\, \ \  \textbf{96}\hphantom{000} \ \ \textbf{1:1=0\%}} &  \textcolor{white}{ $\mathbf{\infty}$ \  \ \hphantom{000} \textbf{191}} \  \ \   &  \textcolor{white}{\textbf{\slash a} } &  \textcolor{white}{\textbf{16} \hphantom{00}} \\
\hline
\multicolumn{1}{ >{\columncolor[rgb]{0.9,0.9,0.9}}c}{8x12x8x12} \hphantom{0} & \multicolumn{1}{ >{\columncolor[rgb]{0.9,0.9,0.9}}c}{97}\hphantom{0} 2:1=50\% & \multicolumn{1}{ >{\columncolor[rgb]{0.9,0.9,0.9}}c}{\hphantom{0} 55 \  \hphantom{000} 193  \ \ \ } \  \ \  & \multicolumn{1}{ >{\columncolor[rgb]{0.9,0.9,0.9}}c}{a} & 8 \hphantom{00}\\
\multicolumn{1}{ >{\columncolor[rgb]{0.9,0.9,0.9}}c}{8x12x11x2} \hphantom{0} & \multicolumn{1}{ >{\columncolor[rgb]{0.9,0.9,0.9}}c}{98}\hphantom{0} 2:1=50\% & \multicolumn{1}{ >{\columncolor[rgb]{0.9,0.9,0.9}}c}{\hphantom{0} 5\$ \  \hphantom{000} 195 \ \ \ }  \ \ \  & \multicolumn{1}{ >{\columncolor[rgb]{0.9,0.9,0.9}}c}{b} & 8 \hphantom{00} \\
\multicolumn{1}{ >{\columncolor[rgb]{0.9,0.9,0.9}}c}{\vdots} \hphantom{0} & \multicolumn{1}{ >{\columncolor[rgb]{0.9,0.9,0.9}}p{1.1cm}}{  \ \ \ \ \ \vdots  }   \vdots \  \hphantom{0000} & \multicolumn{1}{ >{\columncolor[rgb]{0.9,0.9,0.9}}c}{ \vdots \ \hphantom{0000}  \vdots  } \,\hphantom{0}    & \multicolumn{1}{ >{\columncolor[rgb]{0.9,0.9,0.9}}c}{\vdots}  & \vdots \hphantom{00} \\\hline
\end{tabular} }
\tablefootnote{a}{The \textbf{TT} file is used for each \textbf{G} file-read on the compressed chars as `occupant chars' to return `original chars' in the process.}
\tablefootnote{b}{The table portrays the FBAR I/O products as original and compressed data per $\mathcal{CC}'$ operation. The program compares values in the highlighted cells to return a $\mathcal{C}$ or $\mathcal{C}'$ product.}
\end{minipage}
\end{center}
\end{table}

The process design and development of the algorithm is illustrated in Fig.~\ref{fig3}, with results listed in Table~\ref{tab3}. The process begins with encoding input data using a dictionary coder, after which a high and low-state prefix fuzzy-binary conversions occur for compression. Recalling Eqs.~(\ref{eqs:2})-(\ref{eq:4}), each level of planar projection, from a lower 2D-layer to its upper, forms a 4D \emph{quaternions plane}~\cite{arnold} or hypercube, as a 1-bit flag \emph{bi-vectors group}~\cite{lounesto}. This group in the hypercube has its own augment in identifying impure 01, 10, and pure states of 11 and 00 for each converted data byte. In return, for an LDD, the converted binary data are recalled via a translation table (Table~\ref{tab2}) as part of the dictionary or database represented by a set of occupying characters as the compressed version, denoting original data. We recall the original values from the \fbox{\textbf{TT}} file for each compressed occupying char, via a ``grid file" as a portable memory grid or \fbox{\textbf{G}} on single bit-flags to decompress data. The FBAR dictionary consists of data references parsed into the translation table, building a static size of flag information, later used by the program for string value comparisons (the highlighted cells in Table~\ref{tab3}).

\section{Methods of Double-Efficiency} \label{sect5.2}
We implement the algorithm in form of a prototype. The prototype presents the FBAR model and its encoding/decoding components for DE compressions.

\begin{figure}[th]
\begin{center}
\includegraphics[scale=2.1]{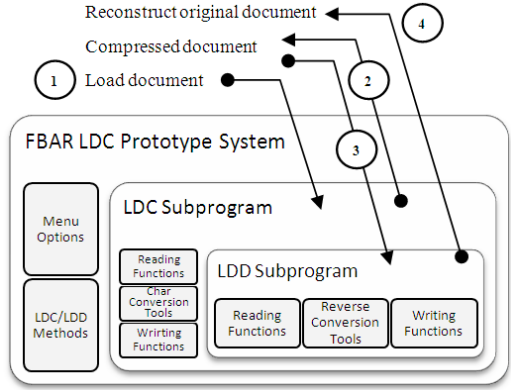}
\end{center}
\caption{The structural components of the FBAR prototype.\label{fig4}}
\end{figure}

As shown in Fig.~\ref{fig4}, the prototype representing program \fbox{\textbf{P}}, compresses data by loading a document sample. The program uses a memory grid file \fbox{\textbf{G}}, which is a portable file containing single bit-flags in 65,536 rows or addresses. The translation of addresses for original characters, is given in a \fbox{\textbf{TT}} file rows with a static size of 8\,MB, for any amount of input data manipulated by prefix code. The code interpreter decompresses data, once the flags are compared with the compression result. The decompression uses these prefix flags as compressed data, reconstructing the original document. All of these components, their processes and size are already proven in our theory, Section~\ref{sect4.3.2}. In the following sections, we implement the algorithm components with results and evaluate its DE claim on I/O samples.

\subsection{Algorithm Sample and Test} \label{sect5.3}
Assumption~\ref{assum1} holds good for the following algorithm:

\begin{proposition} From Assumption~\ref{assum1}, suppose for every $x$ character input we have a righthand character $x'$, its sequence appears as $\mathcal{s}_{\rm\, in}=xx'$. For a long sequence $\mathcal{s}$, we suppose a sumset $\sum \mathcal{s}=(x_1x'_1 + x_2x'_2 + \ldots + x_mx'_m)$ to be our information input. Our objective in the program is to compress $\mathcal{s}$ to single-byte characters or $\mathbf{F}_y=\{y_1, y_2, \ldots, y_n\}$ or recall the Proof on Proposition~\ref{prop4.5}.
\end{proposition}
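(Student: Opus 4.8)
The plan is to reduce the statement to the pairwise machinery already established and then sum over the pairs. First I would invoke Assumption~\ref{assum1} to fix the input model: the compiler operates on $8$-bit words, $x$ and $x'$ range over the $256$ ASCII codes, and $\ell(\mathcal{s}) = 16$ bits for each pair, so the sumset $\sum\mathcal{s} = (x_1x'_1 + x_2x'_2 + \cdots + x_mx'_m)$ has $\ell\!\left(\sum\mathcal{s}\right) = 16m$ bits. Next, for a fixed index $i$ I would apply Theorem~\ref{theo1} (equivalently Theorem~\ref{theo4}) together with Proposition~\ref{prop4.5}: manipulating the pure byte $\beta = \texttt{11111111}$ by the four single bit-flags $\varphi$, realised as the \textbf{znip} operators of Definitions~\ref{def4.12}--\ref{def4.19}, produces one occupant character $y_i = 8$ bits whose address $(i,j,k,l)\in[1,65536]$ in the grid $\fbox{\textbf{G}}$, read off from $\fbox{\textbf{TT}}$, encodes the whole $16$-bit content of $x_ix'_i$; this is exactly Eqs.~(\ref{eq:7})--(\ref{eq:8}).

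Then I would assemble the finite field $\mathbf{F}_y = \{y_1, y_2, \ldots, y_n\}$ by letting the program $\fbox{\textbf{P}}$ write each $y_i$ into its prefix row of $\fbox{\textbf{G}}$, as in Eqs.~(\ref{eq:11})--(\ref{eq:12}), and take lengths: since each pair contributes exactly $8$ bits, $\ell(\mathbf{F}_y) = \sum_{i=1}^{n}\ell(y_i) = 8m = \tfrac12\,\ell\!\left(\sum\mathcal{s}\right)$, which is Eq.~(\ref{eq:9}); hence $n = m$ and the compression ratio is $\mathcal{C}_r = 2{:}1$, i.e.\ the $50\%$ fixed compression of Corollary~\ref{corol2}. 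Losslessness follows by running the argument backwards: by Theorem~\ref{theo5}, Proposition~\ref{prop4.11} and the worked Example~\ref{4.31}, each occupant $y_i$ together with its $\fbox{\textbf{TT}}$ row address lets $\varphi$ (the \textbf{znip} string) re-manipulate $\beta$ and return $x_ix'_i$, and catenating these over $i$ yields $\sum\mathcal{s}_{\,\rm out} = \sum\mathcal{s}_{\,\rm in}$; the Hamming-distance bookkeeping of Corollary~\ref{cor4.30} guarantees $d' = 0$ at the $\mathcal{C}'$ phase, so no randomness is recovered and the output is bit-for-bit original.

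The main obstacle I expect is not the pairwise encoding but the accounting of the fixed and near-fixed overheads so that the $2{:}1$ claim survives for a genuine stream rather than for a single pair. Concretely, the grid file carries an additive $64\,$K of blank rows ($\ell(\fbox{\textbf{G}}) = \ell(\mathbf{F}_y) + 64\,$K from Eq.~(\ref{eq:12c})) and $\fbox{\textbf{TT}}$ is a static $8\,$MB, so the exact $\tfrac12\sum\mathcal{s}$ only holds asymptotically in $m$; and because only the $95$ printable ASCII glyphs of Table~\ref{tab2} are usable as occupant chars while $\fbox{\textbf{G}}$ exposes $65{,}536$ addresses, $\fbox{\textbf{P}}$ must interleave one $16$-bit control (``jump'') character between each block of $95$ occupants to keep words and sentences in order, an overhead of $16/(95\cdot 8)\approx 2\%$ that must be shown bounded and absorbed into the $\Delta$-terms exactly as in the fluxion estimate of Eq.~(\ref{eq:17}). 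I would also dispatch the parity case where $\ell\!\left(\sum\mathcal{s}\right)$ is not a multiple of $16$ bits by padding the trailing $x'_m$ with a neutral (\textbf{z}) byte, which costs at most one extra $y$ and does not disturb the ratio. Once these constants are folded in, the proposition is just Corollary~\ref{corol2} applied termwise and then summed over the $m$ pairs of $\sum\mathcal{s}$.
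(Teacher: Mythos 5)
Your proposal follows essentially the same route as the paper: the proposition is justified by recalling the proof of Proposition~\ref{prop4.5} (the per-pair bit-flag manipulation of Eqs.~(\ref{eq:7})--(\ref{eq:8})), summing over pairs via Eq.~(\ref{eq:9}) to obtain $\ell(\mathbf{F}_y)=\frac{1}{2}\sum \mathcal{s}$, and delegating losslessness to the $\fbox{\textbf{TT}}$/$\fbox{\textbf{G}}$ machinery that Algorithms~\ref{alg:sample1}--\ref{alg:sample3} then realise in pseudocode. Your extra bookkeeping of the $64$\,K grid rows, the static $8$\,MB table, the $16$-bit block characters and the odd-length padding is more explicit than anything the paper offers at this point, but it supplements rather than alters the argument.
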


\begin{algo}
Let program \fbox{\textbf{P}} get 2\,Characters from left-to-right of sequence $\mathcal{s}$. If \fbox{\textbf{P}} continues in taking 2 more Characters with respect to time $t$, it instantiates a series of tasks $T_1, T_2, T_3, \ldots, T_n$. These LDC tasks for each \emph{information processing cycle }on the sequence appear as
\begin{align*} &\sum T \times \sum \mathcal{s}=(\underbrace{x_1x'_1}_{T_1+} + \underbrace{x_2x'_2}_{T_2+} + \ldots + \underbrace{x_mx'_m}_{+T_n}) \stackrel {\mathrm{get}}{\longrightarrow} \fbox{\bf{P}}  \stackrel {\mathrm{store}}{\longrightarrow}  \fbox{\bf{G}}\stackrel{\mathrm{out}}{\longrightarrow} \mathbf{F}_y \\ &\text{such that, } \mathbf{F}_y=\{y_1, y_2, \ldots, y_n\} \ . \end{align*}

The processing cycles in our algorithm should follow
\begin{enumerate}[(1)]
\item\textbf{Input}: entering data into the program
\item \textbf{Processing}: performing operations on the data according to the \fbox{\textbf{TT}} file
\item \textbf{Output}: presenting the conversion results, in this case, the \fbox{\textbf{G}} file
\item \textbf{Storage}: saving data, or output for future use, in this case, the \fbox{\textbf{G}} file.
\end{enumerate}
\end{algo}

Now by applying the {\bf znip} operators (as 1-bit flags) on Binary Sequence Character $\beta($`1'$)= \texttt{11111111}$ as our default value in program \fbox{\textbf{P}}, to obtain the actual binary on each $xx'$ per task $T$, we then code our algorithm: \\

Algorithm~\ref{alg:sample1} comprises of LDC tasks, storing results in the \fbox{\textbf{G}} file. From the user, the program gets 2 chars, and inputs it from left-to-right of the file. The program by default contains a character `1' assigning the two concatenated input chars to the `1' (the customized $\beta$). Now, the program in line \# 6 generates a character representing the 2 chars in the correct row (corresponding row) according to ASCII standard for the same characters. This is further instructed in line \# 7 of the code, where the occupying row also represents an address of the compressed chars in the \fbox{\textbf{G}} file. The static translation table \fbox{\textbf{TT}} file is then used containing prefix addresses for every row out of 65,536 rows to translate, replacing one char with its original two chars for a 50\% compression. This is expressed in line \# 8, which gets the original 2 chars from \fbox{\textbf{TT}} per compressed char in \fbox{\textbf{G}} at the $\mathcal{C}'$ phase of the algorithm. The remaining lines of the algorithm just denote the opposite condition where the new string is again requested from the user to input from the start. \\

\SetAlFnt{\small}
\begin{algorithm}[H] \label{alg:sample1}

\KwIn{A set of LDC tasks and data conversions}
\KwOut{Storing LDC output to \fbox{\textbf{G}} file as a compressed string $y$ }
\Begin{
\SetAlFnt{\small}
\While{There are still input characters in $\mathcal{s}$}{
\ForEach{2 Characters from left-to-right of $\mathcal{s}$}{
Get 2 input characters $x_ix_i'$ \;
Pack 1-bit flags on Binary Sequence Character `1' = $x_ix_i'$ \;
Generate an occupant character $y_i$ according to the order of $x_ix_i'$ \;
Store 1 Occupant Character $y_i$ in the 1-bit flags row \# in \fbox{\textbf{G}} file\;

\uIf{$y_i$ and row \# is in the Translation Table }{
Continue getting the next 2 characters from left-to-right of $\mathcal{s}$ \;}
\uElse{Output the code for Pack as New String \;
Restart Packing as New String in \fbox{\textbf{G}} file \;
        New String = $y$ \;}
}
}
}
\caption{A lossless data compression sample}
\end{algorithm}
\vspace{8pt}

So, we can now initiate the $\mathcal{C}'$ phase of the algorithm in terms of Algorithm~\ref{alg:sample2}. This algorithm comprises of LDD tasks, reconstructing results in a new file after reading from the \fbox{\textbf{G}} file relative to the \fbox{\textbf{TT}} file. From the \fbox{\textbf{G}} file, the program reads 1 character from right-to-left and reads the row number in line \# 4, comparing it with the 65,536 available translations in the \fbox{\textbf{TT}} file (dictionary) in line \# 5. The program reconstructs a string of translated characters and adds up newcomer characters to its string to build a full word, or a sentence of the original information, in line \# 6-12, where \# 12 denotes that, Old Code = New Code. Then the program cleans up the memory at line \# 13.  \\

\SetAlFnt{\small}
\begin{algorithm}[H]\label{alg:sample2}
\KwIn{A set of LDD tasks and data conversions}
\KwOut{Decompressing \fbox{\textbf{G}} file as an LDD output $xx'$ }
\Begin{
\While{Reading characters row-by-row from end-of-file \fbox{\textbf{\emph{G}}}}{
\ForEach{1 character from right-to-left of string $y$}{
Read row \# \;

\uIf{Character $y_i$ is not in (row \# and Occupant Character) columns of  \fbox{\textbf{\emph{TT}}} file}{
New String $z$ = Get translation of Old Code \;
        New String $z$ = String $z$ + Character \;}
\uElse{Get translation of Old Code \;
Character $z_i$ = 1st or 2nd or \ldots or \emph{n}th 2 characters in
               String \;
}
               Replace Character with 2 new characters from the
   \fbox{\textbf{TT}} file \;
New String $z$ = $xx'$ \;
Delete temporary row \# and row characters \;
}
}
}
\caption{A lossless data decompression sample}
\end{algorithm}
\vspace{8pt}

Relevant to the example provided in Algorithm~\ref{alg:sample2}, we further particularize an LDD in Algorithm~\ref{alg:sample3}, which is equivalent to Algorithm~\ref{alg:sample2}.

Algorithm~\ref{alg:sample3} comprises of LDD tasks sampled from Algorithm~\ref{alg:sample2}, practicing a 16-byte (2-char) concatenation of the compressed chars $y_i =$ \texttt{d} then \texttt{c} then \texttt{b} then \texttt{a} (in lines \# 8, 10, 12, 14), reconstructing a 64-byte result after the concatenation operation is done. This reconstruction of original chars occurs in a new file after reading from the \fbox{\textbf{G}} file relative to the \fbox{\textbf{TT}} file.

From the \fbox{\textbf{G}} file, the program reads 1 char from right-to-left pre-positioned to a block char and reads the row number in line \# 4-6, comparing it with the 65,536 available translations in the \fbox{\textbf{TT}} file, in line \# 7. Finally, The program reconstructs a string of translated characters from line \# 8 up to line \# 14, and adds up (concatenate) newcomer characters to its string to reconstruct the full word as the output given in line \# 15, in this case `\texttt{resolved}'. \\

\SetAlFnt{\small}
\begin{algorithm}[H]\label{alg:sample3}
\KwIn{ A set of LDD tasks and data conversions}
\KwOut{Decompressing \fbox{\textbf{G}} file as an LDD output `\texttt{resolved}' }
\Begin{
\While{Reading characters row-by-row from end-of-file \textbf{ \fbox{\emph{G}}}}{
\ForEach{Last Block Character $y_i$}{
\uIf{Character $y_i$ is a Block Character}{
Read Character pre-positioned to Block Character \;
        Read row \# \;
        Get row address from \fbox{\textbf{TT}} file \;
\uIf {Character $y_i$ =`\emph{\texttt{d}}' and row address = `\emph{\texttt{1x13x2x7}}'}{
Output String =`\texttt{ed}'  \;}
\uElseIf {Character $y_i$ =`\emph{\texttt{c}}' and row address = `\emph{\texttt{6x6x4x15}}'}{
Output String  =`\texttt{lv}'+`\texttt{ed}' = `\texttt{lved}'  \;}
\uElseIf {Character $y_i$ =`\emph{\texttt{b}}' and row address = `\emph{\texttt{12x14x6x13}}'}{
Output String =`\texttt{so}'+`\texttt{lved}' = `\texttt{solved}'  \;}
\uElseIf {Character $y_i$ =`\emph{\texttt{a}}' and row address = `\emph{\texttt{7x11x1x13}}'}{
Output String =`\texttt{re}'+`\texttt{solved}' = `\texttt{resolved}'  \;}
\uElse{Print no data or null compressed \;
}
}
\uElse{Print no block character in range \;}
}
}
}
\caption{An LDD sample that returns the `\texttt{resolved}' string}
\end{algorithm}
\vspace{8pt}

\subsection{Maximum LDC/LDDs} \label{sect5.4}

\newcommand{\HRule}{\rule{\linewidth}{0.1mm}}

Maximum LDCs must respectively satisfy Hypotheses \ref{hypo5.2} and \ref{hypo5.3} from below, as \emph{midpoint} and \emph{maximum} LDCs for the 4D model implementation. These are the updated versions of the hypotheses H.4 and H.5 by Alipour and Ali (2010)~\cite{alipour10}, which cover discrete interval values of Eqs.~(\ref{eq:25}) and~(\ref{eq:26}), later introduced in Section~\ref{sect6}. One of which as the most radical to our model implementation is Hypothesis \ref{hypo5.2}. This hypothesis subsists on the algorithm's predecessors, which deal with minimum and middle-point LDCs. The minimum LDCs mainly project onto the dynamic memory allocation points which are of interest when optimization of the algorithm is concerned. For example, during \fbox{\textbf{G}} and \fbox{\textbf{TT}} I/O operations, the dynamic size of \fbox{\textbf{G}} must be managed by dynamic \emph{read}-and-\emph{write} of $y$'s as minimum compression, maintaining a maximum 50\% compression on all characters when only 1\fbox{\textbf{TT}} file is being read. We implement maximum LDCs by self-containing the static memory allocation points as mid-points in a \fbox{\textbf{TT}} file bit-flag addresses, specified back in Sections~\ref{sect4.1} and \ref{sect4.2}, as follows: \\

The following represents ``midpoint LDCs" on the version-to-version 4D model \smallskip

\begin{hypothesis} \label{hypo5.2}
A sequence of bit-flags representing double-efficient compressed data in FBAR, once reused by its translation table adjacent to other purely compressed data, results in a decompressed message.
\end{hypothesis}

\noindent whereas its null hypothesis would be \smallskip

\noindent \HRule \vspace{2pt}
{\noindent \textbf{Hypothesis 4.1}$_{\bf 0}$ The sequential recall and reuse of bit-flags from memory/grid, is firstly minimum-compression dependent, and secondly, unachievable for an identical data reconstruction.} \HRule \vspace{6pt}

The following represents ``maximum LDCs" on the version-to-version 4D model

\begin{hypothesis} \label{hypo5.3}
A sequence of compressed data in form of four-dimensional 1-bit flags, when partitioned into memory or confined in information space/grid, results in a maximum LDC possible $\ge$ 87.5\% with optimal bitrates.
\end{hypothesis}

\noindent whereas its null hypothesis would be \smallskip

\noindent \HRule \vspace{2pt}
{\noindent \textbf{Hypothesis 4.2}$_{\bf 0}$ The compression of any data length into one single-byte is firstly minimum-compression dependent, and secondly, unmanageable and irreversible for data reconstruction like Hypothesis~\ref{hypo5.2}.} \HRule \\[0.1cm]

According to Sections~\ref{sect4}-\ref{sect5.3}, Hypothesis~\ref{hypo5.2} is by now achieved, which addresses the 4D model implementation, independent of its null hypothesis limitations due to the 4D model characteristics i.e., the \fbox{\bf TT} and \fbox{\bf G} components and their relationships. These dimensional relationships on I/O $\mathcal{CC}'$ data are discussed as follows: \\

\noindent For an 87.5\%, obviously, the column with 96 characters will not change, however, the `$i \times j \times k \times l$' column in its configuration becomes `$i \times j \times k \times l$ \ \ $i \times j \times k \times l$', and the last column with 2 characters, becomes 8 characters, since the cubic representation of the `1st $i \times j \times k \times l$' with the `2nd $i \times j \times k \times l$' has a second \emph{non-commutative symmetric} format: `2nd $i \times j \times k \times l$' with the `1st $i \times j \times k \times l$', giving four distinct addresses simultaneously. So, for the former, this means, 2 original chars result in 1\,char in compression (2:1 or 50\%), and for the latter, 8 original chars result in 1 compressed char ($100\% - 12.5\% = 87.5\%$ or 8:1 bytes) as an `occupant char' (see, Table~\ref{tab2}), occupying a row in the compressed file \fbox{\bf G} in Fig.~\ref{fig3}. The symmetry `2nd $i \times j \times k \times l$' with the `1st $i \times j \times k \times l$', altogether, gives four distinct double-char addresses simultaneously, i.e., an 8:1 LDC. This satisfies $65536^4$ \textbf{TT}ables = $1.84 \times 10^{19}$ unique combinations, or, 16 exabytes (EB) of grid rows. In case of columnar symmetry in two translation tables, $65,536^2$ = 4.1\,GB, handles the 16\,EBs when column values are co-intersected by a comparator matrix in our code, residing in the LDD subprogram comparator (Sections~\ref{sect4.3.1} and \ref{sect4.3.2}). So, four 64\,K grid row combinations, handle the same EB values in four parallel tables. This requires a complex compression matrix coding as symmetric and antisymmetric access of \fbox{\bf TT} data on current machines equipped with dual CPUs. The reason is having an optimized version by creating multi-threads on the four parallel \fbox{\bf TT} files (\emph{prefix data}) per LDC operation. To this account, we pose a formulation:

\subsection{Complex Matrix Coding} \label{sect5.4.1}

Let $\mathcal{C}_{\rm matrix}$ be a variable for a compression matrix with decision nodes in FBAR code on either spatial or temporal measurements made by Alipour and Ali (2010)~\cite{alipour10}, where its values to process \fbox{\bf TT} data for \fbox{\bf G} read/write operations would establish
\begin{equation} \label{eq:21}
\mathcal{C}_{\rm matrix}  \propto \mathcal{C}_{\max } \left( {\beta {\rm  },{\rm  }t_L } \right) \ ,
\end{equation}

\noindent where $\mathcal{C}_{\rm \max}$ is a data function for the highest possible layer of lossless compression (HLLC) by FBAR, and $t_L$ is the time taken to process \fbox{\bf TT} and \fbox{\bf G} files for an I/O binary sequence $\beta$. This deduces
\begin{equation} \label{eq:22}
\therefore \mathcal{C}_{\rm matrix}  = M \times \mathcal{C}_{\max } \ .
\end{equation}

In fact, time length $t_L$ corresponds to the current time where present pseudocode decision points would not exceed the limit of `if-else statements', even in case of extending them into the 16\,EB scenarios. In other words, a 64-bit microprocessor, in principle, handles at most, 18\,EBs of space~\cite{ebbers}, if based solely on 1\,{\bf TT}able. So, we program 4{\fbox{\bf TT}}'s to just have a 32\,MB table with our FBAR package. In our later results in Section~\ref{sect7}, Eq.~(\ref{eq:21}) becomes evident in terms of {\it cyclomatic complexity M}~\cite{mccabe}, with a conjecture of just including the concatenation operator `+' in its stateful extension. This makes FBAR as efficient as possible in its $\mathcal{CC}'$ product results. The more {\fbox {\bf TT}}s included in Eq.~(\ref{eq:21}), the more complexity or decision nodes of code loops. For the `if-else' statements satisfying a $\mathcal{CC}'$ (Algorithms~\ref{alg:sample1} and \ref{alg:sample3}), we deduce that an efficient complex matrix code dedicated to the 4{\fbox {\bf TT}}-read per {\fbox{\bf G}}-write content, requires $M =M_{\rm previous} + 3$ (as sampled below in Algorithm~\ref{alg:sample4}). The reason compared to the previous pseudocodes is that, the concatenation `+' operator, triples on decision points in the new complex version in terms of Algorithm~\ref{alg:sample4}. Apart from each short-circuit `AND' operator adding a 1 to the $M$~\cite{watson}, the number of if-statements shall remain the same for the LDC/LDD codes (recall Algorithm~\ref{alg:sample2}). The simulated results on Eq.~(\ref{eq:21}) are listed in Table~\ref{tab4}.

Algorithm~\ref{alg:sample4}, below, comprises of LDD tasks and is analogous to Algorithms~\ref{alg:sample2} and~\ref{alg:sample3}, but with the ability to manage large amounts of reconstructable data through complex coding ($\mathcal{C}_{\rm matrix} $ code). The algorithm hypothetically returns 64\,bytes (8\,original characters) represented by 1 single byte (1\,compressed character), as standardized in the \fbox{\textbf{TT}} file, after reading all translatable bit-flag combinations in 4$\times$65536 conjoint rows (Fig.~\ref{fig5}).  \smallskip

\SetAlFnt{\small}
\begin{algorithm}[H]\label{alg:sample4}
\KwIn{A set of LDD tasks and data conversions}
\KwOut{Decompressing \fbox{\textbf{G}} file as an LDD output `\texttt{resolved}' }
\Begin{
\While{Reading characters row-by-row from end-of-file \fbox{\textbf{\emph{G}}}}{
\ForEach{Last Block Character $y_i=y_n$}{
\uIf{Character $y_i$ is a Block Character}{
Read Character(s) pre-positioned to Block Character \;
        Read row \# \;
        Get row address from \fbox{\textbf{TT}} file \;
\uIf {Character $y_i$ = 1 Occupant Character and row address = `$i\times j\times k\times l$'+`$i\times j\times k\times l$'+`$i\times j\times k\times l$'+`$i\times j\times k\times l$'}{
Output String = `1st 2 characters' + `2nd 2 characters'
          + `3rd 2 characters' + `4th 2 characters' =
          `8 original characters' \;
          }
\uElse{Print no data or $\emptyset$ compressed \;}
}
\uElse{Print no block character in range \;}
}
}
}
\caption{An 87.5\% LDD sample: 1\,compressed character returning 8\,original characters}

\end{algorithm}
\vspace{8pt}

The character reconstruction method, from line \# 8 and 9, results in a new file after reading from the \fbox{\textbf{G}} file relative to its \fbox{\textbf{TT}} file. From the \fbox{\textbf{G}} file, in line \# 2-5, the program reads $n \geq 1$ character from left-to-right until it reaches a block character $y_n$, and then reads its row number in line \# 6. It compares the character(s) located between two block characters $y_n$ and $y_{n-1}$, with the 65,536 available translations in the \fbox{\textbf{TT}} file, in line \# 5-7. The program reconstructs a string of translated characters and adds up newcomer characters to its string to reconstruct the full word, sentence or original information, from line \# 8-10. The program cleans up the memory, being well-aware that if any input character is not given according to line \# 11 to the algorithm, null ($\emptyset$) is returned, specifying no code block in range, in line \# 14.

\subsection{Component Relationships} \label{sect5.4.2}
To perform double, or even quadruple efficiencies, the program must refer to a \fbox {{\bf TT}} file comprised of bit-flag addresses from the 4D model, with their corresponding original chars as well as occupant chars in code representing their original char positions. These I/O references have been shown in Table~\ref{tab3}. For quadruple efficiencies, the comparator's if-else statements are expanded in terms of reading multiple {\fbox{\bf TT}}s in parallel. This returns for each unique address, 4 original chars denoting a 75\%, and 8 original chars denoting an 87.5\% compression. The 4 original char version, requires the complex matrix to read data from 2{\fbox {\bf TT}}s correspondingly, since each {\fbox{\bf TT}}, according to the 50\% LDC, returns 2 original chars per 1 compressed char as an occupant char. Therefore, 2{\fbox{\bf TT}}s return 4 original chars at the $\mathcal{C}'$ phase. So, for performing the $\mathcal{C}_r=$ 8:1, or the 8 original char version, we require 4{\fbox{\bf TT}}'s for each data-read by the comparator to succeed 4 address translations, or
\begin{equation}
 4{\fbox{\bf TT}} \times  4 \ occupant \ char \ translations = 8 \ original \ chars.  \label{eq:23}\end{equation}

\noindent In Eq.~(\ref{eq:23}), the number of translations is $n$ in $n${\fbox{\bf TT}}, and applies to an $n$-hypercube, $2^n\,n!$, by Coxeter \emph{et al.} (2006)~\cite{coxeter}, as $2^{16}\times n$ $=16^{\rm 4D}\times n{\fbox{\bf TT}}$ flag combinations.

\begin{figure}[th]
\begin{center}
\includegraphics[scale=2.7]{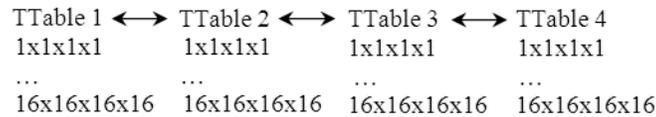}
\end{center}
\caption{Translation tables in parallel intersections for an 8:1 LDC. The $\mathcal{C}_{\rm matrix}$ code accesses data for read + write operations from all {\fbox{\bf TT}}s with a configuration of addresses only, building a (16$\times$16$\times$16$\times$16)$\times$4 or a $16^{\rm 4D}\times$4 matrix for a {\fbox{\bf G}} file's $\mathcal{CC}'$ write operation.\label{fig5}}
\end{figure}

In Fig.~\ref{fig5}, Eq.~(\ref{eq:23}) is illustrated in form of a flag-char relationship diagram, corresponding to bit-flag values (char address) of the 4D model. As a result, we return the original chars exactly as expected at the $\mathcal{C}'$ phase. For highest DEs, we therefore extend the number of \textbf{znip} columnar combinations from the previous \fbox{\bf {TT}} in terms of row-by-row intersections. This is called a 4\,table-based algorithm. It delivers double DEs, and thereby, quadrupled efficiencies as well. We described this in terms of fulfilling 4.1\,GB and 16\,EB combinations in the above paragraphs, respectively. In other words, on all occasions, the program's interpreter/comparator matrix must be able to handle 1, 2 and 4\fbox{\bf {TT}}s for all intersections between them, needing just 8, 16 and 32\,MB static size on an x86 machine, instead of the EB barrier denoting no columnar interactions whatsoever. For example, an intersection of \texttt{1x1x1x1} with \texttt{1x2x1x1} with \texttt{16x16x16x15} with \texttt{1x4x1x1}, from {\fbox{\bf {TT}}}s 1-to-4 in Fig.~\ref{fig5}, returns ${}^{\text{aa}}$\textyen${}^{\text{a}}{\bullet}^{\text{a}}{\copyright}^{\text{a}}$ original chars. Hence, the length of 64 bits is thus \emph{self-contained} and \emph{fixed} by the program's comparator efficiently, using just 8 bits out of the 32\,MB of the traversed tabular space, denoting an 87.5\% compression.

\chapter{Simulation Results, Contribution and Analysis} \label{sect6}

\section{Contribution}
The main contribution in this paper is presenting a new model on self-embedded flags from Section~\ref{sect4.3.2}. It allows an LDC algorithm to conduct a new encoding technique i.e., doubling the efficiency between two points of data transmission (DE). The key component of an FBAR algorithm is the \fbox{\bf TT} file or translation table whereby double-efficiency is conducted. Moreover, the \fbox{\bf G} file, as another component, holds sizes denoting double compression ratios after each full I/O write of contents. This established a key difference in techniques, observed between the FBAR algorithm and other LDC algorithms. According to the ``theory of data compression"~\cite{shannon93, shannon48}, we conclude that almost every LDC uses Shannon entropy as its `logic base' in conducting a lossless compression. In fact, repetition of characters in a certain frequency based on the theory of probability is embedded in such LDCs. In layman's terms, information entropy is the same as ``randomness." A string of random letters and numbers along the lines of ``5f78HJ2Z2Xp4V7Vb6" can be said to have high information entropy, or, large amounts of entropy, while the complete works of Shakespeare can be said to have low information entropy. Their LDC products are quite variant, which depend on content pattern probability or character rate of recurrence. FBAR LDC, however, deals with the computation of binary logic regardless of content size and type, whereas other techniques are not bothered about. Binary logic in FBAR deals with individual bits, their combination, repetition, cubic conservation and not character repetition. This means, based on a fixed size character reference table, Table~\ref{tab2}, we derive a more certain equation (least zero order $H$ values), which is logarithmically the least probabilistic with discrete entropy (in bpc), compared to Shannon's entropy rate $H$ on English source alphabet $\mathbb{A} = \{ a, b, c, d, \ldots, z, \texttt{space} \}$ given by \begin{equation} \label{eq:24}
H_{\mathbb{A}}  = \log _2 m   = 4.75 \, {\rm  bpc} \ , \ {\rm where} \ m= 27 \, ,
\end{equation}

\noindent and for higher orders of $H$, for a given text source made up of English alphabet letters, becomes 4.07, 3.36, 2.77 and 2.3 bpc, respectively. In FBAR, however, fixed values of $\mathcal{C}$ for every double-efficient order remain
\begin{equation}\label{eq:25}
H_{ \wedge  \vee \left( b \right)}  = \log _b \left| \beta  \right|  = \left[ {0,2} \right]\, {\rm  bpB}
\end{equation}

\noindent and for a binary sequence $\beta$, the binary probability of two states, $b = 2$, constructing 1 char, entropy $H$ becomes 2, 1 and 0 bits per byte (bpB), regardless of source for a given fixed size binary reference code (compare this with Mackay (2005)~\cite{mackay}). This makes the algorithm to compute information reliably based on fuzzy-binary, rather than string characters. The DE process in Eq.~(\ref{eq:25}), evaluates every character by using and-or, pure and impure logic, and from there, further LDCs between bits of information. Equation~(\ref{eq:24}), however, deals with the random process to evaluate the whole sequence of characters using probability theory for an LDC result. Equation~(\ref{eq:25}), by comparison, improves less dependency on symbolic representations, and has a firm dependency on binary logic, thereby, fuzzy, and finally, DE logic. The latter, however, remains quite intact with higher orders of probability equations promoting Shannon zero-order through third-order and general models, in simplistic sizes of LDC. Reasoning that, DE logic by itself is based on probability behavior over bit states. We define the relationships between logical events, ``bit states" of the FBAR algorithm, as LDC causality in form of supreme states of compression. For any data type at a DE level, the current model (Fig.~\ref{fig2}) providing DE compressions holds good for superdense coding operators by Bennett~\cite{bennett}.

In our next report, we improve our model design, reconfiguring \textbf{znip} flags in an extended translation table in aim of super-compressing an encoded message, thereby decode and decompress. The FBAR logic would then be called a DE-negentropic and-or logic (DENAR) in its ultimate performance of LDCs. Hence, a \emph{negentropy} $<0$\,bpB of Eq.~(\ref{eq:25}), denoting DE's above 87.5\% compression for a universal predictability, is not farfetched in reality.

\section{The FBAR Entropic Comparisons} \label{sect7}
The following bar chart, Fig.~\ref{fig6}, gives a compression ratio comparison for our chosen algorithms. In this case, we chose WinZip, GZip, WinRK algorithms based on their respective ranks (see, e.g., Bergmans (1995)~\cite{bergmans}). The detailed empirical and statistical analyses of these algorithms compared to FBAR, based on the non-parametric Friedman test, have been initially reported by Alipour and Ali (2010)~\cite{alipour10}.

\begin{figure}[th]
\begin{center}
\includegraphics[scale=1.7]{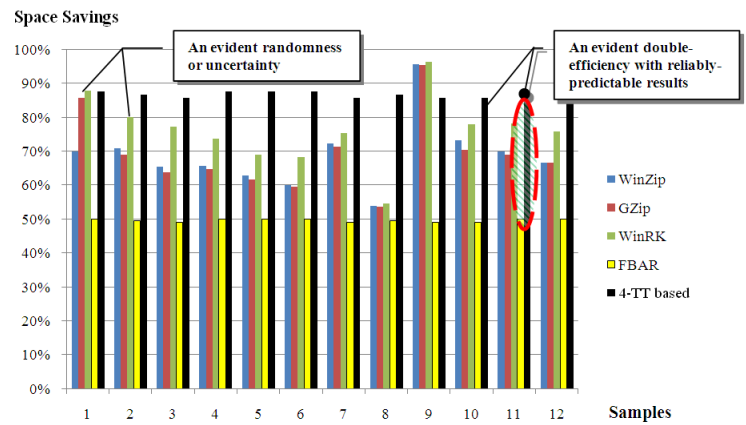}
\end{center}
\caption{LDC ratio comparisons between FBAR/4-{\bf TT} based and other algorithms on the 12 char-based documents selected by random. The ranking of the algorithms and their non-parametric Friedman comparisons were motivated and defended in the initial thesis of Alipour and Ali (2010)~\cite{alipour10}.  \label{fig6}}
\end{figure}

Figure~\ref{fig6}, further shows the difference between uniformity of LDC values on FBAR, compared to the random performance (or uncertainty) of others with a highly-ranked algorithm, WinRK. Bitrate and memory usage comparisons between FBAR and WinRK algorithms are given in Fig.~\ref{fig7}.

\begin{figure}[th]
\begin{center}
\includegraphics[scale=2.25]{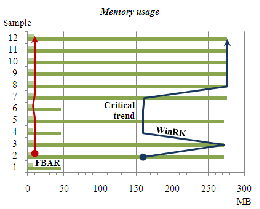}
\hspace{2.5in}\footnotesize (a) \vspace{6pt} \\
\includegraphics[scale=2.25]{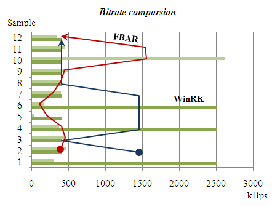}
 \hspace{2.5in} (b)
\end{center}
\caption{Algorithmic performances of FBAR and WinRK. (a) Memory usage; (b) bitrate.\label{fig7}}
\end{figure}

The selection of an LDC algorithm depends on the following criteria as applicable characteristics to all LDC algorithms:

\begin{enumerate}
\item The ability to compress input data losslessly regardless of type, size and complexity. If data type matters, e.g., being of textual type, must compress textual data losslessly, i.e., the $\mathcal{C}'$ data after $\mathcal{C}$ must be identical to the original.
\item Use memory for data access and management issues efficiently, e.g., data rate and spatial occupation of bits during $\mathcal{C}$, i.e., when encoded and referenced upon.
\item Must have a dictionary coder for validating data, referencing and de-referencing them during the reconstruction phase of data i.e. $\mathcal{C}'$.
\end{enumerate}

Our test samples were char-based, suitable for any ASCII string conversion, e.g., *.txt, *.tex and *.htm, and were selected by random, in terms of content size, content characters i.e. quantity relative to the supported char data types, which were also different, no matter how limited our choice for the current version. The data type is random on the FBAR's evolutionary grade (Fig.~\ref{fig7}) due to its explicit behavior in converting chars to binary and vice versa, via the 4D translation table.

The present FBAR supports char-based data types. However, the conversions of diverse binary types are achievable due to the universality of the ASCII table associated with its translation table, which is a midpoint $\mathcal{C}$ converter between char and binary, giving a new data type standard for flags. Meaningly, the random selection of samples (documents) were all ASCII-based as char-to-binary by our algorithm. The selection of packages or LDC algorithms, however, was not random, and was based on the three criteria given above relative to their ranks.

Figure~\ref{fig6} shows a distinctive alignment and correlation of $\mathcal{C}_r$'s of the FBAR and DE versions to others. Comparatively, the new algorithm is more reliable in LDC results with consistence in spatial efficiency values performed on compression, which is due to having fixed-size components like the {\fbox{\bf {TT}}} file, and contrasts other algorithms that create a new dictionary code for each I/O load.

\begin{figure}[th]
\begin{center}
\includegraphics[scale=2.1]{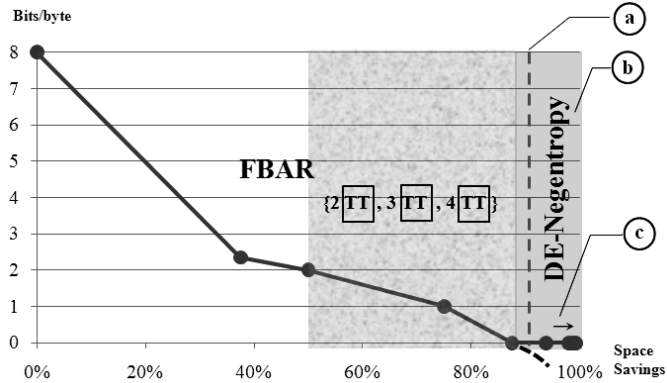} 
\end{center}
\caption{The pure FBAR in its version-to-version evolution, mutates to a DE-Negentropic and-or (DENAR) logic via its hybrid version on x86 machines; a) the EB barrier; b) quantum machines; c) negentropy leading to a universal predictability.\label{fig8}}
\end{figure}

Based on the three characteristics criteria, Fig.~\ref{fig8} portrays the selection of FBAR type as oriented to DE-negentropic type during implementation. Its simulation grade on x86 machines, reaches 87.5\% fixed LDC scenarios. The 87.5\% LDC indicates the lower-bound interval of Eq.~(\ref{eq:25}). The zone indicating x86 limits for the hybrid version, shown as FBAR$\thicksim$DENAR in Fig.~\ref{fig8}, inclusive of the regular FBAR versions (1\fbox{\textbf{TT}} to 4\fbox{\textbf{TT}} usage), continues to expand within the DENAR territory. This means, the structural integrity of the FBAR dictionary (or the 4D grid model) at $H = 0$\,bpB final version on x86, is significantly changed in favor of superdense coding or a quantum machine territory. In one word, \emph{FBAR mutates from version-to-version with uniformly-fixed values on space savings}.

The negative entropy of Eq.~(\ref{eq:25}), as Eq.~(\ref{eq:26}), denotes universal predictability, giving values $\ge$  93.75\% compression, as estimated. This model could be considered as a solution to \emph{complex negentropy problems}~\cite{hyv} in signal processing and information theory, making the current model universal for negative and positive ranges of Eq.~(\ref{eq:25}). Alternatively, we constrict Eq.~(\ref{eq:25}) in terms of
\begin{equation} \label{eq:26}
 - H_{ \wedge  \vee \left( b \right)}  = \log _b \left| \beta  \right|  < 0 \, {\rm  bpB} \ , \ {\rm where} \ b=2 \, .
\end{equation}

For the positive range of Eq.~(\ref{eq:26}), as Eq.~(\ref{eq:25}), twelve documents were given to four different LDC compressors (in random order), relative to their bitrate performance for each LDC execution. The spatial and temporal estimates are given in Table~\ref{tab4}. Process time of a test, and percentages of compression, were also measured. The resulted data on both spatial and temporal performances are expanded from 1\fbox{\textbf{TT}}, to 4\fbox{\textbf{TT}} inclusions. From Eqs.~(\ref{eq:21}) and (\ref{eq:22}), in Table~\ref{tab4}, the $t_L$ on 4\fbox{\textbf{TT}}s, without parallel processing, is hypothesized for $\mathcal{C}_{\rm matrix}$= 3$\mathcal{C}_{\max}$= 5.5\,s, satisfying the `extended if-else' nodes in form of a $16^{4\rm D}\times$4 address format (Section~\ref{sect5.4.2}). The nodes are quite local in the $\mathcal{C}_{\rm matrix}$ code, and the HLLC merely concentrates on read/write operations, constituting the current LDD$\times$1\fbox{\textbf{TT}} vs. LDD$\times$4\fbox{\textbf{TT}} scenarios.

\begin{table}[th]
\begin{center}
\begin{minipage}{\textwidth}
\caption{Estimates on $\mathcal{CC}'$ phases with rate performance on FBAR via 1\textbf{TT} against 4\textbf{TT} \label{tab4}}
{\footnotesize \begin{tabular} {@{}ll r c rrr cc@{}} \toprule
No. & File  & Size   & \multicolumn{3}{c} {$t_L =$ CPU time/s} & Compressed size  & bpc  \\
&  & (KB) & \multicolumn{3}{c} {\underline{LDC/LDD $\times$ 1{\bf TT} vs. 4{\bf TT}}} & (KB)  &   \\
&  &  &  \multicolumn{3}{c} {\scriptsize $\mathcal{C}_{\rm matrix}$  \ \ \ \ LDC \ \ \ \ \ \ \ \  LDD} &  &    \\\hline
1 & text &  60.16 &  0.06 & $0.2:0.26$ & $0.24:0.3$ & $30.39:7.52$ & \{0,2\}  \\
2 & book1 & 662.34 & 0.42 & $1.45:1.87$ & $1.40:1.82$ & $334.62:82.79$ & \{0,2\}   \\
3 & book2 & 1730.54 & 2.25 & $3.95:6.2$ & $3.43:5.68$ & $874.28:216.31$ & \{0,2\}  \\
4 & paper1 & 51.28 & 0.03 & $0.14:0.17$ & $0.20:0.23$ & $25.9:6.41$ & \{0,2\}  \\
5 & paper2 & 114.73 & 0.345 & $0.375:0.72$ & $0.32:0.665$ & $57.96:14.34$ & \{0,2\}  \\
6   & paper3 & 10.02 & 0.03 & $0.06:0.09$ & $0.10:0.13$ & $5.06:1.25$ & \{0,2\}   \\
7   & web1 & 730.24 & 0.66 & $1.85:2.51$ & $1.71:2.37$ & $368.92:91.28$ & \{0,2\}   \\
8  & web2 & 584.1 & 0.6 & $1.49:2.09$ & $1.36:1.96$ & $295.09:73.01$ & \{0,2\}   \\
9   & log & 1797.77 & 0.81 & $3.7:4.51$ & $3.58:4.39$  & $908.25:224.72$ & \{0,2\}   \\
10  & cipher & 759.42 & 0.12 & $0.29:0.41$ & $0.32:0.44$ & $383.66:94.92$ & \{0,2\}   \\
11   & latex1 & 204.3 & 0.09 & $0.46:0.55$ & $0.49:0.58$ & $103.21:25.53$ & \{0,2\}   \\
12 & latex2 & 151.99 & 0.09 & $0.45:0.54$ & $0.92:1.01$ & $76.78:18.99$ & \{0,2\}   \\
\rowcolor[rgb]{.0,.0,.0} \textcolor{white}{0} & \textcolor{white}{\textbf{TT} file} &  \textcolor{white}{$\approx$\,8\,MB} &  \hphantom{0}\textcolor{white}{N/A} \ &   \textcolor{white}{N/A}\hphantom{000} &  \textcolor{white}{N/A}\hphantom{000} &  \textcolor{white}{N/A}\hphantom{0000} & \textcolor{white}{\{0,2\} read}  \\
   & Total & 6856.89 & 5.505 & $14.41:19.92$ & $14.07:19.57$ & $3464.12:857.07$ & \{0,2\}   \\\hline
\end{tabular} }
\tablefootnote{a}{Estimates on compression with rate performance on FBAR's LDC and LDD using 1\textbf{TT} vs. 4\textbf{TT} file-set. The main columns representing the LDC $\times$ 1\textbf{TT} vs. 4\textbf{TT}, and LDD $\times$ 1\textbf{TT} vs. 4\textbf{TT} ratios are shown in the three sub-columns of the third column, $\mathcal{C}_{\rm matrix}$, LDC and LDD.}
\tablefootnote{b}{The base file accessed by the program for read/write operations is the \textbf{TT} file $\approx$ 8\,MB.}
\end{minipage}
\end{center}
\end{table}

The bitrate results, are the least random compared to other LDCs, and thus are in conformity with the spatial results in Fig.~\ref{fig7}. Figure~\ref{fig7} shows the bitrate performance on the 12 test documents, with their critical and optimal trends. The adapted version focusing on HLLC results for a simulated 50\% DE-LDC is given in Table~\ref{tab4}. In it, we further computed the time factor as CPU time in seconds, as well as LDD and LDC results. The bitrate, relative to memory usage, was observed between the algorithms on `space savings': WinRK vs. FBAR. As we can see, for higher bitrate performances, WinRK has a critical usage of memory per input sample. In some cases, even having 10\,kBps for encoding and decoding data, required 800\,MB memory on a 2\,GHz Athlon CPU. This drawback ranks WinRK's memory performance lower than expected compared to FBAR.

When we associate the left chart values with the right chart in Fig.~\ref{fig7}, it is evident that the empirical data relative to memory usage on FBAR is optimal, and uniformly correlated except a bitrate jump on sample \# 10. This is due to the excessive repetition of characters within the sample. The original input chars were ignored due to their pattern simplicity for storing data. Hence, the algorithm is not forced to take in too much information, thus its computation. The average bitrate was estimated 475\,kBps for FBAR, and 925\,kBps for WinRK on 12 samples. For the EB barrier in Fig.~\ref{fig8}, we refer to the explanations provided in Section~\ref{sect5.4}, which concern $\mathcal{C}$'s $>$ 87.5\% scenarios, addressing fixed $\mathcal{C}_r$ values of a DENAR-LDC.

The evolutionary grade of FBAR in Fig.~\ref{fig8}, further illustrates the elicited $\mathcal{C}_r$ ratios, respectively giving 8:8 for 0\%, 8:4 for 50\%, 8:2 for 75\%, 8:1 for 87.5\%, 8:$-1$ for 93.75\%, 8:$\ldots-\infty$ for $\approx$100\%. These ratios correspond to Eqs.~(\ref{eq:25}) and~(\ref{eq:26}) discrete intervals, from present to FBAR future versions, supporting the spatial boundary limits of Hypothesis~\ref{hypo5.2}, for a universal predictability in Hypothesis~\ref{hypo5.3}. To avoid ratio confusion on Eq.~(\ref{eq:25}), for $b=2$, we rather state, $2^H =B$ or \emph{bytes} for 8:$B$, hence Fig.~\ref{fig8} on Eq.~(\ref{eq:26}), does not scale the vast negative space 0\,B$\,<2^{-H}\!<\,$1\,B for 8:$-H$ on DE-negentropy. The predictable $H$-limit is indicated by a dotted curve, descending within the negative space of Fig.~\ref{fig8}. For example, according to Eq.~(\ref{eq:26}), $b$\,=\,$2$, and for a 93.75\% space savings, a $2^{-1}$= 0.5\,B is gained. Meaningly, 0.5 \emph{compressed}\,B is gained against a 100\%-read on 8\,\emph{original}\,B, or an 8\,B\,$input$\,:\,0.5\,B\,$output$, as $\mathcal{C}_r\!=\!\{100\!:\!6.25\}\% = 93.75\%$. This is a highly compressed version of data, which ascends to attain an ultimate LDC.

\section{Costs and Future Work} \label{sect8}
Nowadays, compressors accumulate much more memory space, even more than 250\,MBs, e.g., WinRK in Fig.~\ref{fig7}. This is significant when \emph{overhead information} and \emph{memory caching} issues are studied from the usability aspect of the algorithm. By employing cache memory, the \textbf{TT} file is temporarily loaded into memory and accumulate much lesser space. This is imperative for huge data transmissions above TB limits on the network and elsewhere, satisfying the EB limits explained above.

There would be an additional cost in terms of the 4\textbf{TT} process and management issues discussed in Section~\ref{sect5.4}, or see, its relevant code complexity relations in the same section. In fact, 32\,MB in size, as part of the algorithm's package, is affordable for users. The more the users pay, the greater guaranteed LDCs they get. For example, for 50\%, 1\textbf{TT} shall suffice, which is 8\,MB as part of the package or program, and not something being generated each time we load a document to the program (unlike other compressors). The nature of these \textbf{TT}s is being robust in their dimensions, content, and always static in length.

Such algorithm components were discussed to prove the spatial and temporal limits of Hypothesis~\ref{hypo5.2}, such as DE logic and model representation, following its applicability and usage in code. Future works shall focus on Hypothesis~\ref{hypo5.3} addressing ultimate DE-LDCs, its robustness, complexity, reliability, confidence, etc., relative to the universality of the 4D model. We have, however, showed confidence on predictability, such as LDC values being predicted before compression. This was done by satisfying Hypothesis~\ref{hypo1} via the \textbf{TT} and \textbf{G} components, as if any randomness is ``self-contained" within their code.

The FBAR algorithm, based on our current analysis and results, addresses 1\,terabyte (TB) and beyond the EB limits (Sections \ref{sect2.1.4} and \ref{sect5.4}), hence its components, as a whole, are applicable to databases as VLDBs. From the \emph{qualitative} aspect, suppose within the context of \emph{knowledge modeling}, we employ the translation table with the FBAR interpreter from Section~\ref{sect5.2}. Since the table is ASCII-based, the code interpreter classifies data for a unique set of conversions (a compression or encoding). When decoding, the interpretation of input knowledge by the computer, e.g., the English language, gives a resultant output satisfying the compactness of first-order logic~\cite{dawson, godel}. Thus, this information product is content-based and semantically familiar to the human knowledge. This promotes the usability aspect of FBAR on databases, information retrieval systems, their design and architecture.

From the \emph{quantitative} viewpoint, Eqs.~(\ref{eq:25}) and~(\ref{eq:26}) efficiency against Shannon relations like Eq.~(\ref{eq:24}), becomes evident in future publications. The current work briefly discussed Shannon, and proves the relatedness of different types of logic. The current algorithm is based on FBAR logic and does not use Shannon. Shannon in this work, however, is referred to make performance measurements and demonstrate the uniqueness of FBAR compared to randomness, i.e., redundant symbols of information stored and observed in other algorithms, with relatively well-known entropy orders to Shannon codeword. (Recall, Section~\ref{sect2.1.1}.)

In our future reports, we propose an empirical resolution over any complexity issues raised in Section~\ref{sect5}, beyond the 4D model representation. We formulate an LDC in terms of \emph{linear codes} to reduce the length of current FBAR code by \emph{self-dual coding}, which further obtains optimal DE ratios that fall inside the $[50, 100)\%$ interval. As initially conceived by Eq.~(\ref{eq:26}), the upper-bound of the interval denotes an entropy norm \begin{equation}\|f\|_{H}=  \sqrt{\log_{b} p\left(\ell(|x\cdot x'|)\right)} \geq \sqrt{\sum_{i=0}^{b} {b \choose i}  \dot{\delta}} \ ; \  \dot{\delta} \in (\imath^2\infty^2,\imath^2]\ \text{bpB} \label{eq:5.4}\end{equation}

\noindent where $b$ is the total number of fuzzy-bit states of an input stream $xx'$, assuming length $\ell(x) =\ell(x')$, and $p$ is the predictable information output carrying any $b$. The focus is to gain efficient temporal and spatial lossless outputs for a $b \in [2, 4]$ as fuzzy-bit, and a $b=8$ for qubit type compressions. To conduct these LDCs, we employ \textbf{znip} operators from an \emph{octonion}~\cite{sabinin} field, transforming its information content into a bivector code. The method is to encode data from an 8-dimensional vector space (octonion) into a 4-dimensional (quaternion), where each dimension stores $2^n$ bits, noting $n$ as the number of pairable bits. Using \emph{parity check} and \emph{quaternary linear coding} such as \emph{puncturing matrix}~\cite{wan}, the Hamming distance of the encoded data is reduced and further compacted for each hypercube node holding compressed data. From there, the shortened code (deleted bits) stored in the 4D \textbf{G} field is accessed and looked up in the \textbf{TT}able for decoding. Therefore, specific addresses are decoded for each matched code from the field. The \emph{fuzzy-quantum binary} type LDC, is by combining the hypercube with a Bloch sphere~\cite{churuscinski}, projecting an $N$-point probability data from its surface onto the cube's field, thereby producing an $n$4D-superdense model for a maximum DEN-LDC. This results in a new hypothesis, yet to be proven as follows:

\begin{hypothesis} Compressing all of our Universe's data into one single bit or a near-zero byte within an infinitesimal time-frame losslessly, is by combining the 4D FBAR hypercube with a Bloch sphere, giving an \emph{n}4D-superdense model. Its translation table will contain all logic states of information, emitting a complete $\mathcal{C}'$ product.
\end{hypothesis}

This combinatorial model, transforms any other type, like Shannon inequality into FBAR. For example, as compression $\mathcal{C}$ approaches $\infty$ for input $x$, a normalized vector quantization is achieved i.e., establishing a predictable density curve, close to a value of Dirac delta $\delta$ identity~\cite{gelfand}, such that \begin{equation}\int_{\dot{\delta}\rightarrow\mathcal{C}}^{\infty}\delta(x) \ \mathrm{d}x= \frac{\ell(xx')}{V_{Q_y}}\gtrapprox1 \  \label{eq:5.5}\end{equation}

\noindent where function $\ell$ returns the total number of input bits, and $V$ is the volume of the cube containing the bits' compressed product as given by Eq.~(\ref{eq:2b}). This product for either 4D or \emph{n}4D model is four dimensions, self-containing all logic states originated from the $2^n$ field dimensions, as our main base indicator in terms of $\sum_{i=0}^{b} {b \choose i} = 2^b$, by Eq.~(\ref{eq:5.4}). The outcome is a compression system $S(\delta)$ usable in current and future generation computers, where its performance is measured by Eqs.~(\ref{eq:5.4})  and~(\ref{eq:5.5}) that accompany temporal bounds on the size of the generated linear codes in form of evaluation tables. Of course, other spatial and temporal relations such as Hamming rate from previous chapters shall contribute results to these tables.

\chapter*{Conclusion} \label{sect9}
In this paper, we have demonstrated that the 4D bit-flag model, in its logic base, is self-contained for lossless data compaction and compression. Thus, it is more reliable for data read and write, compared to probabilistic methods in implementing a lossless compression, due to having predictable values in its LDC results.

By using this model, an LDC program converts any ASCII character to its compressed version, double-efficiently in an exponential manner. Based on our results, the 4D model proves universal predictability from one version to another via a universal translation table for data conversions. Thus, it gives reliably fixed results in every data conversion output per se. Therefore, this makes all algorithmic components of the model, and its self-contained bit-flag values universal as well.

Perceivably, the present FBAR compresses data with fixed compression ratios, where other compressors do not. Almost every lossless compressor uses probabilistic Shannon entropy as its `logic base' in conducting LDCs. FBAR, however, achieves higher space savings, above 50\% as estimated, simulated and discussed in theory from its DE coding technique, as well as a 4D model representation. The FBAR products were studied from an LDC and LDD viewpoint in terms of delivering DE-$\mathcal{C}$'s $>$ 87.5\%, or, a DEN $<$ 0 bpB. It is conclusive that, this algorithm contains predictable values for every double-character input. The predictable fixed value, allows a user to know how much physical space is available within a reasonable time, before and after compression. This confidence in predictability makes FBAR a reliable version compared to the probabilistic LDCs available on the market. Optimal and greater efficiencies based on linear code integration for ultimate FBAR LDCs is also proposed and aimed for in our future reports.

The FBAR algorithm is novel in most aspects such as encryption, binary, fuzzy and information-theoretic methods such as probability. To this account, the fields of interest encompass the newly-born FBAR model useful to information theory mathematicians, electrical and computer engineers as well as computer scientists for its logic, and software engineers for its applications.

\begin{acknowledgements}
I gratefully thank my supervisor, Dr. T. A. Gulliver, at the Department of Electrical and Computer Engineering, University of Victoria, Canada, for his constructive remarks in improving the model and theoretical aspects of this work for a better presentation. I also like to thank the anonymous reviewers for their valuable comments and suggestions.
\end{acknowledgements}





\begin{notationsandacronyms}

In this section, we present the main acronyms and notations used in this article with recognition of those notations and definitions formulated in Table~\ref{tab1}, and elsewhere. We finally outline in a separate table, the key interpretation conceived for these acronyms to avoid any confusion when studying the article.

\begin{table}
\begin{minipage}{\textwidth}
\caption{Main acronyms and notations that have been used throughout this article.}
\begin{center}\footnotesize
\begin{tabular*}{\textwidth}{@{}cll@{}}\hline
Acronym &  Meaning  & Employed notations \\
& & from Table~\ref{tab1} and elsewhere  \\ \hline
LDC & Lossless Data Compression & \multirow{1}{5.5cm}{$\mathcal{C}_r, \mathcal{C}, H, \mathcal{s}, \ell$, {\bf O}, {\bf TT}, {\bf P}, {\bf G}  }\\
LDD & Lossless Data Decompression & \multirow{1}{5.5cm}{$\mathcal{C}_r, \mathcal{C}', H, \mathcal{s}, \ell$, {\bf O}, {\bf TT}, {\bf P}, {\bf G}  } \\
4D & Four-Dimensional & \multirow{1}{5.5cm}{$\mathbf{F}_{xx'}, \mathbf{F}_{y}, \mathbb{R}^{2^n}, \mathbb{C}\ell_4\mathbb{R}^{4}$, $h^2\mathbf{e}^2_{ij}, Q_{xx'}$, $Q_y, A_{r\times4}$} \\
& & \\
FBAR & Fuzzy Binary AND-OR & \multirow{1}{5.5cm}{$\Phi_\wedge \, , \,  \Phi_\vee$, \, $\bigcap, \, \bigcup$, $\mathcal{A}, \mathcal{\tilde{A}}$  }  \\
\textbf{O} & Original & $xx'$ \\
\textbf{TT} & Translation Table ${}^{\mathrm{a}}$ & $i \times j \times k \times l$, $xx'$, $y$  \\
\textbf{G} & Grid ${}^{\mathrm{b}}$ & $\mathcal{s}_{\mathrm{out}}$, $y$\\
\textbf{P} & Program  & \multirow{1}{5cm}{$\mathcal{s}_{\mathrm{in}}$, $xx'$,  $\mathcal{s}_{\mathrm{out}}$, $y$, $\pmb\mho$, \textbf{znip}, $\hat{{\bm\mu}}$, {\bf O}, {\bf TT}, {\bf G}, $\mathcal{C}_{\rm matrix}$}  \\
& & \\
I/O & Input/Output &$\mathcal{s}_{\mathrm{in}}$, $xx'$,  $\mathcal{s}_{\mathrm{out}}$, $y$, {\bf O}, {\bf TT}, {\bf G}, {\bf P}  \\
ASCII &  \multirow{1}{5cm}{American Standard Code for Information Interchange} & char $\leftrightarrow\beta$ \\
& & \\
ISO & \multirow{1}{5cm}{International Organization for Standardization} & N/A\\
& & \\
CPU & Central Processing Unit & $\sum T_i, t_L$   \\
VLDB & Very Large Database ${}^{\mathrm{c}}$ & $\sum \mathbf{O}_i \gg \mathbf{G}$ \\
RAM & Random Access Memory & $A$ \\
B & Byte & $\hat{{\bm v}}, \beta$ \\
Bps & Bytes per second & $R, R_{\mathbb{H}}$ \\
b & Bit & $\hat{{\bm v}}, \beta, b$ \\
bps & Bits per second & $R, R_{b}$ \\
KB & Kilobyte & $\ell(A_{r\times4})$ \\
MB & Megabyte & $\ell(A_{r\times4})$  \\
EB & Exabyte ${}^{\mathrm{c}}$  & $\sum\ell(A_{r\times4})_i$, $\sum \mathbf{O}_i$ \\
DE & Double Efficient & $H_{\wedge\vee(b)}$ \\
DENAR & \multirow{1}{5cm}{Double Efficient Negentropic AND-OR} & $-H_{\wedge\vee(b)}$, $-H_{\wedge\vee(fb)}$ \\
& & \\\hline
\end{tabular*}
\end{center}
\tablefootnote{a}{printed in bold to represent a field of data vectors as readable I/O dictionary code.}
\tablefootnote{b}{printed in bold to represent a field of data vectors as I/O storable/compressed data.}
\tablefootnote{c}{briefly discussed in this article, however, mainly considered for future articles covering issues related to VLDB compression, transactions and data management issues.}
\end{minipage}
\end{table}

\begin{table}
\begin{minipage}{\textwidth}
\caption{Notations and acronyms in Table 5.2 as interpreted relevant to the current topic.}
\begin{center}\footnotesize
\begin{tabular*}{\textwidth}{clcl}\hline
Acronym & As \\ \hline
LDC & \multirow{1}{8cm}{an encoding phase, algorithm, code or program} \\
LDD & \multirow{1}{8cm}{a retrieving phase, algorithm, code or program} \\
4D &  \multirow{1}{8cm}{a partitioned field, hypercube, vector or operator}\\
FBAR & logic or algorithm \\
\textbf{O} &  \multirow{1}{8cm}{an input file or component when an LDC phase initiated} \\
\textbf{TT} & a dictionary coder, addresses or reference code \\
\textbf{G} & an output file or component at the LDC phase\\
\textbf{P} & a source code, file or component \\
I/O &  a data operation \\
ASCII &  a standard table of codes \\
ISO &  \multirow{1}{8cm}{a responsible organization for management standards}  \\
CPU & \multirow{1}{10cm}{the portion of a computer system that carries out the instructions of a computer program} \\
& &\\
VLDB & \multirow{1}{10cm}{a database that contains a high number of tuples (database rows), which occupies
large physical filesystem storage space, usually more than 1 terabyte = $10^{12}$ bytes} \\
& &\\
& &\\
RAM & a form of computer data storage \\
B & \multirow{1}{10cm}{a unit of digital information mostly consists of eight bits} \\
Bps & \multirow{1}{10cm}{Byte-rate as the number of bytes that are conveyed or processed per unit of time, for evaluating algorithm spatial and temporal performance}  \\
& &\\
b & \multirow{1}{10cm}{a unit of digital information as a contraction of binary digit with a value of 0 or 1 logic}  \\
& &\\
bps & \multirow{1}{10cm}{Bit-rate  as the number of bits that are conveyed or processed per unit of time, for evaluating  algorithm spatial and temporal performance}  \\
& &\\
KB &  \multirow{1}{10cm}{a multiple of the unit byte for digital information with a value of either $10^{24} (2^{10})$ bytes or $1000 (10^{3})$ bytes} \\
& &\\
MB & \multirow{1}{10cm}{a multiple of the unit byte for digital information storage or transmission with two different values, such that 1048576 bytes $(2^{20})$ generally for computer memory} \\
& &\\
& &\\
EB &  \multirow{1}{10cm}{a unit of information or computer storage equal to one quintillion bytes or 1 EB = $10^{18}$ bytes = 1073741824 gigabytes = 1048576 terabytes} \\
& &\\
DE & \multirow{1}{10cm}{spatial inclusive of temporal double-efficiency, measured by FBAR entropy rate $H_{\wedge\vee(b)}$ with CPU time $t_L$, and Hamming rate $R$ for proper compression} \\
& &\\
DENAR &  \multirow{1}{10cm}{a negative entropy (NE) measurement of information based on FBAR logic,  measured in rate $-H$ or $H$ $= |\log_{b} p(x)|$ that falls inside the $(-\infty,0)$ interval} \\
& &\\\hline
\end{tabular*}
\end{center}
\end{minipage}
\end{table}

\end{notationsandacronyms}

\bibliographystyle{plain}	
\bibliography{fbar}		

\end{document}